\newcommand{\vc}[1]{\boldsymbol{#1}}
\newcommand{\eps}{\varepsilon}
\renewcommand{\Pr}{\mathbb{P}}
\newcommand{\expn}{\mathbb{E}}
\newcommand{\cD}{\mathcal{D}}
\newtheorem{prop}[theorem]{Proposition}
\newcommand{\rhoq}{\rho_{\mathrm{q}}}
\newcommand{\rhos}{\rho_{\mathrm{s}}}
\newcommand{\cS}{\mathcal{S}}
\newcommand{\C}{\mathcal{C}}
\newcommand{\W}{\mathcal{W}}
\newcommand{\cL}{\mathcal{L}}
\newcommand{\U}{\mathcal{U}}
\newcommand{\R}{\mathbb{R}}
\newcommand{\B}{\mathcal{B}}
\newcommand{\Z}{\mathbb{Z}}
\newcommand{\Var}{\mathrm{Var}}
\newcommand{\FALCONN}{\texttt{FALCONN}}
\newcommand{\NMSLib}{\texttt{NMSLib}}
\newcommand{\KGraph}{\texttt{KGraph}}
\newcommand{\poly}{\operatorname{poly}}
\newcommand{\polylog}{\operatorname{polylog}}
\newcommand{\ip}[2]{\langle #1, #2 \rangle}
\theoremstyle{plain}
\title{Graph-based time--space trade-offs for approximate near neighbors}
\titlerunning{Graph-based time--space trade-offs for approximate near neighbors} 
\author[1]{Thijs Laarhoven}
\affil[1]{Eindhoven University of Technology\\Eindhoven, The Netherlands\\ \texttt{mail@thijs.com}}
\authorrunning{Thijs Laarhoven}
\keywords{(approximate) nearest neighbor problem, near neighbor graphs, locality-sensitive hashing, locality-sensitive filters, similarity search}
\begin{document}

\maketitle

\begin{abstract}
We take a first step towards a rigorous asymptotic analysis of graph-based approaches for finding (approximate) nearest neighbors in high-dimensional spaces, by analyzing the complexity of (randomized) greedy walks on the approximate near neighbor graph. For random data sets of size $n = 2^{o(d)}$ on the $d$-dimensional Euclidean unit sphere, using near neighbor graphs we can provably solve the approximate nearest neighbor problem with approximation factor $c > 1$ in query time $n^{\rhoq + o(1)}$ and space $n^{1 + \rhos + o(1)}$, for arbitrary $\rhoq, \rhos \geq 0$ satisfying
\begin{align}
(2c^2 - 1) \rhoq + 2 c^2 (c^2 - 1) \sqrt{\rhos (1 - \rhos)} \geq c^4.
\end{align}
Graph-based near neighbor searching is especially competitive with hash-based methods for small $c$ and near-linear memory, and in this regime the asymptotic scaling of a greedy graph-based search matches the recent optimal hash-based trade-offs of Andoni--Laarhoven--Razenshteyn--Waingarten~\cite{andoni17}. We further study how the trade-offs scale when the data set is of size $n = 2^{\Theta(d)}$, and analyze asymptotic complexities when applying these results to lattice sieving.
\end{abstract}

\section{Introduction}

\subparagraph{Nearest neighbor searching.} A key computational problem in various areas of research, such as machine learning, pattern recognition, data compression, and decoding~\cite{bishop06, dubiner10, duda00, may15, laarhoven15crypto, shakhnarovich05}, is the \textit{nearest neighbor problem}: given a $d$-dimensional data set $\cD \subset \R^d$ of cardinality $n$, design a data structure and preprocess $\cD$ in an efficient way such that, when later given a query vector $\vc{q} \in \R^d$, we can quickly find the nearest point to $\vc{q}$ in $\cD$ (e.g.\ under the Euclidean metric). Since the exact (worst-case) version of this problem suffers from the ``curse of dimensionality''~\cite{indyk98}, a common relaxation of this problem is the \textit{approximate nearest neighbor (ANN) problem}: given that the exact nearest neighbor in the data set $\cD$ lies at distance at most $r$ from $\vc{q}$, design an efficient algorithm that finds an element $\vc{p} \in \cD$ at distance at most $c \cdot r$ from $\vc{q}$, for a given approximation factor $c > 1$. We will refer to this problem as $(c, r)$-ANN. Since a naive linear search trivially leads to an $O(d \cdot n)$ time algorithm for solving both the exact and approximate near neighbor problems, the goal is to design a data structure for which queries can be accurately answered in time $n^{\rho + o(1)}$ with $\rho < 1$. Here we will only consider scenarios where $d$ scales with $n$ -- for fixed $d$, it is well-known that one can achieve arbitrarily small query exponents $\rho = o(1)$~\cite{arya94}. We further assume w.l.o.g.\ that $n \geq 2^{O(d / \log d)}$ -- in case $d$ is larger, we can first perform a random projection to reduce the effective dimensionality of the data set, while maintaining inter-point distances~\cite{johnson84}.

\subparagraph{Partitioning the space.} A celebrated technique for efficiently solving the ANN problem in high dimensions is \textit{locality-sensitive hashing (LSH)}~\cite{indyk98, datar04, andoni06, andoni14}. Using hash functions with the property that nearby vectors are more likely to be mapped to the same hash value, one builds several hash tables with buckets containing vectors with the same hash value. Queries $\vc{q}$ are then processed by computing $h(\vc{q})$, looking up vectors $\vc{p} \in \cD$ with the same hash value $h(\vc{p}) = h(\vc{q})$, and considering these vectors as candidate near neighbors, for each of the precomputed hash tables. Although a large number of randomized hash tables is often required to obtain a good recall rate, doing these look-ups in all the hash tables is often still considerably faster than a linear search through the list.

Whereas LSH requires each point to be mapped to exactly one hash bucket in each hash table, the recent \textit{locality-sensitive filtering (LSF)}~\cite{becker16lsf, andoni17, christiani17} relaxes this condition: for each hash filter and each point in the data set, we independently decide whether we add this point to this filter bucket or not. This may mean that some points are added to more filters than others, and filters do not necessarily form a partitioning of the space. Queries are answered by considering filters matching the query, and going through all vectors in these buckets.

For the special case of the approximate nearest neighbor problem where the data set lies on the sphere, many efficient partition-based methods are known based on using random hyperplanes~\cite{charikar02}, regular polytopes~\cite{terasawa07, andoni15cp, kennedy17, laarhoven17hypercube}, and spherical caps~\cite{andoni06, andoni14, becker16lsf, laarhoven15nns, andoni17}. For random data sets of size $n = 2^{o(d)}$, both cross-polytope LSH~\cite{andoni15cp} and spherical LSF~\cite{becker16lsf} achieve the optimal asymptotic scaling of the query complexity for hash-based methods. Spherical LSF further offers optimal time--space trade-offs between the query and update/space complexities~\cite{andoni17}, while cross-polytope LSH seems to perform better in practice~\cite{andoni15cp, falconn, annbench2}. No optimality results are known for data sets of size $n = 2^{\Theta(d)}$, but spherical LSF provably outperforms cross-polytope LSH in some regimes~\cite{becker16lsf}. 

\subparagraph{Nearest neighbor graphs.} A different approach to the near neighbor problem involves constructing \textit{nearest neighbor graphs}, where vertices correspond to points $\vc{p} \in \cD$ of the data set, and an edge between two vertices indicates that these points are approximate near neighbors~\cite{brito97, eppstein97, miller97, plaku07, chen09x, connor10, dong11, hajebi11, wang12}. Given a query $\vc{q}$, one starts at an arbitrary node $\vc{p} \in \cD$, and repeatedly attempts to find vertices $\vc{p}'$, connected to $\vc{p}$ in the graph through an edge, which are closer to $\vc{q}$ than the current candidate near neighbor $\vc{p}$. When this process terminates, the resulting vector is either a local or a global minimum, i.e.\ a false positive or the true nearest neighbor to $\vc{q}$. If the graph is sufficiently well-connected, we may hope to solve the (A)NN problem with high probability in one iteration -- otherwise we might start over with a new random $\vc{p} \in \cD$, and hope for success in a reasonable number of attempts. 

Compared to LSH and LSF, which often require storing quite some auxiliary data describing the hash tables/filters, and for which the cost of computing hashes or finding appropriate filters is commonly non-negligible, the nearest neighbor graph approach has much less overhead: for each vector $\vc{p} \in \cD$, one only has to store its nearest neighbors in a list, and look-ups require no additional computations besides comparisons between the query $\vc{q}$ and data points $\vc{p} \in \cD$ in these lists to find nearer neighbors. In practice, graph-based approaches may therefore be more efficient than hash-based methods even if asymptotic analyses suggest otherwise, purely due to the low overhead of graph-based methods.

Besides the standard nearest neighbor graph approach of connecting each vertex to its nearest neighbors~\cite{dong11, kgraph}, various heuristic graph-based methods are further known based on adding connections in the graph between points which are not necessaily near neighbors~\cite{ponomarenko11, malkov14, nmslib}. This might for instance involve including several long-range, deep links in the graph between distant points, to guarantee that every pair of nodes is connected through a short path~\cite{malkov14}. Using hierarchical graphs~\cite{malkov16, nmslib} where vertices are partitioned in several layers further seems to aid the performance of graph-based methods, although again these improvements appear to be purely heuristic.

\subparagraph{Comparison of different methods.} To find out which method for finding nearest neighbors in high-dimensional spaces is objectively the best, an obvious approach would be to implement these methods, test them against realistic data sets, and compare their concrete performance. Various libraries with implementations of near neighbor methods are openly available online~\cite{annoy, falconn, flann, kgraph, nmslib}, and recently Aum\"{u}ller--Bernhardsson--Faithfull presented a thorough comparison of different methods for nearest neighbor searching on various commonly used data sets and distance metrics~\cite{annbench1, annbench2}. Their final conclusions include that the LSH-based \FALCONN{} and the graph-based \NMSLib{} and \KGraph{} currently seem to be the most competitive methods for their considered data sets. 

Practical comparisons clearly have various drawbacks, as the practical performance often depends on many additional variables, such as specific properties of the data set and the level of optimization of the implementation -- better results on certain data sets may not always translate well to other data sets with further optimized implementations. Moreover, some methods may scale better as the dimensionality and size of the data set increases, and it is hard to accurately predict asymptotic behavior from practical experiments.

A second approach for comparing different methods would be to look at their theoretical, proven asymptotic performance as the size of the data set $n$ and the dimension $d$ tend to infinity. Tight bounds on the performance would allow us to extrapolate to arbitrary data sets, but for many algorithms obtaining such tight asymptotic bounds appears challenging. Although for partition-based methods, by now the asymptotic performance seems to be reasonably well understood, graph-based algorithms are mostly based on unproven heuristics, and are not as well understood theoretically. This is particularly disconcerting due to the efficiency of certain graph-based approaches, and so a natural question is: how well do graph-based approaches hold up against partition-based methods in high dimensions? Is it just the low overhead of graph-based methods that allows them to compete with hash-based approaches? Or will these graph-based methods remain competitive even for huge data sets?

\subparagraph{Heuristic methods made theoretical.} We further remark that in the past, theoretical analyses of heuristic near neighbor methods have not only contributed to a better understanding of these methods, but also to make these methods more practical. Cross-polytope LSH, which is used in \FALCONN{}~\cite{falconn}, was originally proposed as a fast heuristic method with no proven asymptotic guarantees~\cite{terasawa07}; only later was it discovered that this method is theoretically superior to other methods as well~\cite{andoni15cp, kennedy17}, and can be made even more practical. Recently also for hypercube LSH~\cite{terasawa07} improved theoretical guarantees were obtained~\cite{laarhoven17hypercube}, and the heuristic LSH forest approach~\cite{bawa05} was also ``made theoretical'' with provable performance guarantees in high dimensions~\cite{andoni17b}. The goals of a theoretical analysis of graph-based methods are therefore twofold: to understand their asymptotic performance better, and to find ways to further improve these methods in practice.

\subsection{Contributions}

We take a first step towards a better theoretical understanding of graph-based approaches for the (approximate) nearest neighbor problem, by rigorously analyzing the asymptotic performance of the basic greedy nearest neighbor graph approach. We show that for random data sets on the Euclidean unit sphere and for arbitrary approximation factors $c > 1$, this method provably achieves asymptotic query exponents $\rho < 1$. We further show how to obtain efficient time--space trade-offs, by making the related near neighbor graph either more connected (more space, better query time) or less connected (less space, worse query time).  

\subparagraph{Sparse data sets.} In the case the data set on the unit sphere\footnote{Note that the near neighbor problem on the Euclidean unit sphere is of particular interest due to the reduction from the near neighbor problem in all of $\R^d$ (under the $\ell_2$-norm) to solving the spherical case~\cite{andoni15}. Furthermore, using standard reductions the results for the $\ell_2$-norm translate to results for $\R^d$ with the $\ell_1$-norm as well.} has size $n = 2^{o(d)}$, and an unusually near neighbor lies at distance $r = \sqrt{2} / c$, the trade-offs we obtain for finding such a close vector to a random query vector are governed by the following relation.
\begin{theorem}[Time--space trade-offs for sparse data] \label{thm:main}
For $c > 1$ and $\rhoq, \rhos \geq 0$ satisfying
\begin{align}
(2c^2 - 1) \rhoq + 2 c^2 (c^2 - 1) \sqrt{\rhos (1 - \rhos)} \geq c^4, \label{eq:sparse}
\end{align}
there exists a graph-based $(c,r)$-ANN data structure for data sets of size $n = 2^{o(d)}$ on the sphere using space $n^{1 + \rhos + o(1)}$ and query time $n^{\rhoq + o(1)}$.
\end{theorem}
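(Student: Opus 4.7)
The plan is to construct a near-neighbor graph whose edge threshold controls the space, analyze a randomized multi-start greedy walk whose aggressiveness controls the query time, and use two- and three-point cap-measure estimates on the sphere to bound the success probability. The claimed trade-off will then emerge from eliminating the two thresholds against the three resulting constraints.

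First, I fix parameters $\alphau, \alphaq \in (0,1)$ and build the graph $G = (\cD, E)$ with $\{\vc{p}, \vc{p}'\} \in E$ iff $\ip{\vc{p}}{\vc{p}'} \geq \alphau$. For $n = 2^{o(d)}$ uniformly random points on the sphere, each vertex has expected degree $n \cdot \mu(\alphau)$, where $\mu(\alpha) = (1 - \alpha^2)^{d/2 \cdot (1 + o(1))}$ is the normalized spherical-cap measure; setting this to $n^{\rhos + o(1)}$ fixes $\alphau$ in terms of $\rhos$ and yields adjacency-list storage $n^{1 + \rhos + o(1)}$.

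On a query $\vc{q}$ the algorithm samples $N$ random starts in $\cD$ and runs a short greedy walk from each, at every step examining only those graph-neighbors $\vc{p}'$ of the current vertex that satisfy $\ip{\vc{p}'}{\vc{q}} \geq \alphaq$, and reporting success as soon as it encounters some $\vc{p}'$ with $\ip{\vc{p}'}{\vc{q}} \geq 1 - 1/c^2$. The expected number of admissible neighbors per step is a joint cap measure, and the total query cost $N \cdot \polylog(n) \cdot (\text{per-step cost})$ is targeted at $n^{\rhoq + o(1)}$ by tuning $\alphaq$. The main obstacle is then to show that with constant probability over the $N$ starts, some walk reaches the target near neighbor $\vc{p}^*$ (with $\ip{\vc{p}^*}{\vc{q}} \geq 1 - 1/c^2$); this happens whenever an intermediate vertex $\vc{p}$ lies in both caps $\{\ip{\vc{x}}{\vc{q}} \geq \alphaq\}$ and $\{\ip{\vc{x}}{\vc{p}^*} \geq \alphau\}$, a three-vector joint cap event which I would evaluate via a Gaussian representation of the sphere (or equivalently Wendel's orthant formula), producing a sharp exponent in $d$ polynomial in $\alphau, \alphaq, 1/c$. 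Handling the correlation between the two near-events carefully---so as to obtain this sharp exponent rather than a loose union bound---is the technical heart of the argument, and it is precisely the step that distinguishes the graph-based exponent from its LSH analogue.

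Finally, I eliminate $\alphau, \alphaq$ from the space, time, and success constraints via a Lagrange-multiplier optimization, which after routine algebra reduces the feasibility region to exactly $(2c^2 - 1)\rhoq + 2 c^2 (c^2 - 1)\sqrt{\rhos(1 - \rhos)} \geq c^4$. A remaining subtlety is to verify that the walk length can be kept at $\polylog(n)$---either by bounding the expected improvement in $\ip{\vc{p}_t}{\vc{q}}$ per step, or by a direct reachability argument showing that a constant number of hops from random starts already covers the relevant cap around $\vc{q}$---without degrading these exponents.
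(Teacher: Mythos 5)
Your overall skeleton (an inner-product-threshold graph, a multi-start greedy walk, wedge/cap volume asymptotics) matches the paper, but there is a genuine gap at the step that actually produces the exponent. You treat the query-side proximity $\alphaq$ as a second free parameter to be optimized jointly with the graph threshold $\alphau$, and you defer to a ``remaining subtlety'' the question of whether the walk can reach vertices at proximity $\alphaq$ to $\vc{q}$. In the paper this is not a subtlety but the crux: a greedy walk on the $\alpha$-near neighbor graph exhibits a phase transition at a value $\gamma_{max}(\mu,\alpha) = \sqrt{(\mu^2-\alpha^2)/(1-2\alpha+\mu^2)}$ that is \emph{determined} by the graph density and the data density, via the requirement $n\,W(\alpha,\gamma,\gamma) \gg 1$ (small steps succeed with probability $1-O(1/d)$ up to $\gamma_{max}$, and stall beyond it because the wedge volume drops below $1/n$). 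So the proximity at which the ``giant leap'' to $\vc{p}^*$ is attempted is pinned to $\gamma_{max}$, not freely tunable; the success probability per tour is $W(\alpha,\gamma^*,\gamma_{max})/C(\gamma^*)$ (after conditioning on the planted point being uniform on $\C_{\vc{q},\gamma^*}$), the number of restarts is its reciprocal, and the per-step cost is the full bucket size $nC(\alpha)$ (you cannot examine ``only admissible neighbors'' without scanning the bucket, absent extra indexing). The trade-off inequality then falls out of the sparse-regime expansions $C(\kappa\mu)=n^{-\kappa^2+o(1)}$, $\gamma_{max}=\mu\sqrt{1-\kappa^2}+O(\mu^2)$, $\gamma^*=1-1/c^2$, which is exactly where the $\sqrt{\rhos(1-\rhos)}$ term comes from ($\rhos = 1-\kappa^2$).

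Because your formulation lacks the constraint $\alphaq \le \gamma_{max}(\alphau)$, your proposed Lagrange-multiplier elimination over two independent thresholds cannot ``after routine algebra'' reduce to the stated inequality: with both thresholds free you would be analyzing an LSF-like scheme and would either derive a different bound or an unsound one (walks simply do not reach an arbitrarily chosen $\alphaq$). You also mislocate the technical heart: the three-vector wedge volume you propose to compute by Gaussian/orthant methods is standard and is quoted from prior work in the paper; the real work is (i) the reachability/phase-transition lemma above, with the $1/d$-progress discretization making each tour $O(d)$ steps, (ii) the conditioning argument for the giant leap, and (iii) the argument that $\Theta(1/p_g)$ restarts behave nearly independently (visited vertices per tour are a vanishing fraction of the graph). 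Existence of a point in both caps is not enough --- you must show the greedy walk actually terminates at such a point, which is what the $\gamma_{max}$ analysis provides.
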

Minimizing the query complexity in this asymptotic analysis corresponds to setting $\rhos = \frac{1}{2}$, in which case $\rhoq = c^2 / (2c^2 - 1)$. Note that, unlike in various hash-based constructions, there is a natural limit to the maximum space complexity of graph-based approaches: we cannot store more than $n$ neighbors per vertex. Our analysis however suggests that when doing a greedy search in the graph, storing more than $\sqrt{n}$ neighbors per vertex does not further improve the query complexity, compared to starting over at a random new vertex.

To compare the above trade-offs with hash-based results, recall that in~\cite{andoni17} the authors obtained hash-based time--space trade-offs defined by the following inequality:
\begin{align}
c^2 \sqrt{\rhoq} + (c^2 - 1) \sqrt{\rhos} \geq \sqrt{2c^2 - 1}. \label{eq:hash}
\end{align}
Although in most cases the optimal\footnote{Within a certain probing model, the hash-based trade-offs from~\eqref{eq:hash} were proven to be optimal in~\cite{andoni17}.} hash-based trade-offs from~\eqref{eq:hash} are strictly superior to the graph-based trade-offs from Theorem~\eqref{thm:main}, we remark that in the regime of small approximation factors $c \approx 1$ and near-linear space $\rhos \approx 0$, i.e.\ when there is no unusually near neighbor and we wish to use only slightly more space than is required for storing the input list, both inequalities above translate to:
\begin{align}
\rhoq = 1 - 4 (c - 1) \sqrt{\rhos} \cdot (1 + o(1)).
\end{align}
Here $o(1)$ vanishes as $\rhos \to 0$ and $c \to 1$. So for truly random instances without any planted, unusually near neighbors, and when using a limited amount of memory, graph-based methods are asymptotically equally powerful for finding (approximate) nearest neighbors as optimal hash-based methods. In other words, in this regime graph-based approaches will remain competitive with hash-based methods even when the data sets become very large, and the dimensionality further increases. 

On the negative side, our analysis suggests that when there is an unusually near (planted) neighbor to the query, or when we are able to use much more space than the amount required to store the data set, the best known hash-based approaches are superior to the basic graph-based method analyzed in this paper. This could be caused either by our analysis not being tight, the considered algorithm not being as optimized, or due to graph-based approaches simply not being able to profit as much from such unusual circumstances. Note again that hash-based approaches are able to effectively use very large amounts of space, with a large number of fine-grained hash tables/partitions, whereas graph-based methods seem limited by using at most $n^2$ space. We therefore conjecture that the worse asymptotic performance for ``unusual'' problem instances is inherent to graph-based methods.

\subparagraph{Dense data sets.} For settings where the data set consists of $n = 2^{\Theta(d)}$ points drawn uniformly at random from the unit sphere, the asymptotic performance of near neighbor searching depends not only on the distance to the (planted) nearest neighbor, but also on the density of the data set, i.e.\ the exact relation between $d$ and $n$. Motivated by concrete data sets commonly used in practice, we concretely analyze the case $(\log n) / d \approx 1/5$ and show that the resulting trade-offs for the exact nearest neighbor problem without planted neighbors are significantly better than hyperplane LSH~\cite{charikar02}; comparable to or better than cross-polytope LSH~\cite{andoni15cp, falconn} and spherical cap LSH~\cite{andoni14}; but slightly worse than spherical LSF~\cite{becker16lsf, andoni17}. The superior asymptotic performance compared to \FALCONN{} for realistic dense data sets suggests that graph-based approaches may remain competitive with the most practical hash-based approaches, even when the dimensionality further increases.

\subparagraph{Future work.} Various open problems remain to obtain a better theoretical (and practical) understanding of different near neighbor techniques, in particular related to graph-based approaches. We state some remaining open problems below:
\begin{itemize}
\item Although our analysis for ``small steps'' (see Appendix~\ref{app:analysis}) is tight, we assumed that afterwards we either immediately find the planted nearest neighbor as one of the neighbors in the graph, or we fail and start over from a new random node. This seems rather pessimistic, and perhaps the complexities can be further improved with a tighter analysis, without changing the underlying algorithm or graph construction.
\item Other heuristic nearest neighbor graph approaches appear to perform even better in practice, and an open problem is to rigorously analyze those and see whether they translate to improved asymptotics as well.
\item A practical drawback of using nearest neighbor graphs is that updating the data structure (in particular: inserting new data points) can be rather costly, especially in comparison with hash-based constructions. To make the data structure both efficient and dynamic, one would ideally obtain better bounds on the insertion complexity as well.
\item In hash-based literature, the case of sparse data sets has arguably almost been ``solved'' with upper bounds matching lower bounds (within a certain model). Is it possible to find similar lower bounds for graph-based near neighbor searching?
\item Hash-based and graph-based approaches could be considered complementary solutions to the same problem, as worst-case problem instances for one approach are best-case instances for the other. Can both techniques be combined, so that perhaps even better worst-case guarantees can be obtained for high-dimensional data sets?
%\item Given a bucket in the near neighbor graphs considered in this paper, finding a nearer neighbor to the target could be considered a near neighbor problem of its own, albeit with (i) a smaller data set consisting only of the vectors in this bucket, and (ii) points lying on a (translated) sphere of one dimension less. Would it be possible to use near neighbor techniques \textit{within} each bucket, or \textit{nesting} various near neighbor graphs within one another, to solve near neighbor queries even more efficiently? (This seems closely related to using hierarchical near neighbor graphs as described in~\cite{malkov16}.)
%\item Analyses based on asymptotic volume estimations of spherical wedges (as in this paper and e.g.~\cite{becker16lsf, andoni17}) appear to hide rather large polynomial (in $d$) order terms compared to other methods. Therefore, a superior asymptotic performance does not necessarily mean a better practical performance. We leave obtaining more accurate expressions for the hidden order terms, to get more accurate estimates of the true costs of this and other methods, for future work. 
\end{itemize}

\subparagraph{Outline.} The remainder of this paper is organized as follows. In Section~\ref{sec:pre}, we introduce preliminary results and notation. Section~\ref{sec:random} describes problem instances considered in this paper. Section~\ref{sec:graph} analyzes the complexities of finding near neighbors with a greedy graph search, and Sections~\ref{sec:sparse} and \ref{sec:dense} consider asymptotics for sparse and dense data sets, respectively. The appendices contain proofs and discuss the application to lattice sieving.

%\subsection{Notes to self} 
%NN libraries: 
%\NMSLib{}: \cite{boytsov13}
%\SWgraph{}: \cite{malkov14}
%\HNSW{}: \cite{malkov16}
%\FALCONN{}: \cite{andoni15cp}
%\FLANN{}: \cite{muja14}
%\ANNOY{}: (no paper?)
%\KGraph{}: \cite{dong11}
%
%References for LSH/LSF: \cite{indyk98, datar04, andoni06, andoni15cp, becker16lsf, andoni17, christiani17}
%
%References for the use of NN graphs: \cite{brito97, eppstein97, miller97, plaku07, chen09x, connor10, dong11, hajebi11, wang12, kgraph}
%
%References for (H)NSW NN graphs: \cite{malkov14, malkov16}.
%
%Remark: The nearest neighbor search algorithm in this paper is very similar to the one described in \cite{hajebi11}. Not sure if there are any prior sources to cite before that paper. 
%
%Remark: There are various papers on constructing $k$-NN graphs efficiently in practice, i.e.\ faster than the naive $O(d n^2)$ brute force approach. Still, there are (afaik) no existing proven sub-quadratic bounds on updating in this graph.
%
%Remark: Solutions for ``random instances'' on the sphere naturally lead to solutions for the entire Euclidean space, and these in turn translate to solutions for the entire Hamming space using ``standard reductions'' -- see e.g. Section 2 of~\cite{andoni17}. These also apply(?) for graph-based approaches on the unit sphere.

%%WWWWWWWWWWWWWWWWWWWWWWWWWWWWWWWWWWWWWWWWWWWWWWWWWWWWWW
%%WWWWWWWWWWWWWWWWWWWWWWWWWWWWWWWWWWWWWWWWWWWWWWWWWWWWWW

\section{Preliminaries}
\label{sec:pre}

\subsection{Notation}

Let $\|\cdot\|$ denote the Euclidean norm, and let $\ip{\cdot}{\cdot}$ denote the standard dot product. We write $\cS^{d-1} = \{\vc{u} \in \R^d: \|\vc{u}\| = 1\}$ for the Euclidean unit sphere in $\R^d$. We write $X \sim \chi(\mathcal{X})$ to denote that the random variable $X$ is sampled from the probability distribution $\chi$ over the set $\mathcal{X}$, and we write $\U(\mathcal{X})$ for the uniform distribution over $\mathcal{X}$. Given a vector $\vc{x}$ (written in boldface), we further write $x_i = \ip{\vc x}{\vc{e}_i}$ for the $i$th coordinate of $\vc{x}$.

We denote directed graphs by $G = (V, A)$ where $V$ denotes the set of vertices, and $A \subseteq V \times V$ denotes the set of directed arcs. A directed graph is called symmetric if $(v_1, v_2) \in A$ iff $(v_2, v_1) \in A$. Symmetric directed graphs can also be viewed as undirected graphs $G = (V, E)$ where edges are unordered subsets of $V$ of size $2$. 

\subsection{Geometry on the sphere} 

Let $\C_{\vc{x}, \alpha} = \{\vc{u} \in \cS^{d-1}: \ip{\vc u}{\vc x} \geq \alpha\}$ denote the spherical cap centered at $\vc{x} \in \cS^{d-1}$ of ``height'' $\alpha \in (0,1)$, and let $C(\alpha)$ denote its volume relative to the entire unit sphere. Let $\W_{\vc{x}, \alpha, \vc{y}, \beta} = \C_{\vc{x}, \alpha} \cap \C_{\vc{y}, \beta}$ with $\vc{x}, \vc{y} \in \cS^{d-1}$ and $\alpha, \beta \in (0,1)$ denote the intersection of two spherical caps, and let its volume relative to the volume of the unit sphere be denoted $W(\alpha, \beta, \gamma)$, where $\gamma = \ip{\vc x}{\vc y}$ is the cosine of the angle between $\vc{x}$ and $\vc{y}$. We will also call the latter objects \textit{wedges}. The volumes of these objects correspond to probabilities on the sphere as follows:
\begin{align}
C(\alpha) &= \Pr_{\vc{X} \sim \U(\cS^{d-1})} (X_1 > \alpha), \\ 
W(\alpha, \beta, \gamma) &= \Pr_{\vc{X} \sim \U(\cS^{d-1})} (X_1 > \alpha, X_1 \gamma + X_2 \sqrt{1 - \gamma^2} > \beta). 
\end{align}
The asymptotic relative volumes of spherical caps and wedges can be estimated as follows, as previously shown in e.g.\ \cite{becker16lsf, andoni17}.
\begin{lemma}[Volume of a spherical cap]
Let $\alpha \in (0,1)$. Then:
\begin{align}
C(\alpha) &= d^{\Theta(1)} \cdot (1 - \alpha^2)^{d/2}.
\end{align}
\end{lemma}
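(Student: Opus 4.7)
The plan is to reduce $C(\alpha)$ to a one-dimensional integral via the marginal density of $X_1$ on the sphere, and then sandwich that integral between two expressions both of the form $d^{O(1)} \cdot (1-\alpha^2)^{d/2}$.

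First I would recall that when $\vc{X} \sim \U(\cS^{d-1})$, the marginal density of the first coordinate is
$$f(x) = \frac{\Gamma(d/2)}{\sqrt{\pi}\,\Gamma((d-1)/2)} \cdot (1-x^2)^{(d-3)/2}, \qquad x \in (-1,1),$$
so that $C(\alpha) = \int_\alpha^1 f(x)\,dx$. By Stirling's formula the normalizing constant in front of the integral is $\Theta(\sqrt{d})$, which is already a $d^{\Theta(1)}$ factor and can be absorbed into the target estimate. Everything then reduces to showing that $\int_\alpha^1 (1-x^2)^{(d-3)/2}\, dx$ is $(1-\alpha^2)^{d/2}$ up to polynomial factors in $d$.

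For the upper bound, I would note that $(1-x^2)^{(d-3)/2}$ is decreasing on $[\alpha,1]$, so the integral is bounded by $(1-\alpha)(1-\alpha^2)^{(d-3)/2}$, which is at most a constant (depending only on the fixed $\alpha$) times $(1-\alpha^2)^{d/2}$. For the lower bound, I would restrict the integration to a short interval $[\alpha, \alpha + \tfrac{1}{d}]$, which is nontrivial for all sufficiently large $d$ because $\alpha < 1$ is a fixed constant. On this interval a first-order expansion gives $1-x^2 \geq (1-\alpha^2)\bigl(1 - O(1/d)\bigr)$, so $(1-x^2)^{(d-3)/2} \geq (1-\alpha^2)^{(d-3)/2} \cdot (1 - O(1/d))^{(d-3)/2}$, and the second factor converges to a positive constant as $d \to \infty$. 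Multiplying by the length $1/d$ of the interval yields the lower bound $\Omega(1/d) \cdot (1-\alpha^2)^{d/2}$.

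Combining the upper and lower bounds on the integral with the $\Theta(\sqrt{d})$ normalizing factor from Stirling produces bounds of the form $\Omega(d^{-1/2}) \cdot (1-\alpha^2)^{d/2} \leq C(\alpha) \leq O(d^{1/2}) \cdot (1-\alpha^2)^{d/2}$, which is the claimed $d^{\Theta(1)} \cdot (1-\alpha^2)^{d/2}$ behavior. The only real subtlety is choosing the width of the interval in the lower bound so that the factor $(1 - O(1/d))^{d/2}$ is uniformly bounded away from zero; this is routine, but one has to be slightly careful because $\alpha$ is allowed to be any fixed value in $(0,1)$ and the constants in the Taylor expansion depend on how close $\alpha$ is to $1$. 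Since $\alpha$ does not scale with $d$ in this statement, these constants are benign and absorbed into the $\Theta(1)$ in the exponent of $d$.
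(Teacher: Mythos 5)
Your proof is correct. The paper does not prove this lemma itself --- it simply cites \cite{becker16lsf, andoni17} --- and your argument (reducing $C(\alpha)$ to the marginal density $\propto (1-x^2)^{(d-3)/2}$ of $X_1$, absorbing the $\Theta(\sqrt{d})$ Gamma-ratio via Stirling, bounding the integral above by monotonicity and below by integrating over $[\alpha, \alpha+\tfrac{1}{d}]$) is exactly the standard computation those references use, with the $d^{\Theta(1)}$ factor correctly understood as ``up to $\poly(d)$ factors,'' which your bounds $\Omega(d^{-1/2})$ and $O(d^{1/2})$ satisfy since $\alpha$ is a constant independent of $d$.
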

\begin{lemma}[Volume of a wedge] \label{lem:wedge}
Let $\alpha, \beta, \gamma \in (0,1)$. Then:
\begin{align}
W(\alpha, \beta, \gamma) &= d^{\Theta(1)} \cdot \begin{cases} \left(\dfrac{1 - \alpha^2 - \beta^2 - \gamma^2 + 2 \alpha \beta \gamma}{1 - \gamma^2}\right)^{d/2} \quad & \text{if } 0 < \gamma \leq \min\{\frac{\alpha}{\beta}, \frac{\beta}{\alpha}\}; \\ (1 - \alpha^2)^{d/2} & \text{if } \frac{\beta}{\alpha} \leq \gamma < 1; \\ (1 - \beta^2)^{d/2} & \text{if } \frac{\alpha}{\beta} \leq \gamma < 1. \end{cases}
\end{align}
\end{lemma}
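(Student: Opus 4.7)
The plan is to express $W(\alpha,\beta,\gamma)$ as a two-dimensional integral over the disk $\{(x_1,x_2) : x_1^2 + x_2^2 \le 1\}$, using the marginal density of the first two coordinates of a uniform point on $\cS^{d-1}$, and then apply Laplace's method to identify the dominant point.

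First, I would use the well-known fact that for $\vc{X} \sim \U(\cS^{d-1})$, the pair $(X_1, X_2)$ has joint density proportional to $(1 - x_1^2 - x_2^2)^{(d-4)/2}$ on the closed unit disk, with normalizing constant of order $d$. The event $\{X_1 > \alpha,\ X_1\gamma + X_2\sqrt{1-\gamma^2} > \beta\}$ is the intersection $A \subset \R^2$ of two half-planes, so
\begin{align}
W(\alpha,\beta,\gamma) \;=\; \Theta(d) \int_{A} \bigl(1 - x_1^2 - x_2^2\bigr)^{(d-4)/2}\,dx_1\,dx_2.
\end{align}

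Second, since the integrand equals $\exp\bigl(\tfrac{d-4}{2}\log(1-x_1^2-x_2^2)\bigr)$ on $A$, Laplace's method says the exponential rate is governed by the unique maximum of $1 - x_1^2 - x_2^2$ on $A$, i.e.\ the point that minimizes $x_1^2 + x_2^2$ subject to $x_1 \ge \alpha$ and $\gamma x_1 + \sqrt{1-\gamma^2}\,x_2 \ge \beta$. This is a convex quadratic program whose KKT conditions yield three regimes according to which constraints are active.

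Third, I would do the case analysis and match the three branches in the lemma:
\begin{itemize}
\item If $\gamma \ge \beta/\alpha$, the point $(\alpha, 0)$ already satisfies the second constraint (since $\alpha\gamma \ge \beta$), so only the first is active; the minimum equals $\alpha^2$, yielding the branch $(1-\alpha^2)^{d/2}$.
\item If $\gamma \ge \alpha/\beta$, symmetrically only the second constraint is active; the minimum is attained at $(\beta\gamma, \beta\sqrt{1-\gamma^2})$ and equals $\beta^2$, yielding $(1-\beta^2)^{d/2}$.
\item If $0 < \gamma \le \min\{\alpha/\beta,\,\beta/\alpha\}$, both constraints are active and the optimum sits at the vertex $(\alpha,\,(\beta-\alpha\gamma)/\sqrt{1-\gamma^2})$. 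A direct computation gives $x_1^2 + x_2^2 = (\alpha^2 + \beta^2 - 2\alpha\beta\gamma)/(1-\gamma^2)$, so $1 - x_1^2 - x_2^2$ reproduces exactly the first case of the lemma.
\end{itemize}

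Finally, I would justify that the $d^{\Theta(1)}$ prefactor is sharp. Since $-\log(1 - x_1^2 - x_2^2)$ is strictly convex and $A$ is a closed convex polyhedral region, the integral localizes to a $\Theta(1/\sqrt{d})$ neighborhood of the optimum; standard Gaussian/Laplace estimates over the appropriate quadrant (interior, half-plane, or corner, depending on the case) give upper and lower bounds that differ by at most a $\poly(d)$ factor, which together with the $\Theta(d)$ normalization absorbs into $d^{\Theta(1)}$. The main subtlety is the vertex case, where the optimum sits at a corner of $A$ rather than in the interior or on a smooth boundary piece; here one must verify that restricting integration to the $90^\circ$ wedge of $A$ emanating from the corner still produces the same exponential rate with only polynomial loss, which follows from the fact that both active constraints pass through the optimum transversally.
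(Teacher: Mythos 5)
Your proposal is correct, and it is worth noting that the paper itself offers no proof of this lemma: it simply cites prior work (\cite{becker16lsf, andoni17}) for the estimate. Your derivation — reducing to the joint density of $(X_1,X_2)$, which is $\Theta(d)\,(1-x_1^2-x_2^2)^{(d-4)/2}$ on the unit disk, and then solving the convex program $\min\{x_1^2+x_2^2 : x_1\ge\alpha,\ \gamma x_1+\sqrt{1-\gamma^2}\,x_2\ge\beta\}$ — is essentially a self-contained, rigorous version of the same geometric idea the paper alludes to right after the lemma, where it observes that the wedge volume equals (up to $d^{\Theta(1)}$) the volume of a cap of height $\delta=\sqrt{(\alpha^2+\beta^2-2\alpha\beta\gamma)/(1-\gamma^2)}$, the distance from the origin to the closest point of the wedge; your KKT case analysis (vertex active, or only one half-plane active) reproduces exactly the three branches and the correct thresholds $\gamma\lessgtr\alpha/\beta,\ \beta/\alpha$. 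Your treatment of the polynomial prefactor is also sound: the upper bound follows by bounding the integrand by its maximum over a region of constant area, and the lower bound by integrating over a $\Theta(1/d)$-sized neighborhood of the optimizer (interior of $A$ near the corner or boundary point), where the integrand loses only a constant factor, so the discrepancy is $d^{\Theta(1)}$ as claimed.
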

\begin{figure}[!t]
\begin{center}
\includegraphics[width=8cm]{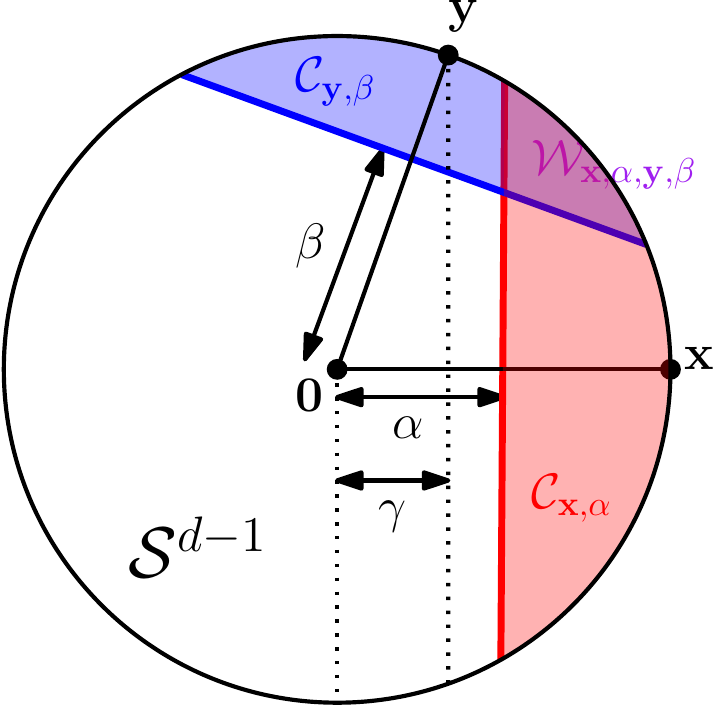}
\end{center}
\vspace{-10pt}
\caption{Geometry on the sphere. The relative volume of the wedge, $\mu(\W_{\vc{x}, \alpha, \vc{y}, \beta}) / \mu(\cS^{d-1})$, is denoted $W(\alpha, \beta, \gamma)$, with $\gamma$ (as in the sketch) denoting the cosine of the angle between $\vc{x}$ and $\vc{y}$.\label{fig:sphere}}
\end{figure}
For the wedge, the volume can alternatively be described (up to polynomial factors in $d$) as the volume of a spherical cap with height $\delta = \sqrt{\frac{\alpha^2 + \beta^2 - 2 \alpha \beta \gamma}{1 - \gamma^2}}$. In Figure~\ref{fig:sphere}, this $\delta$ corresponds to the smallest distance from points in $\W_{\vc{x}, \alpha, \vc{y}, \beta}$ to the origin, i.e.\ the distance from the intersection of the blue and red lines in this projection of the sphere to the origin. In case $\gamma \geq \frac{\alpha}{\beta}$ with $\alpha \leq \beta$, this distance from the origin becomes $\delta = \beta$ and therefore $W(\alpha, \beta, \gamma) = d^{\Theta(1)} \cdot C(\beta)$. In that case, one essentially has $\C_{\vc{x}, \alpha} \cap \C_{\vc{y}, \beta} \approx \C_{\vc{y}, \beta}$. 

We will be using various properties of these wedges, and we state some of them below.

\begin{lemma}[Wedge properties] For $\alpha, \beta, \gamma \in (0,1)$ with $\gamma < \min\{\frac{\alpha}{\beta}, \frac{\beta}{\alpha}\}$:
\begin{enumerate}
\item For $d$ sufficiently large, $W(\alpha, \beta, \gamma)$ is decreasing with $\alpha, \beta$ and increasing with $\gamma$;
\item For $d$ sufficiently large, $W(\alpha, \beta, \beta)$ is decreasing with $\beta$.
\end{enumerate}
\end{lemma}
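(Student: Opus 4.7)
The plan is to reduce both claims to monotonicity of the single smooth function
\[
g(\alpha,\beta,\gamma) \;=\; \frac{1 - \alpha^2 - \beta^2 - \gamma^2 + 2\alpha\beta\gamma}{1-\gamma^2},
\]
since Lemma~\ref{lem:wedge} gives $W(\alpha,\beta,\gamma) = d^{\Theta(1)} \cdot g(\alpha,\beta,\gamma)^{d/2}$ throughout the regime $\gamma < \min\{\alpha/\beta,\beta/\alpha\}$. Once strict monotonicity of $g$ is in hand, I transfer it to $W$ by the standard exponential-versus-polynomial comparison: a fixed multiplicative gap in $g \in (0,1)$ amplifies into an exponential gap in $g^{d/2}$, which eventually dominates the polynomial prefactor.

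For Part 1, direct differentiation yields
\[
\partial_\alpha g = \frac{2(\beta\gamma - \alpha)}{1-\gamma^2}, \qquad \partial_\beta g = \frac{2(\alpha\gamma - \beta)}{1-\gamma^2}, \qquad \partial_\gamma g = \frac{2(\alpha - \beta\gamma)(\beta - \alpha\gamma)}{(1-\gamma^2)^2},
\]
where the last expression follows by the quotient rule and then grouping the four resulting terms into a product. Under the hypothesis $\gamma < \min\{\alpha/\beta,\beta/\alpha\}$ one has $\beta\gamma < \alpha$ and $\alpha\gamma < \beta$, so $\partial_\alpha g, \partial_\beta g < 0$ and $\partial_\gamma g > 0$ throughout the regime, establishing Part 1 for $g$. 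For Part 2 I restrict to the diagonal $\gamma = \beta$ (which lies in the regime precisely when $\beta^2 < \alpha$) and apply the chain rule to the two displays above:
\begin{align*}
\frac{d}{d\beta} g(\alpha,\beta,\beta)
&= \frac{2\beta(\alpha-1)}{1-\beta^2} + \frac{2\beta(1-\alpha)(\alpha-\beta^2)}{(1-\beta^2)^2} \\
&= \frac{2\beta(1-\alpha)}{(1-\beta^2)^2} \Bigl[-(1-\beta^2) + (\alpha-\beta^2)\Bigr] \\
&= -\,\frac{2\beta(1-\alpha)^2}{(1-\beta^2)^2} \;<\; 0.
\end{align*}

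The only nontrivial conceptual step, and hence the main obstacle, is the passage from strict monotonicity of $g$ to that of $W$, since Lemma~\ref{lem:wedge} provides only a pointwise estimate with an unspecified polynomial prefactor that may differ on the two sides of a comparison. To handle this, I fix any pair of parameter values, say $\alpha < \alpha'$; by the derivative computations above, $g(\alpha',\beta,\gamma)/g(\alpha,\beta,\gamma)$ is a constant in $(0,1)$ depending only on the parameters, so $W(\alpha',\beta,\gamma)/W(\alpha,\beta,\gamma) = d^{\Theta(1)} \cdot (g(\alpha',\beta,\gamma)/g(\alpha,\beta,\gamma))^{d/2}$ decays exponentially in $d$ and is thus strictly below $1$ for $d$ large enough. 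The same exponential-gap argument, applied to one-variable perturbations in $\beta$, in $\gamma$, or along the diagonal $\gamma = \beta$, yields both parts of the lemma.
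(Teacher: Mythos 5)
Your proposal is correct and follows essentially the same route as the paper's proof: differentiate the ``asymptotic part'' $g(\alpha,\beta,\gamma)$ of $W$ and check the signs of $\partial_\alpha g$, $\partial_\beta g$, $\partial_\gamma g$ and of $\frac{d}{d\beta}g(\alpha,\beta,\beta)$ in the regime $\gamma<\min\{\alpha/\beta,\beta/\alpha\}$. You merely carry out the computations explicitly and add the (correct) exponential-versus-polynomial argument to absorb the $d^{\Theta(1)}$ prefactor, a step the paper leaves implicit.
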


\begin{proof}
For 1.\ the result follows by taking derivatives w.r.t.\ $\alpha, \beta, \gamma$ of the ``asymptotic part'' of $W$ (i.e.\ ignoring the $d^{\Theta(1)}$ term), and observing that these derivatives are always negative, negative, and positive respectively. For 2.\ we take the derivative w.r.t.\ $\beta$ of $W(\alpha, \beta, \beta)$ and observe that this derivative is always negative.
\end{proof}

The following lemma further describes that if we add a small amount of ``slack'' to one of the variables, then the volume of the resulting wedge will still be very similar to the volume of the wedge with the original parameters -- if the slack is small, then we only lose at most a polynomial factor in the volume.

\begin{lemma}[Polynomial slack] \label{lem:slack} For fixed $\alpha, \beta, \gamma \in (0,1)$ with $\gamma < \min\{\frac{\alpha}{\beta}, \frac{\beta}{\alpha}\}$ we have 
\begin{align}
W(\alpha, \beta \pm \tfrac{1}{d}, \gamma) = d^{\mp\Theta(1)} \cdot W(\alpha, \beta, \gamma), \qquad \text{and } W(\alpha, \beta, \gamma \pm \tfrac{1}{d}) = d^{\pm \Theta(1)} W(\alpha, \beta, \gamma).
\end{align}
\end{lemma}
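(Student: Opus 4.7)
The plan is to reduce the statement to a first-order Taylor expansion of the exponential base in the asymptotic formula of Lemma~\ref{lem:wedge}. Since $\gamma < \min\{\alpha/\beta, \beta/\alpha\}$ is a strict inequality on fixed constants, for every $d$ larger than some threshold $d_0 = d_0(\alpha, \beta, \gamma)$, each of the perturbed triples $(\alpha, \beta \pm 1/d, \gamma)$ and $(\alpha, \beta, \gamma \pm 1/d)$ will continue to satisfy the same strict inequality, so all four wedges appearing in the claim will fall in the first (generic) case of Lemma~\ref{lem:wedge}. Writing $h(\alpha, \beta, \gamma) = (1 - \alpha^2 - \beta^2 - \gamma^2 + 2 \alpha \beta \gamma)/(1 - \gamma^2) \in (0,1)$, each wedge volume then has the form $W = d^{\Theta(1)} \cdot h^{d/2}$.

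First I will note that $h$ is a rational function, smooth on an open neighborhood of the fixed triple $(\alpha, \beta, \gamma)$, so its partial derivatives $\partial_\beta h$ and $\partial_\gamma h$ are finite constants depending only on $(\alpha, \beta, \gamma)$. A first-order Taylor expansion then gives
\begin{align}
h\bigl(\alpha, \beta \pm \tfrac{1}{d}, \gamma\bigr) = h \pm \frac{\partial_\beta h}{d} + O(d^{-2}), \qquad h\bigl(\alpha, \beta, \gamma \pm \tfrac{1}{d}\bigr) = h \pm \frac{\partial_\gamma h}{d} + O(d^{-2}).
\end{align}
Dividing each by $h$ (which is bounded away from $0$ at the fixed triple) and raising to the power $d/2$, the classical limit $(1 + c/d + O(d^{-2}))^{d/2} \to e^{c/2}$ will show that the ratios $h(\alpha, \beta \pm 1/d, \gamma)^{d/2}/h^{d/2}$ and $h(\alpha, \beta, \gamma \pm 1/d)^{d/2}/h^{d/2}$ converge to positive constants as $d \to \infty$.

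Finally, combining these constant ratios with the $d^{\Theta(1)}$ prefactors from Lemma~\ref{lem:wedge} (whose quotient is again polynomial in $d$), the overall ratio $W(\alpha, \beta \pm 1/d, \gamma)/W(\alpha, \beta, \gamma)$ will be polynomially bounded in $d$, and likewise for the $\gamma$ perturbation. The sign attached to $\Theta(1)$ in the statement will match the monotonicity established in the preceding Wedge properties lemma: $W$ is decreasing in $\beta$ (matching the $\mp$ convention for $\beta \pm 1/d$) and increasing in $\gamma$ (matching the $\pm$ convention for $\gamma \pm 1/d$). No step is genuinely delicate; the only things worth double-checking are that the perturbed triples remain in the generic case of Lemma~\ref{lem:wedge} and that $h$ stays bounded away from zero, both of which follow immediately from the strictness of $\gamma < \min\{\alpha/\beta, \beta/\alpha\}$ together with $1/d \to 0$.
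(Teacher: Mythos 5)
Your proposal is correct and follows essentially the same route as the paper's proof: write both sides out via the asymptotic wedge-volume formula, factor out $W(\alpha,\beta,\gamma)$, and observe that the remaining factor of the form $(1\pm\eps_d)^{d/2}$ with $\eps_d = O(1/d)$ is bounded (you show it even converges to a constant), with the strict inequality $\gamma < \min\{\alpha/\beta,\beta/\alpha\}$ keeping the perturbed parameters in the generic case. The only loose point—reading off the signs $d^{\mp\Theta(1)}$ and $d^{\pm\Theta(1)}$ from monotonicity rather than from the unspecified $d^{\Theta(1)}$ prefactors—is no less rigorous than the paper's own one-line argument, so nothing further is needed.
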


\begin{proof}
This follows from writing out the left hand sides, pulling out the factors $W(\alpha, \beta, \gamma)$, and noting that the remaining factor $(1 \pm \eps_d)^d$ for some expression $\eps_d$ is at most polynomial in $d$, due to the conditions on $\gamma$ and $\alpha, \beta, \gamma$ being constant in $d$.
\end{proof}

%%WWWWWWWWWWWWWWWWWWWWWWWWWWWWWWWWWWWWWWWWWWWWWWWWWWWWWW
%%WWWWWWWWWWWWWWWWWWWWWWWWWWWWWWWWWWWWWWWWWWWWWWWWWWWWWW

\section{Random instances}
\label{sec:random}

For the graph-based approach in this paper, where points are connected to their nearest neighbor and we perform a walk on this graph starting from a random node, it is impossible to solve worst-case instances of (approximate) nearest neighbor searching. 

\begin{prop}[Worst-case data sets]
For near neighbor graph approaches, where (i) each vertex is only connected to a number of its nearest neighbors, and (ii) queries are answered by performing a greedy search on this graph to obtain better estimates, it is impossible to solve worst-case (approximate) near neighbor instances in sublinear time.
\end{prop}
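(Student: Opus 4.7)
The plan is to exhibit a worst-case data set whose near-neighbor graph decomposes into many disjoint components, only one of which contains a valid $(c,r)$-approximate neighbor of the query $\vc{q}$. Since a greedy walk can only move along graph edges and no edge crosses between components, the algorithm can succeed only if a random restart happens to start in the correct component, and this will occur only with small probability for a suitably adversarial instance.

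Concretely, let $k$ denote the number of outgoing neighbors stored per vertex, and partition $\cD$ into $\lceil n/(k+1) \rceil$ groups of $k+1$ points each. I arrange each group as a tight cluster on $\cS^{d-1}$ around a near-orthogonal centre, so that all intra-cluster distances are at most $r/2$ while all inter-cluster distances exceed $cr$. Such a packing is realizable in the dimensional regime considered in the paper, since $\cS^{d-1}$ admits $2^{\Omega(d)}$ near-orthogonal unit vectors, which exceeds $n$. The query $\vc{q}$ is then placed at distance $r$ from a distinguished point $\vc{p}^* \in C^*$ in one designated cluster, while every other data point sits at distance greater than $cr$ from $\vc{q}$. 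With these parameters the $k$ nearest neighbors of each data point in $\cD$ are precisely the other members of its own cluster, so the near-neighbor graph is the disjoint union of $\lceil n/(k+1)\rceil$ cliques, with no arc connecting $C^*$ to any other cluster.

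It then follows that any greedy walk stays inside the cluster of its starting vertex and can return $\vc{p}^*$ only if a restart begins in $C^*$, which happens with probability $(k+1)/n$ under a uniformly random start. Constant success probability therefore requires $\Omega(n/(k+1))$ restarts in expectation, and each restart must inspect the $k$ outgoing neighbors of its starting vertex in order to perform a single greedy step, costing $\Omega(k)$; multiplying gives total expected query time $\Omega(n)$, and the edge case $k = \Omega(n)$ is immediate since a single greedy step already costs $\Omega(n)$. The main subtlety is to make the per-restart lower bound robust against algorithms that might inspect only part of a vertex's neighborhood, which is handled by appealing to the definition of greedy search in the proposition itself: it must by construction examine the neighborhood at each visited vertex in order to decide the next move, so an algorithm that skips this inspection no longer qualifies as greedy search on the near-neighbor graph.
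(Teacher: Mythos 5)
Your proof is correct and follows essentially the same approach as the paper's: both exhibit an adversarial data set whose near-neighbor graph disconnects the query's (approximate) solution from the bulk of the graph -- the paper isolates a single planted vertex $\vc{p}^*\approx\vc{q}$ with all other points placed antipodally, while you split $\cD$ into $\lceil n/(k+1)\rceil$ well-separated cliques -- so that a uniformly random restart lands in the correct component only with probability $O((k+1)/n)$ and greedy search with restarts needs $\Omega(n)$ work (your version just makes the restart-count times per-restart-scan accounting explicit, which the paper leaves implicit after noting the $1-1/n$ failure probability). The one small slip is the claim that every point other than $\vc{p}^*$ lies at distance more than $cr$ from $\vc{q}$: the cluster mates of $\vc{p}^*$ are within $3r/2$ of $\vc{q}$, so for $c\geq 3/2$ this cannot hold with intra-cluster radius $r/2$, but the slip is harmless because those points belong to $C^*$ anyway -- all the argument needs is that every valid $(c,r)$-answer lies in $C^*$, which only requires the inter-cluster separation to exceed $(c+1)r$ and holds for suitable parameters.
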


\begin{proof}
As a potential worst-case problem instance to such a strategy, consider a query $\vc{q} \approx \vc{e}_1$, a planted nearest neighbor $\vc{p}^* \approx \vc{e}_1$ close to the query, and let all other points $\vc{p} \in \cD \setminus \{\vc{p}^*\}$ satisfy $\vc{p} \approx -\vc{e}_1$. Then the preprocessed near neighbor graph will consist of a large connected component containing the points $\cD \setminus \{\vc{p}^*\}$, and an isolated vertex $\vc{p}^*$, which may have outgoing edges, but has no mutual friends. With probability $1 - 1/n$, starting at a random vertex and performing a walk on this (directed) graph will therefore not yield the true nearest neighbor $\vc{p}^*$, while all other vertices are only approximate solutions with very high approximation factors; by essentially setting $\vc{p}^* = \vc{q}$, we can guarantee that even no reasonable approximate solution will be found.
\end{proof}

Although it may be possible to tweak the algorithm and/or the underlying graph so that even such worst-case instances can still be solved efficiently, we will therefore focus on average-case, random instances defined below.

Throughout, we will assume that points $\vc{p} \in \cD$ are independently and uniformly distributed on the unit sphere, except for (potentially) a planted nearest neighbor $\vc{p}^* \in \cD$ which lies very close to the query vector $\vc{q}$. Alternatively, this can be interpreted as taking a uniformly random $\vc{p}^* \sim \U(\cD)$, and choosing the query vector as $\vc{q} = \vc{p}^* + \vc{e}$ for a short error vector $\vc{e}$. Given such a problem instance, we wish to recover $\vc{p}^*$ with non-negligible probability. 

Note that the near neighbor problem on the Euclidean unit sphere, as considered here, is of special interest as such a solution allows one to the Euclidean near neighbor problem in all of $\R^d$ using techniques of Andoni--Razenshteyn~\cite{andoni15}. Furthermore, it is well-known that using standard reductions, the results for the $\ell_2$-norm translate to results for $\R^d$ with the $\ell_1$-norm as well. Efficient algorithms for the unit sphere therefore translate to solutions for many other problems as well.

For $\vc{q} \sim \U(\cD)$, and data sets following a uniformly random distribution, with overwhelming probability the (non-planted) second nearest neighbor $\vc{p} \in \cD \setminus \{\vc{p}^*\}$ to $\vc{q}$ has inner product $\ip{\vc{p}}{\vc q} = \mu (1 + o(1))$ and lies at distance $\|\vc{q} - \vc{p}\| = \sqrt{2 (1 - \mu)} (1 + o(1))$ from $\vc{q}$, with $\mu$ satisfying:
\begin{align}
\mu = \sqrt{1 - n^{-2/d}}. \qquad \qquad (\text{Alternatively: $n \approx 1 / C(\mu)$.})
\end{align}
The natural scenario to consider for random ANN instances is then to let $c \cdot r = \mu$, so that the single planted nearest neighbor lies at distance $r = \mu / c$ from the target, and so that this planted near neighbor lies a factor $c$ closer to $\vc{q}$ than all other points in the data set. These problem instances were also studied in for example~\cite{andoni15cp, andoni17}.

\subparagraph{Sparse data sets.} We will refer to data sets of size $n = 2^{o(d)}$ (in other words: $d = \omega(\log n)$) as \textit{sparse} data sets. For these instances, the expected (non-planted) nearest neighbor distance is $\mu = \sqrt{2} + o(1)$, and the planted nearest neighbor which we wish to recover therefore lies at distance $r = \sqrt{2}/c \, (1 + o(1))$ from the target. Note that the case $n = 2^{o(d / \log d)}$ can always be reduced to $n = 2^{\Theta(d / \log d)}$ through a random projection onto a lower-dimensional space, approximately maintaining all pairwise distances between points in the data set~\cite{johnson84}.

\subparagraph{Dense data sets.} We will refer to data sets of size $n = 2^{\Theta(d)}$ ($d = \Theta(\log n)$) as \textit{dense} data sets. In this case, with overwhelming probability the nearest neighbor to a randomly chosen query point $\vc{q}$ on the sphere will lie at distance $c \cdot r = \mu < \sqrt{2}$ from $\vc{q}$, and for $(c,r)$-ANN the planted nearest neighbor would therefore lie at distance $r = \mu / c$. For the limiting case of $c \to 1$, we wish to recover a vector at distance $\mu$. Note that the ``curse of dimensionality''~\cite{indyk98} does not necessarily apply to average-case dense data sets -- finding exact nearest neighbors for random dense data sets can often be done in sub-linear time~\cite{laarhoven15nns, becker16lsf}.
 
%%WWWWWWWWWWWWWWWWWWWWWWWWWWWWWWWWWWWWWWWWWWWWWWWWWWWWWW
%%WWWWWWWWWWWWWWWWWWWWWWWWWWWWWWWWWWWWWWWWWWWWWWWWWWWWWW

\section{Graph-based near neighbor searching}
\label{sec:graph}

\subsection{Algorithm description}

The nearest neighbor graph and corresponding near neighbor search algorithm we will analyze are very similar to a greedy search in the $k$-nearest neighbor graph, i.e.\ very similar to the approach of \KGraph{}. Recall that the directed $k$-nearest neighbor graph $G = (V, A)$ has vertex set $V = \cD$, and an arc runs from $\vc{p}$ to $\vc{p}'$ iff $\vc{p}'$ belongs to the $k$ closest points to $\vc{p}$ in $\cD$. Given a query $\vc{q}$, searching for a nearest vector in this graph is commonly done as outlined in Algorithm~\ref{alg:query}, with $\eps = 0$, and with $\B(\vc{p})$ denoting the set of neighbors to $\vc{p}$ in this graph. Note that for large $k = n^{\Theta(1)}$, this graph is almost symmetric. The condition of belonging to the $k$ nearest neighbors is however somewhat impractical to work with analytically, and so we will use the following slightly different graph (and search algorithm) instead.

\begin{definition}[The $\alpha$-near neighbor graph]
Let $\alpha \in (0,1)$, and let $\cD \subset \cS^{d-1}$. We define the $\alpha$-near neighbor graph as the undirected graph $G = (V, E)$ with vertex set $V = \cD$, and with an edge between $\vc{p}, \vc{p}' \in E$ if and only if $\ip{\vc p}{\vc p'} \geq \alpha$. 
\end{definition}

The parameter $\alpha$ roughly corresponds to (a function of) $k$: large $\alpha$ correspond to small $k$, and small $\alpha$ to large $k$. Asymptotically, the approximate relation between $k$ and $\alpha$ can be stated through the following simple relation $k \approx n \cdot C(\alpha)$. The main difference is that rather than fixing $k$ and varying the required distance between points for an edge, we fix a bound on the distance between two connected points in the graph, and therefore we will have slight variations in the number of neighbors $k$ from vertex to vertex. Notice the similarity with spherical cap LSH~\cite{andoni14} and in particular spherical LSF~\cite{becker16lsf, andoni17}: we essentially use $n$ random spherical filters centered around our data points, and we add vectors to filter buckets the same way as in spherical LSF: if the inner product with the filter vector $\vc{p}$ is sufficiently large, we add the point to this bucket $\B(\vc{p})$. However, in the spirit of graph-based approaches we search for a path on the nearest neighbor graph that ends at the nearest neighbor as in Algorithm~\ref{alg:query}, rather than checking a number of filters close to the query point to see if the nearest neighbor is contained in any of those.

To process a query $\vc{q}$, we first sample a uniformly random point $\vc{p} \sim \U(\cD)$. Then we go through its neighbors $\B(\vc{p})$ to see if any of these vectors $\vc{p}' \in \B(\vc{p})$ are closer to $\vc{q}$ than $\vc{p}$. If so, we use this as our new $\vc{p} \gets \vc{p}'$, and we again see if any of its neighbors are closer to $\vc{q}$. We repeat this procedure until no more vectors in $\B(\vc{p})$ are closer to $\vc{q}$ than $\vc{p}$ itself, in which case $\vc{p}$ is our estimate for the real nearest neighbor $\vc{p}^*$ to $\vc{q}$. This may ($\vc{p} = \vc{p}^*$) or may not ($\vc{p} \neq \vc{p}^*$) actually be the true nearest neighbor to $\vc{q}$, and to obtain a higher success rate we repeat the process several times, each time starting from a random point $\vc{p} \sim \U(\cD)$.

While the focus is on minimizing the query time, Algorithms~\ref{alg:insert}--\ref{alg:delete} also demonstrate how to do updates to this data structure, when vectors are inserted in $\cD$ or removed from $\cD$. These parts of the algorithm may well be improved upon, and an important open question is to make updates (in particular insertions) as efficient as partition-based methods, such as in~\cite{andoni17}.

\begin{figure}[p]
\begin{minipage}[t]{7.1cm}
\vspace{0pt}
\begin{algorithm}[H]
\caption{\ $\alpha$-NN graph: \textsc{Query}($\vc{q}$) \label{alg:query}}
\begin{algorithmic}[1]
\State $\vc{p} \sim \U(\cD)$ \Comment{random starting point}
\While{$\ip{\vc{p}}{\vc{q}} < \gamma^*$} \Comment{$\gamma^* = \ip{\vc{p}^*}{\vc{q}}$}
	\State \textsf{progress} $\gets$ \textbf{false}
	\For{\textbf{each} $\vc{p}' \in \B(\vc{p})$} \label{lin:for}
		\If{$\ip{\vc{p}'}{\vc q} \geq \ip{\vc p}{\vc q} + \frac{1}{d}$}
			\State $\vc{p} \gets \vc{p}'$ \Comment{nearer neighbor}
			\State \textsf{progress} $\gets$ \textbf{true}
			%\State \textbf{break}
		\EndIf
	\EndFor
	\If{\textbf{not} \textsf{progress}}
		\State $\vc{p} \sim \U(\cD)$ \label{lin:restart} \Comment{start over}
	\EndIf
\EndWhile
\State \Return $\vc{p}$
\end{algorithmic}
\end{algorithm}
\end{minipage}
\,
\begin{minipage}[t]{6.5cm}
\vspace{0pt}  
\begin{algorithm}[H]
\caption{\ $\alpha$-NN graph: \textsc{Insert}($\vc{p}$) \label{alg:insert}}
\begin{algorithmic}[1]
\State $\B(\vc{p}) \gets \emptyset$ \Comment{new bucket}
\For{\textbf{each} $\vc{p}' \in \cD$}
	\If{$\ip{\vc p}{\vc p'} \geq \alpha$}
		\State $\B(\vc{p}) \gets \B(\vc{p}) \cup \{\vc{p}'\}$
		\State $\B(\vc{p}') \gets \B(\vc{p}') \cup \{\vc{p}\}$
	\EndIf
\EndFor
\end{algorithmic}
\end{algorithm}
\vspace{-18pt}
\begin{algorithm}[H]
\caption{\ $\alpha$-NN graph: \textsc{Delete}($\vc{p}$) \label{alg:delete}}
\begin{algorithmic}[1]
\For{\textbf{each} $\vc{p}' \in \B(\vc{p})$}
	\State $\B(\vc{p}') \gets \B(\vc{p}') \setminus \{\vc{p}\}$
\EndFor
\State $\B(\vc{p}) \gets \emptyset$ \Comment{delete bucket}
\end{algorithmic}
\end{algorithm} 
\end{minipage} 
\caption{Algorithms for querying the $\alpha$-near neighbor graph with a query point $\vc{q}$, and for inserting/deleting points from this data structure. Here $\gamma^*$ is the (expected) inner product between $\vc{q}$ and its true nearest neighbor. Initializing the data structure is done by choosing $\alpha \in (0,1)$ and for instance calling \textsc{Insert}($\vc{p}$) for all $\vc{p} \in \cD$ to construct the graph adjacency buckets $\B(\vc p)$.}
\end{figure}

\begin{figure}[p]
\centering
\includegraphics[width=13.5cm]{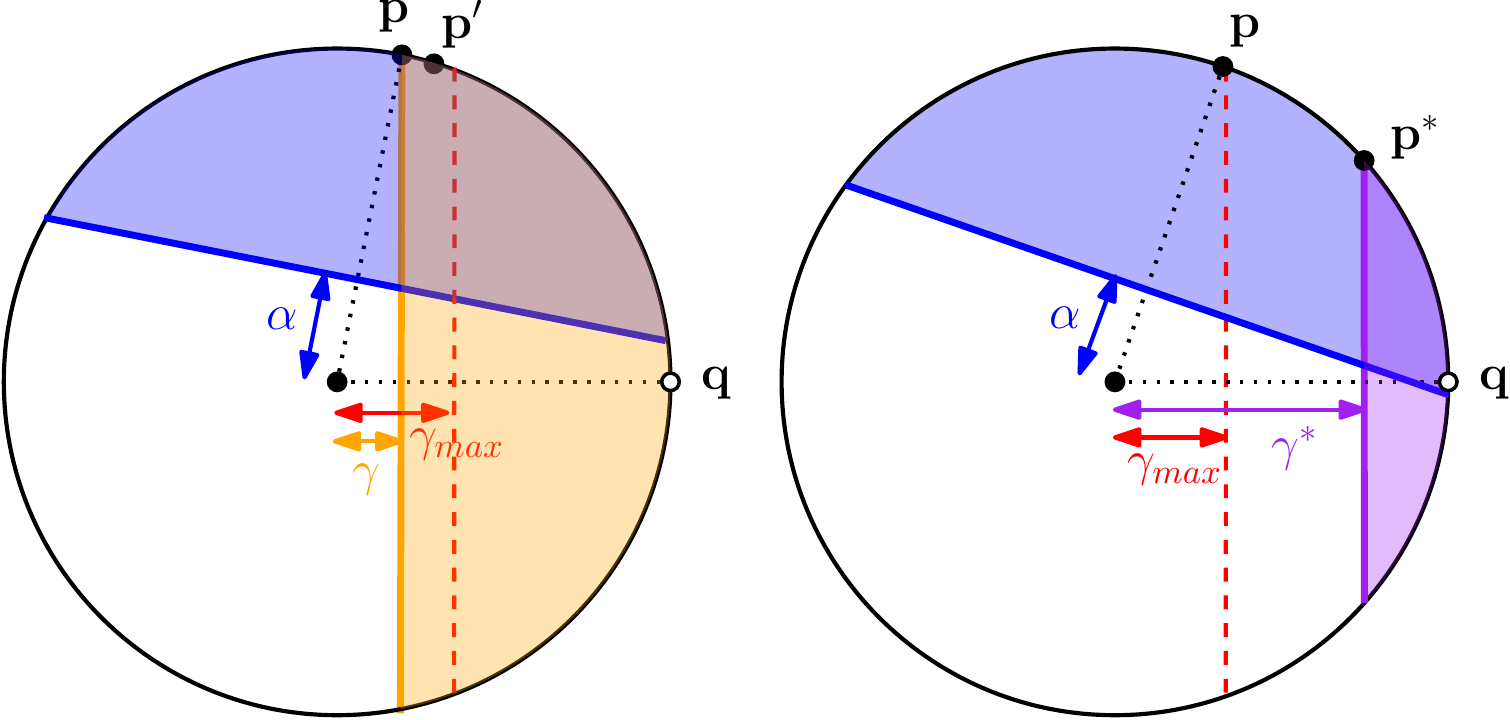}
\caption{A sketch of the analyses for small steps (left) and the giant leap (right). The point $\vc{p} \in \cD$ denotes the current near neighbor estimate for the query $\vc{q}$, and we want to make progress either by finding a \textit{slightly nearer} neighbor $\vc{p}' \in \B(\vc{p})$ when $\vc{p}$ is \textit{far away} from $\vc{q}$ (left), or finding the \textit{exact} nearest neighbor $\vc{p}^* \in \B(\vc{p})$ to the query $\vc{q}$ when $\vc{p}$ is already quite close to $\vc{q}$ (right). The threshold separating these two cases is $\gamma_{max}$. \protect\\
\textbf{$\bullet$ Small steps (left).} If the current near neighbor $\vc{p}$ has inner product $\ip{\vc p}{\vc q} = \gamma \ll \gamma_{max}$ with $\vc{q}$, then we expect that several points $\vc{p}' \in \cD \setminus \{\vc{p}^*\}$ still exist which lie (slightly) closer to $\vc{q}$ than $\vc{p}$, and we hope at least one of them is an $\alpha$-near neighbor to $\vc{p}$ as well. Since a nearer neighbor in $\B(\vc{p})$ to $\vc{q}$ by definition lies in $\W_{\vc{p}, \alpha, \vc{q}, \ip{\vc p}{\vc q}}$ (the intersection of the solid spherical caps), and the data set is assumed to be uniformly random on the sphere, the probability of finding at least one such nearer neighbor is proportional to $n \cdot W(\alpha, \gamma, \gamma)$.\protect\\
\textbf{$\bullet$ Giant leap (right).} Once we find a near neighbor $\vc{p}$ to $\vc{q}$ with inner product $\ip{\vc{p}}{\vc{q}} \approx \gamma_{max}$, we would like to show that in the next step, we will find $\vc{p}^* \in \B(\vc{p})$ with a certain (small) probability. Assuming $\vc{p}^*$ is uniformly distributed on $\C_{\vc{q}, \gamma^*}$, this corresponds to the probability that $\vc{p}^* \in \W_{\vc{p}, \gamma_{max}, \vc{q}, \gamma^*}$ (the intersection of the solid spherical caps), conditioned on the event $\vc{p}^* \in \C_{\vc{q}, \gamma^*}$ (the rightmost spherical cap). This probability is therefore proportional to $W(\alpha, \gamma^*, \gamma_{max}) / C(\gamma^*)$. \label{fig:sketch}}
\end{figure}

\subsection{High-level proof description}

To obtain provable asymptotic complexities for this approach for the (approximate) nearest neighbor problem, we will prove the following results in Appendix~\ref{app:analysis} for some value $\gamma_{max}$:
\begin{description}
\item[$\bullet$ Small steps:] If $\vc{p} \in \cD$ satisfies $\ip{\vc{p}}{\vc q} \ll \gamma_{max}$, then with non-negligible probability $d^{-\Theta(1)}$ we will find a slightly nearer neighbor $\vc{p}' \in \B(\vc{p})$ to $\vc{q}$.
\item[$\bullet$ Giant leap:] If $\vc{p} \in \cD$ satisfies $\ip{\vc{p}}{\vc q} \approx \gamma_{max}$, the probability of immediately finding the exact nearest neighbor $\vc{p}^*$ in the bucket $\B(\vc{p})$ is larger than some given bound.
\item[$\bullet$ Randomization:] Since ``giant leaps'' may often fail, we argue that starting over at random nodes a sufficiently large number of times leads to a constant success probability.
\item[$\bullet$ Encountered vertices:] We prove that the number of vertices in each bucket, and on each walk through the graph, can be bounded by small multiples of their expected values.
\item[$\bullet$ Query complexity:] Using these results, we derive bounds on the time complexity for answering queries, with and without starting over at random nodes in the graph.
\item[$\bullet$ Space complexity:] Similarly, since the number of edges in the graph can be bounded appropriately, we obtain tight bounds on the required space complexity.
\item[$\bullet$ Update complexities:] Finally, we analyze what are the (naive) costs for updating the data structure (inserting/deleting points).
\end{description}
Together, these results lead to the following main result, stating exactly what are the costs for finding near neighbors. Unless stated otherwise, the ``time'' complexity corresponds to the \textit{query} time complexity.

\begin{theorem}[Near neighbor costs]
Using the $\alpha$-near neighbor graph with $\alpha \in (0,1)$ and $n C(\alpha) \gg 1$, and using Algorithms~\ref{alg:query}--\ref{alg:delete}, we can solve the planted near neighbor problem with the following costs, with $\gamma_{max}$ as in~\eqref{eq:gammax}:
\begin{align}
\texttt{Time} &= d^{O(1)} n C(\alpha) C(\gamma^*) / W(\alpha, \gamma^*, \gamma_{max}), \\
\texttt{Space} &= O(n^2 C(\alpha) \log n), \\
\texttt{Insert} &= O(d n), \\
\texttt{Delete} &= O(n C(\alpha) \log (n C(\alpha))).
\end{align}
\end{theorem}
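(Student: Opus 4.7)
The plan is to establish the seven bullets listed in the high-level sketch and then multiply/add them together. For the small-steps lemma I would fix a vertex $\vc{p}$ with current inner product $\gamma = \ip{\vc{p}}{\vc{q}} \in (0, \gamma_{max})$ and observe that, by uniformity of the data, the probability that an arbitrary other point $\vc{p}' \in \cD$ satisfies both $\vc{p}' \in \B(\vc{p})$ and $\ip{\vc{p}'}{\vc{q}} \geq \gamma + 1/d$ equals $W(\alpha, \gamma + 1/d, \gamma)$. By Lemma~\ref{lem:slack} this is $d^{-\Theta(1)} W(\alpha, \gamma, \gamma)$, so if $\gamma_{max}$ is chosen as exactly the largest $\gamma$ for which $n W(\alpha, \gamma, \gamma) \geq d^{\Theta(1)}$, a Chernoff bound guarantees a nearer neighbor exists with probability $1 - o(1)$. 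For the giant leap, conditioning on $\vc{p}^* \in \C_{\vc{q}, \gamma^*}$ and using rotational invariance of the uniform distribution, the probability that $\vc{p}^*$ additionally lies in $\C_{\vc{p}, \alpha}$ when $\ip{\vc{p}}{\vc{q}} \approx \gamma_{max}$ equals $W(\alpha, \gamma^*, \gamma_{max})/C(\gamma^*)$. Repeating the outer loop (line~\ref{lin:restart}) $\Theta(C(\gamma^*)/W(\alpha, \gamma^*, \gamma_{max}))$ times then boosts the overall success probability to a constant by a standard geometric-trials argument.

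Next I would bound the workloads. Each $|\B(\vc{p})|$ is a sum of $n-1$ independent Bernoullis of mean $C(\alpha)$, so by a Chernoff bound and a union over $\vc{p} \in \cD$, simultaneously $|\B(\vc{p})| = O(n C(\alpha) \log n)$ with probability $1 - o(1)$. Each walk performs at most $O(d)$ inner-loop iterations because every accepted step raises $\ip{\vc{p}}{\vc{q}}$ by at least $1/d$ while the inner product is confined to $[-1,1]$; hence the cost of a single attempt is $O(d \cdot n C(\alpha) \log n)$, and multiplying by the number of restarts yields query time $d^{O(1)} n C(\alpha) \cdot C(\gamma^*)/W(\alpha, \gamma^*, \gamma_{max})$, matching the claim. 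The space bound follows by summing $|\B(\vc{p})|$ over all vertices and allowing $O(\log n)$ bits per stored neighbor index, for $O(n^2 C(\alpha) \log n)$ with high probability. Finally, \textsc{Insert}($\vc{p}$) performs $n$ inner-product comparisons at $O(d)$ each for $O(dn)$ total, while \textsc{Delete}($\vc{p}$) removes $\vc{p}$ from each of the $O(n C(\alpha))$ buckets $\B(\vc{p}')$ with $\vc{p}' \in \B(\vc{p})$, at cost $O(\log(n C(\alpha)))$ per removal if buckets are stored as balanced search trees.

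The main obstacle is the small-step lemma, specifically extending it from a single fixed $\vc{p}$ to the sequence of vertices actually encountered along a walk: after several accepted steps the current vertex $\vc{p}_t$ and its bucket are no longer independent of the data seen in earlier iterations. The natural workaround is to bound the total number of data points inspected on a single walk by $\poly(d)$, so that conditioning on their exact positions perturbs the distribution of the remaining $n - \poly(d)$ points by at most a polynomial factor in $d$, which is absorbed harmlessly by the $d^{O(1)}$ prefactor in the query time. This dependence accounting, together with verifying that the stated $\gamma_{max}$ is indeed the correct transition point between the small-step and giant-leap regimes, is the technical heart of the proof.
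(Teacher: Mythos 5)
Your decomposition tracks the paper's proof almost step for step: small steps of size $1/d$ whose success is governed by $W(\alpha,\gamma+\frac1d,\gamma)$ and Lemma~\ref{lem:slack}, a giant leap with probability $\approx W(\alpha,\gamma^*,\gamma_{max})/C(\gamma^*)$, $\Theta(1/p_g)$ restarts, bucket sizes $O(nC(\alpha))$ up to logarithms, $O(d)$ steps per tour, and the straightforward \textsc{Insert}/\textsc{Delete} costs. Your space argument (Chernoff on each $|\B(\vc{p})|$ plus a union bound over the $n$ buckets) is a legitimate alternative to the paper's, which instead computes the variance of the total edge count --- using that edges sharing at most one endpoint are independent events --- and applies Chebyshev; your route works whenever $nC(\alpha) \gtrsim \log n$ (true in all regimes used later), though keeping the $\log n$ slack in the per-bucket bound would cost an extra $\log n$ in the space bound, and the extra $\log n \le d$ factors in the per-tour cost are harmlessly absorbed into the $d^{O(1)}$ of the time bound.

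Two concrete gaps remain. First, your per-step guarantee ``a nearer neighbor exists with probability $1-o(1)$'' is too weak: a tour makes $\Theta(d)$ steps, so the union bound over the walk needs each step to fail with probability $o(1/d)$. The paper obtains this by tuning the polynomial slack (the hidden $d^{\Theta(1)}$ factors, and if necessary slack $1/d^{\beta}$) so that $p_1 \ge \log(100d)/n$, whence $p_n \ge 1-0.01/d$ and the $2d$-step union bound gives success probability $0.98$. Your implicit definition of $\gamma_{max}$ (largest $\gamma$ with $nW(\alpha,\gamma,\gamma) \ge d^{\Theta(1)}$) can accommodate this, but since the theorem states its time bound with $\gamma_{max}$ as in~\eqref{eq:gammax}, you must also carry out the computation showing this implicit value coincides with the explicit formula (the content of Lemma~\ref{lem:gammax}), which you defer. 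Second, your proposed fix for the dependence issue is quantitatively wrong: a single tour inspects all buckets along the walk, i.e.\ $\Theta(d\, n C(\alpha)) = n^{\Theta(1)}$ data points (recall $nC(\alpha)\gg 1$, and $nC(\alpha)=n^{1-\kappa^2+o(1)}$ in the sparse regime), not $\poly(d)$ of them, so ``conditioning on their exact positions perturbs the distribution by a $\poly(d)$ factor'' does not follow. The paper instead argues, in its Randomization lemma, that the set of vertices encountered per tour is sublinear in $n$, so after a failed tour one may imagine deleting it and restarting on a fresh graph of $n(1-o(1))$ spherically symmetric points; this is what justifies treating the $\Theta(1/p_g)$ restarts as (nearly) independent geometric trials --- a point your ``standard geometric-trials argument'' currently assumes rather than proves.
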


%%WWWWWWWWWWWWWWWWWWWWWWWWWWWWWWWWWWWWWWWWWWWWWWWWWWWWWW
%%WWWWWWWWWWWWWWWWWWWWWWWWWWWWWWWWWWWWWWWWWWWWWWWWWWWWWW

\section{Asymptotics for sparse data sets}
\label{sec:sparse}

Due to space limitations, the derivation of the asymptotics on the various costs of graph-based near neighbor searching for large $d$ have been moved to Appendix~\ref{app:sparse}. Its derivation mainly consists of carefully writing out the costs stated in the previous theorem, and doing series expansions for $\mu = \sqrt{1 - n^{-2/d}} = o(1)$.

\begin{theorem}[Complexities for sparse data]
Let $n = 1 / C(\mu) = 2^{o(d)}$ and let $\alpha = \kappa \cdot \mu$. Let $\gamma^* = 1 - 1/c^2$ denote the inner product between the query and the (planted) nearest neighbor. Using the $\alpha$-NNG with $\alpha = \kappa \cdot \mu$ with $\kappa \in (\sqrt{(\gamma^*)^2/ (1 + (\gamma^*)^2)}, 1)$, with high probability we can solve the sparse near neighbor problem in several iterations with the following complexities:
\begin{align}
\texttt{Time} &= n^{\left[\frac{1 - 2 \gamma^* \sqrt{\kappa^2 \left(1 - \kappa^2\right)}}{1 - (\gamma^*)^2} + o(1)\right]}, \\
\texttt{Space} &= n^{2 - \kappa^2 + o(1)}, \\
\texttt{Insert} &= n^{1 + o(1)}, \\
\texttt{Delete} &= n^{1 - \kappa^2 + o(1)}.
\end{align}
Denoting the query time complexity by $\texttt{Time} = n^{\rhoq + o(1)}$ and the space complexity by $\texttt{Space} = n^{1 + \rhos + o(1)}$, the trade-off between these costs can be expressed using the following inequality:
\begin{align}
(2c^2 - 1) \rhoq + 2 c^2 (c^2 - 1) \sqrt{\rhos (1 - \rhos)} \geq c^4.
\end{align}
\end{theorem}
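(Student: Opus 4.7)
The plan is to apply the preceding \textit{near neighbor costs} theorem, substitute $\alpha = \kappa\mu$ throughout, and exploit the sparse-regime relation $\mu = \sqrt{1 - n^{-2/d}} = o(1)$ together with the asymptotic identity $(d/2)\log(1/(1-\mu^2)) = \log n$, which in the sparse limit reduces to $(d/2)\mu^2 = \log n \cdot (1+o(1))$. This is the conversion that turns any factor of the form $(1 - a^2 \mu^2)^{d/2}$ into $n^{-a^2 + o(1)}$. The bounds on space, insert and delete then drop out immediately: $n^2 C(\alpha) \log n$ and $n C(\alpha) \log(n C(\alpha))$ become $n^{2-\kappa^2+o(1)}$ and $n^{1-\kappa^2+o(1)}$, while $O(dn) = n^{1+o(1)}$. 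The time bound is the real work and requires first identifying $\gamma_{max}$.

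From the small-steps analysis in Appendix~\ref{app:analysis}, $\gamma_{max}$ should be the largest inner product for which $n \cdot W(\alpha, \gamma, \gamma) \geq 2^{-o(d)}$ still holds. Expanding the interior case of Lemma~\ref{lem:wedge} with $\alpha, \gamma = O(\mu)$ yields ratio $\approx 1 - \alpha^2 - \gamma^2$ to leading order in $\mu^2$, so with $n = 2^{o(d)}$ the threshold becomes $\alpha^2 + \gamma^2 \leq \mu^2(1+o(1))$ and hence $\gamma_{max} = \mu\sqrt{1-\kappa^2}$. The stated hypothesis $\kappa > \sqrt{(\gamma^*)^2/(1+(\gamma^*)^2)}$ is then exactly the algebraic condition $\gamma_{max} < \alpha/\gamma^*$ that keeps the giant-leap wedge $W(\alpha, \gamma^*, \gamma_{max})$ inside the interior case of Lemma~\ref{lem:wedge} rather than in one of the degenerate regimes where one cap essentially contains the other.

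The central calculation is to expand this giant-leap wedge. Since $\gamma^* = 1-1/c^2$ is $\Theta(1)$ while $\alpha, \gamma_{max} = O(\mu)$, the wedge numerator becomes $(1-(\gamma^*)^2) - \mu^2(1 - 2\gamma^*\kappa\sqrt{1-\kappa^2})$ and the denominator $1 - \mu^2(1-\kappa^2)$ to first order in $\mu^2$. Dividing by $1-(\gamma^*)^2$, raising to the $d/2$ power, and applying $(d/2)\mu^2 = \log n(1+o(1))$ produces
\begin{align*}
\frac{W(\alpha,\gamma^*,\gamma_{max})}{C(\gamma^*)} = n^{\,(1-\kappa^2)\, -\, (1 - 2\gamma^*\kappa\sqrt{1-\kappa^2})/(1-(\gamma^*)^2)\, +\, o(1)}.
\end{align*}
Plugging this into $\texttt{Time} = d^{O(1)} \cdot n C(\alpha) C(\gamma^*) / W(\alpha,\gamma^*,\gamma_{max})$ cancels the two $(1-\kappa^2)$ contributions and yields the claimed time exponent; Lemma~\ref{lem:slack} absorbs the $1/d$-slack introduced by the progress threshold in Algorithm~\ref{alg:query} into the $o(1)$.

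Finally, the trade-off inequality follows by eliminating $\kappa$. Reading off $\rhos = 1-\kappa^2$ from the space bound converts $\kappa\sqrt{1-\kappa^2}$ into $\sqrt{\rhos(1-\rhos)}$, and substituting $\gamma^* = (c^2-1)/c^2$ together with $1-(\gamma^*)^2 = (2c^2-1)/c^4$ into the time exponent and clearing denominators delivers $(2c^2-1)\rhoq + 2c^2(c^2-1)\sqrt{\rhos(1-\rhos)} = c^4$ with equality, the $\geq$ form coming from suboptimal choices of $\kappa$. The main obstacle I anticipate is purely bookkeeping: one must retain the cross-term $2\alpha\gamma^*\gamma_{max} = 2\gamma^*\mu^2\sqrt{\kappa^2(1-\kappa^2)}$ in the wedge numerator at exactly the right order in $\mu^2$, because this is precisely the term that generates the $\sqrt{\rhos(1-\rhos)}$ dependence driving the entire trade-off; any coarser expansion would collapse it into the $o(1)$ and destroy the result.
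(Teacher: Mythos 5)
Your proposal is correct and follows essentially the same route as the paper's Appendix~\ref{app:sparse}: converting $C(\alpha)$, $C(\gamma^*)$ and the wedge volumes into powers of $n$ via $(d/2)\mu^2 = (\log n)(1+o(1))$, expanding $\gamma_{max} \approx \mu\sqrt{1-\kappa^2}$, retaining the cross-term $2\alpha\gamma^*\gamma_{max}$ in the interior case of Lemma~\ref{lem:wedge}, and eliminating $\kappa$ via $\rhos = 1-\kappa^2$. Your observation that the hypothesis on $\kappa$ is precisely the condition keeping $W(\alpha,\gamma^*,\gamma_{max})$ in the interior regime matches the paper's case distinction, so no gap remains.
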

As described in the introduction, for $c \approx 1$ and $\rhos \to 0$, the above condition on $\rhoq$ scales as $\rhoq \geq 1 + 4 (c - 1)^2 + O((c - 1)^4)$, which is equivalent to the scaling near $c = 1$ for the optimal partition-based near neighbor method of~\cite{andoni17}. For larger $c$ and when using more space, this trade-off is not better than the best hash-based methods -- by substituting $\sqrt{\rhos (1 - \rhos)} \leq \frac{1}{2}$, we obtain the necessary (but not sufficient) condition $\rhoq \geq c^2 / (2c^2 - 1)$, which shows that the query complexity never reduces beyond $\sqrt{n}$, even for large $c$. This inability to ``profit'' from large approximation factors may be inherent to graph-based approaches, or at least to the greedy graph-based approach considered in this paper.

For the ``balanced'' trade-off of $\rhoq = \rhos = \rho$, the condition on the exponents translates to:
\begin{align}
\rho \geq \frac{c^4}{2c^4 - 2c^2 + 1} \, .
\end{align}
For small $c \approx 1$, this leads to the asymptotic scaling $\rho = 1 - 4 (c - 1)^2 + O((c - 1)^3)$, which is slightly worse than the optimal hash-based trade-offs of~\cite{andoni14, andoni17}: $\rho = 1/(2c^2 - 1) = 1 - 4 (c - 1) + 14 (c - 1)^2 + O((c - 1)^3)$. Also note again the asymptotics of $\rho \to \frac{1}{2}$ for large $c$ for graph-based methods,

%%WWWWWWWWWWWWWWWWWWWWWWWWWWWWWWWWWWWWWWWWWWWWWWWWWWWWWW
%%WWWWWWWWWWWWWWWWWWWWWWWWWWWWWWWWWWWWWWWWWWWWWWWWWWWWWW

\section{Asymptotics for dense data sets}
\label{sec:dense}

For data sets of size $n = 2^{\Theta(d)}$, the asymptotics from the previous section do not apply; these assumed that $\mu = \sqrt{1 - n^{-2/d}} = o(1)$. In some applications the relation $d = \Theta(\log n)$ is more accurate, and arguably even if e.g.\ $d = \Theta(\log n \log \log n)$ grows faster than $\log n$, asymptotics for the sparse regime may be rather optimistic; $\log \log n$ terms are then considered ``large'', even though for realistic parameters $\log \log n$ may just as well be considered constant.

In Appendix~\ref{app:sieving}, we performed a case study for an application where $\mu = \sqrt{1 - n^{-2/d}} = \frac{1}{2}$, so that $n \approx 2^{d/5}$. This example is motivated by algorithms for solving hard lattice problems and cryptanalyzing lattice-based cryptosystems, which have previously been improved using other (hash-based) near neighbor methods~\cite{laarhoven15crypto, becker16cp, becker16lsf}. Comparable results may hold for other applications as well; the GloVe data set~\cite{pennington14} contains $n = 1.2M \approx 2^{20}$ vectors in $d = 100$ dimensions, which corresponds to $n \approx 2^{d/5}$, while the SIFT features data set~\cite{amsaleg10} has $n = 1M$ vectors in $d = 128$ dimensions, corresponding to $n \approx 2^{0.35d}$. The conclusions regarding the comparison of graph-based near neighbor searching techniques with hash-based methods (in Appendix~\ref{app:sieving}) may therefore apply to other data sets and applications as well.

Note that since the precise trade-offs depend on many parameters (see Theorem~\ref{thm:main}), we do not state a main result here, and refer the reader to Appendix~\ref{app:sieving} for one specific case analysis for lattice sieving. For arbitrary densities and target angles, this is a matter of writing out the expressions from Theorem~\ref{thm:main} and looking at the asymptotic scaling of the various costs.

%%WWWWWWWWWWWWWWWWWWWWWWWWWWWWWWWWWWWWWWWWWWWWW
%%WWWWWWWWWWWWWWWWWWWWWWWWWWWWWWWWWWWWWWWWWWWWW

\section*{Acknowledgments}

The author thanks Marek Eli\'{a}\v{s}, Jesper Nederlof, and Jorn van der Pol for enlightening discussions and comments about the proofs in the appendix. This work was supported by ERC consolidator grant 617951.

%%WWWWWWWWWWWWWWWWWWWWWWWWWWWWWWWWWWWWWWWWWWWWW
%%WWWWWWWWWWWWWWWWWWWWWWWWWWWWWWWWWWWWWWWWWWWWW

\newpage

\appendix

%%WWWWWWWWWWWWWWWWWWWWWWWWWWWWWWWWWWWWWWWWWWWWW
%%WWWWWWWWWWWWWWWWWWWWWWWWWWWWWWWWWWWWWWWWWWWWW

\section{Proof of Theorem~\ref{thm:main}}
\label{app:analysis}

Here we will give a concrete theoretical analysis of the costs associated to the $\alpha$-near neighbor graph, where for each pair of vertices (points) in the uniformly random data set $\cD$, we connect them by an edge if their joint inner product is at least $\alpha > 0$.

\subsection{Small steps towards the goal}

First, we want to show that with high probability, when starting from a random vertex, we can easily find nearer neighbors until we reach a point $\vc{p}' \in \cD$ with $\ip{\vc{p}'}{\vc q} \geq \gamma_{max}$. This threshold $\gamma_{max}$ is determined by the density of the data set ($\mu$) and the density of the graph ($\alpha$), and when $\ip{\vc{p}'}{\vc q} > \gamma_{max}$ the probability of finding a nearer neighbor to $\vc{q}$ in the bucket of neighbors of $\vc{p}'$ becomes asymptotically negligible, i.e.\ exponentially small in $d$. In a sense a phase transition takes place at inner product $\gamma_{max}$, since from there on finding nearer neighbors becomes unlikely.

To formalize this, using the relation between probabilities on random data sets and volumes of (intersections of) spherical caps as defined in the preliminaries, we want to show that the volume of the wedge formed by the intersection of the blue and orange spherical caps in Figure~\ref{fig:sketch} scales as $1 / n$, so that, since we have $n$ i.i.d.\ uniformly random data points spread out over the sphere, one of them is likely to be contained in this intersection. Note that if a point lies in this intersection, it is both closer to the query point $\vc{q}$ (the orange cap) and it is a neighbor to $\vc{p}$ in the graph (the blue cap).

\begin{lemma}[High-probability reachable $\gamma$] \label{lem:gammax}
Let $\gamma_{max} = \gamma_{max}(\mu, \alpha)$ be defined as:
\begin{align}
\gamma_{max} = (1 + o(1)) \sqrt{\frac{\mu^2 - \alpha^2}{1 - 2\alpha + \mu^2}} \, . \label{eq:gammax}
\end{align}
Then $W(\alpha, \gamma_{max}, \gamma_{max}) \geq d^{\Theta(1)} C(\mu)$.
\end{lemma}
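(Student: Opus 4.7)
The plan is to solve for $\gamma_{max}$ by balancing the exponential parts of the wedge volume against those of $C(\mu)$. The two key inputs are the closed-form expression for $W(\alpha, \gamma, \gamma)$ from Lemma~\ref{lem:wedge} and the fact that $n = 1/C(\mu)$, so we only need $n \cdot W \geq d^{\Theta(1)}$, i.e., the bases of the $d/2$-th powers must match up to polynomial slack.

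First, I would assume we are in the first branch of Lemma~\ref{lem:wedge}, i.e., $\gamma^2 \leq \alpha$, which lets me write
\begin{align}
W(\alpha, \gamma, \gamma) = d^{\Theta(1)} \cdot \left(\frac{1 - \alpha^2 - 2\gamma^2(1 - \alpha)}{1 - \gamma^2}\right)^{d/2}, \qquad C(\mu) = d^{\Theta(1)} \cdot (1 - \mu^2)^{d/2}.
\end{align}
Up to polynomial factors in $d$, the desired bound $W(\alpha, \gamma, \gamma) \geq d^{\Theta(1)} C(\mu)$ therefore reduces to the inequality
\begin{align}
\frac{1 - \alpha^2 - 2\gamma^2(1 - \alpha)}{1 - \gamma^2} \geq 1 - \mu^2,
\end{align}
and this becomes equality precisely when $\gamma^2$ satisfies it with equality.

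Second, I would clear denominators and solve for $\gamma^2$: expanding gives $\mu^2 - \alpha^2 = \gamma^2 (1 - 2\alpha + \mu^2)$, hence
\begin{align}
\gamma^2 = \frac{\mu^2 - \alpha^2}{1 - 2\alpha + \mu^2},
\end{align}
which matches the claimed expression for $\gamma_{max}^2$. The $(1+o(1))$ factor in the definition of $\gamma_{max}$ is absorbed by the polynomial slack lemma (Lemma~\ref{lem:slack}): a perturbation of $\gamma$ of order $1/d$ only multiplies the wedge volume by $d^{\pm\Theta(1)}$, which can be safely swallowed into the $d^{\Theta(1)}$ prefactor in the statement.

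Finally, I would verify the case hypothesis used above. The first-branch condition $\gamma_{max}^2 \leq \alpha$ rearranges to $(1-\alpha)(\mu^2-\alpha) \leq 0$, which holds whenever $\alpha \geq \mu^2$. This is automatic in the natural regime where $\alpha \geq \mu$ (the graph is at least as dense as the ambient nearest-neighbor distance), matching the later assumption $\alpha = \kappa\mu$ with $\kappa \in (\sqrt{(\gamma^*)^2/(1+(\gamma^*)^2)},1)$. The main ``obstacle'' is really just keeping the algebra straight and checking this side condition; there is no deeper difficulty, as the result is essentially the unique value of $\gamma$ at which the wedge's exponential rate matches that of a single cap of height $\mu$.
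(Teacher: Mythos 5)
Your derivation is essentially the paper's own proof, carried out in full: you plug the two equal arguments into the first branch of Lemma~\ref{lem:wedge}, observe the cancellation $1-\alpha^2-2\gamma^2(1-\alpha)$, and solve $\frac{1-\alpha^2-2\gamma^2(1-\alpha)}{1-\gamma^2}=1-\mu^2$ to get $\gamma_{max}^2=\frac{\mu^2-\alpha^2}{1-2\alpha+\mu^2}$, with Lemma~\ref{lem:slack} absorbing the $(1+o(1))$ slack --- the paper's proof is exactly this ``write out the core part and watch the terms cancel'' computation, just left implicit. The one slip is in your verification of the branch condition: you correctly reduce $\gamma_{max}^2\leq\alpha$ to $(1-\alpha)(\mu^2-\alpha)\leq 0$, i.e.\ $\alpha\geq\mu^2$, but then justify it by claiming the natural regime is $\alpha\geq\mu$ and that this ``matches'' $\alpha=\kappa\mu$ with $\kappa<1$. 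That is backwards twice over: with $\kappa<1$ one has $\alpha=\kappa\mu<\mu$ (and larger $\alpha$ means a \emph{sparser} graph, not a denser one). The condition that actually holds is the weaker $\alpha\geq\mu^2$, which in the sparse regime follows because $\kappa$ is a constant while $\mu=o(1)$ (so $\kappa\mu\geq\mu^2$ for large $d$), and in the dense sieving example because the relevant range $\alpha\in(0.41,0.5)$ lies above $\mu^2=1/4$. With that correction the side condition is fine and the argument goes through as stated.
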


\begin{proof}
This is a matter of writing out the ``core part'' of $W(\alpha, \gamma_{max}, \gamma_{max})$, observing that many terms cancel due to the two equal arguments and the relation between $\gamma_{max}$ and $\alpha$, until finally we are left with $C(\mu)$, up to polynomial factors.
\end{proof}

For the next lemma, we make a slight modification to the query algorithm as well; rather than greedily searching for \textit{any} vector $\vc{p}' \in \B(\vc{p})$ which is closer to $\vc{q}$ than $\vc{p}$, we only look for vectors which make ``substantial progress'', i.e.\ for which $\ip{\vc{p}'}{\vc{q}} > \ip{\vc{p}}{\vc{q}} + \eps$ for some well-chosen $\eps$. With $\eps = 0$, the number of iterations to ultimately reach the correct nearest neighbor could theoretically be as large as $n$, while choosing $\eps$ too large means we will not easily make progress. By choosing $\eps = \frac{1}{d}$ we guarantee that only a linear number of steps is required to either reach our goal or terminate without success, and this choice of $\eps$ does not come at a significant asymptotic cost in the performance (essentially due to Lemma~\ref{lem:slack}).

\begin{lemma}[Collisions for nearer neighbors] \label{lem:small}
Let $\vc{p}, \vc{q}$ be fixed with $\ip{\vc{p}}{\vc{q}} + \frac{1}{d} < \gamma_{max}$ with $\gamma_{max}$ as defined in~\eqref{eq:gammax}. Then:
\begin{align}
p_1 = %\Pr_{\substack{\vc{p} \sim \U(\C_{\vc{q}, \gamma}) \\ \vc{p}' \sim \U(\cS^{d-1})}} 
\Pr_{\vc{p}' \sim \U(\cS^{d-1})} \Big(\ip{\vc{p}'}{\vc{p}} \geq \alpha \text{ and } \ip{\vc{p}'}{\vc q} \geq \ip{\vc{p}}{\vc q} + \tfrac{1}{d}\Big) \geq \frac{\log (100 d)}{n}\, .
\end{align}
It follows that $p_n = \Pr(\exists \vc{p}' \in \B(\vc{p}): \ip{\vc{p}'}{\vc q} \geq \ip{\vc{p}}{\vc q} + \frac{1}{d}) \geq 1 - \frac{0.01}{d}$ for large $d$.
\end{lemma}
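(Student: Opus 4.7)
The plan is to convert $p_1$ into a spherical wedge volume and then lower-bound it using the machinery of Section~\ref{sec:pre}; the key point is that $\gamma_{max}$ is designed so that $W(\alpha, \gamma_{max}, \gamma_{max})$ matches $C(\mu) = 1/n$ up to polynomial factors in $d$, which will leave a $1/n$-scale lower bound after some polynomial slack is paid. Setting $\gamma := \ip{\vc{p}}{\vc{q}}$, the cap/wedge probability dictionary gives $p_1 = W(\alpha, \gamma + \tfrac{1}{d}, \gamma)$, so the entire first claim reduces to lower-bounding this wedge.

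First I would apply Lemma~\ref{lem:slack} to peel off the $1/d$ perturbation in the second argument: $W(\alpha, \gamma + \tfrac{1}{d}, \gamma) = d^{-\Theta(1)} \cdot W(\alpha, \gamma, \gamma)$. Next, item~2 of the wedge properties lemma says $W(\alpha, \beta, \beta)$ is decreasing in $\beta$, and since $\gamma < \gamma_{max} - \tfrac{1}{d} < \gamma_{max}$ by hypothesis, this yields $W(\alpha, \gamma, \gamma) \geq W(\alpha, \gamma_{max}, \gamma_{max})$. Finally, Lemma~\ref{lem:gammax} supplies $W(\alpha, \gamma_{max}, \gamma_{max}) \geq d^{\Theta(1)} \cdot C(\mu) = d^{\Theta(1)}/n$, using $n = 1/C(\mu)$ in our model. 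Chaining the three estimates gives $p_1 \geq d^{\Theta(1)}/n$, and since any positive polynomial in $d$ eventually dominates $\log(100 d)$, the desired bound $p_1 \geq \log(100 d)/n$ holds for all sufficiently large $d$.

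For $p_n$, I would use that the $n$ points of $\cD$ are independent of $\vc{p}$ and, up to the single planted neighbor, i.i.d.\ uniform on $\cS^{d-1}$. Hence there are at least $n-1$ independent Bernoulli trials with success probability at least $p_1$, and a standard estimate gives
\begin{align}
p_n \;\geq\; 1 - (1 - p_1)^{n-1} \;\geq\; 1 - \exp\!\bigl(-(n-1)\, p_1\bigr) \;\geq\; 1 - (100 d)^{-(n-1)/n} \;\geq\; 1 - \tfrac{0.01}{d},
\end{align}
where the last inequality absorbs the $(n-1)/n$ factor in the exponent using that $n$ is large.

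The main obstacle will be verifying that the triple $(\alpha, \gamma, \gamma_{max})$ stays inside the ``generic'' regime $\gamma < \min\{\alpha/\beta, \beta/\alpha\}$ required by both Lemma~\ref{lem:slack} and the wedge monotonicity properties. A direct check against the formula~\eqref{eq:gammax} for $\gamma_{max}$ should show that this holds throughout the meaningful parameter range of the algorithm (essentially whenever $\alpha$ is not so large that $\B(\vc{p})$ reduces to a trivial cap); if an argument ever slips into one of the degenerate cases of Lemma~\ref{lem:wedge}, the wedge collapses to a plain spherical cap and the bound becomes even easier to establish directly.
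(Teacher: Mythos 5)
Your overall route is the same as the paper's: identify $p_1$ with a wedge volume $W(\alpha, \gamma + \tfrac{1}{d}, \gamma)$, remove the $\tfrac{1}{d}$ perturbation via Lemma~\ref{lem:slack}, push down to $W(\alpha, \gamma_{max}, \gamma_{max})$ using the monotonicity of $W(\alpha,\beta,\beta)$ in $\beta$, invoke Lemma~\ref{lem:gammax}, and finish $p_n$ by independence. The genuine gap is in the step where you chain the three estimates and assert $p_1 \geq d^{\Theta(1)}/n$ with an implicitly \emph{positive} power of $d$, so that ``any positive polynomial dominates $\log(100d)$.'' What the chain actually yields is $p_1 \geq d^{-\Theta(1)} \cdot W(\alpha,\gamma_{max},\gamma_{max}) \geq d^{-\Theta(1)} \cdot d^{\Theta(1)} \cdot C(\mu)$: a competition between an unspecified polynomial \emph{loss} (from Lemma~\ref{lem:slack}, plus the polynomial prefactors hidden in the cap and wedge volume estimates) and an unspecified polynomial gain, and nothing in the cited lemmas tells you which exponent wins; indeed the paper itself records the pessimistic net outcome $p_1 \geq d^{-\Theta(1)}/n$. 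The missing ingredient --- which is the actual content of the second half of the paper's proof --- is an argument that these exponents are explicit and tunable: the hidden prefactors of $C$ and $W$ can be determined exactly, the step size $\tfrac{1}{d}$ may be replaced by $1/d^{\beta}$ to control the slack loss at only a $\polylog(d)$ cost in the target bound, and, crucially, the $(1+o(1))$ slack in the definition~\eqref{eq:gammax} of $\gamma_{max}$ can be chosen so that the inequality of Lemma~\ref{lem:gammax} gains polynomial factors in the right direction. Without some such tuning of parameters, $p_1 \geq \log(100d)/n$ does not follow from the three lemmas as stated.

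A second, minor point: in your $p_n$ chain the last inequality is reversed. Since $(100d)^{-(n-1)/n} = (0.01/d)\cdot(100d)^{1/n} > 0.01/d$, using $n-1$ trials only gives $p_n \geq 1 - (0.01/d)(100d)^{1/n}$, slightly weaker than claimed. This is cosmetic --- $(100d)^{1/n} = 1 + o(1)$, so a marginally stronger bound on $p_1$ (or arguing with all $n$ points, as the paper does) repairs it --- but as written the final inequality is false.
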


\begin{proof}
Let $\gamma = \ip{\vc{p}}{\vc q}$. First, observe that the stated probability $p_1$ equals the mass of a spherical wedge with parameters $\alpha$, $\gamma + \frac{1}{d}$ and $\gamma$. Since $\gamma + \tfrac{1}{d} \leq \gamma_{max}$, we can apply Lemma~\ref{lem:slack}:
\begin{align}
p_1 = W(\alpha, \gamma + \tfrac{1}{d}, \gamma) &\stackrel{(a)}{\geq} d^{-\Theta(1)} W(\alpha, \gamma + \tfrac{1}{d}, \gamma + \tfrac{1}{d}) \geq d^{-\Theta(1)} W(\alpha, \gamma_{max}, \gamma_{max}).
\end{align}
Using the previous lemma, we then obtain:
\begin{align}
p_1 \geq d^{-\Theta(1)} W(\alpha, \gamma_{max}, \gamma_{max}) \stackrel{(b)}{\geq} d^{-\Theta(1)} C(\mu) = \frac{d^{-\Theta(1)}}{n}. % \ {\color{red}\geq} \ \frac{\log 100d}{n} \, 
\end{align}
Now, note that the leading polynomial factor in the lower bound for $p_1$ comes from the hidden polynomial term in the volumes of spherical caps and intersections of spherical caps (where the exponent in $d^{\Theta(1)}$ is fixed and can be explicitly determined, using techniques as in~\cite[Lemma 4.1]{micciancio10b}), and the loss factor in Lemma~\ref{lem:slack} (see (a), where the exponent of the polynomial term can be tuned by slightly changing the slack to e.g.\ $1/d^{\beta}$ instead, for larger $\beta$\footnote{Note that changing the slack parameter from $1/d$ to $1/d^{\beta}$ for some constant $\beta > 1$ will ultimately change the required bound for the next subsection to be $p_1 \geq \log(100 d^{\beta}) / n$; a change of $\poly(d)$ here to the slack parameter only leads to a $\polylog(d)$ change there. It is therefore certainly possible to choose $\beta$ sufficiently large to make both this proof and the next work with low overhead.}. Furthermore, if necessary, we can introduce some ``slack'' in $\gamma_{max}$ as well (i.e. add a multiplicative factor $1 + \tfrac{1}{d}$ to $\gamma_{max}$), so that inequality (b) introduces polynomial factors in the ``right'' direction. By choosing these parameters appropriately, we can therefore tune the (log-)polynomial term in the right hand side to obtain e.g.\ $\log(100d)/n$. 

Finally, observe that if $p_1 \geq \log(100 d) / n$, then it follows that $p_n = 1 - (1 - p_1)^n \geq 1 - (1 - \frac{\log 100d}{n})^n \geq 1 - 0.01/d$ by independence of all (other) points in the data set.
\end{proof}

\begin{lemma}[Convergence towards $\gamma_{max}$]
Starting from an arbitrary $\vc{p} \in \cD$, with probability at least $0.98$ we will find a neighbor $\vc{p}'$ with $\ip{\vc p'}{\vc q} + \frac{1}{d} \geq \gamma_{max}$ within $2 d$ steps in the graph as in Algorithm~\ref{alg:query} (without starting over).
\end{lemma}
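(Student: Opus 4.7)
The plan is a straightforward iteration of Lemma~\ref{lem:small} with a union bound. First, observe that whenever the for-loop in Algorithm~\ref{alg:query} succeeds in updating $\vc{p} \gets \vc{p}'$, the inner product $\ip{\vc p}{\vc q}$ strictly increases by at least $\tfrac{1}{d}$. Starting from an arbitrary $\vc{p}_0 \in \cD$ with $\ip{\vc p_0}{\vc q} \geq -1$, and since $\gamma_{max} \leq 1$, the walk can make at most $\lceil (1+\gamma_{max})\, d \rceil \leq 2d$ such successful iterations before crossing the threshold $\ip{\vc p}{\vc q} + \tfrac{1}{d} \geq \gamma_{max}$. So it suffices to show that each of these (at most) $2d$ intermediate steps produces a nearer neighbor with sufficiently high probability.

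Next, I would apply Lemma~\ref{lem:small} separately at each intermediate step. At every step where the current $\vc{p}$ still satisfies $\ip{\vc p}{\vc q} + \tfrac{1}{d} < \gamma_{max}$, the lemma guarantees that (over the randomness of the unvisited data points) $\B(\vc{p})$ contains a vector $\vc{p}'$ with $\ip{\vc p'}{\vc q} \geq \ip{\vc p}{\vc q} + \tfrac{1}{d}$ with probability at least $1 - \tfrac{0.01}{d}$. Taking a union bound over the at most $2d$ such steps, the probability that any one step fails to produce progress is at most $2d \cdot \tfrac{0.01}{d} = 0.02$. Hence, with probability at least $0.98$, every step succeeds and the walk reaches a vertex $\vc{p}'$ with $\ip{\vc p'}{\vc q} + \tfrac{1}{d} \geq \gamma_{max}$, which by inspection of the algorithm means the while-loop does not invoke the restart line~\ref{lin:restart}.

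The one subtlety, which I expect to be the main technical hurdle, is the conditioning. Lemma~\ref{lem:small} is stated for fixed $\vc{p}, \vc{q}$ with the candidates for $\B(\vc{p})$ being i.i.d.\ uniform on $\cS^{d-1}$, whereas during the walk the current vertex is itself a data point and previous steps have implicitly conditioned the history (the outcomes of earlier for-loops, the fact that certain points were or were not in certain wedges, etc.). However, the walk only reveals at most $2d$ data points out of $n = 2^{o(d)}$, and the conditioning events cut out only subsets of polynomially small relative volume in $d$. This $d^{\Theta(1)}$ loss can be absorbed into the slack already available in the derivation of Lemma~\ref{lem:small}, for instance by tightening the $\tfrac{1}{d}$ slack parameter to $\tfrac{1}{d^\beta}$ for a suitably large constant $\beta$ (as the proof of that lemma already notes is possible via Lemma~\ref{lem:slack}). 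With this adjustment, the per-step success bound of $1 - \tfrac{0.01}{d}$ continues to hold uniformly throughout the walk, and the union bound above goes through unchanged, yielding the claimed $0.98$ success probability within $2d$ steps.
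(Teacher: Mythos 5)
Your proposal is correct and follows essentially the same route as the paper's own proof: at most $2d$ steps of size $\tfrac{1}{d}$ from inner product $\geq -1$ toward $\gamma_{max} \leq 1$, each succeeding with probability at least $1 - \tfrac{0.01}{d}$ by Lemma~\ref{lem:small}, and a union bound giving failure probability at most $0.02$. Your additional discussion of the conditioning issue along the walk is a point the paper's proof silently glosses over (it is only addressed informally in the randomization subsection), so flagging it and absorbing it into the slack parameter is a reasonable extra precaution, not a deviation in approach.
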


\begin{proof}
We start at a node $\vc{p} \in \cD$ with inner product $\ip{\vc p}{\vc q} \geq -1$, and we aim for an inner product $\ip{\vc p'}{\vc q} = \gamma_{max} \leq 1$, making steps of size $\frac{1}{d}$ each time. So clearly within $2 d$ iterations we will either reach our goal or fail. Since each step fails with probability at most $\frac{0.01}{d}$ by the previous lemma, by the union bound the probability that at least one step fails is at most $0.02$, which implies that all steps succeed with probability at least $0.98$.
\end{proof}

\subsection{A giant leap towards the nearest neighbor}

When, after making a number of small steps, we finally arrive at a vector $\vc{p} \in \cD$ with inner product approximately $\gamma_{max}$ with the query, we want to be sure that in the next step we will in fact find the (planted) nearest neighbor $\vc{p}^*$ with reasonable probability. In other words: if $\ip{\vc{p}}{\vc q} \approx \gamma_{max}$, and we therefore cannot easily find any points $\vc{p}' \in \cD \setminus \{\vc{p}^*\}$ anymore which are slightly closer to $\vc{q}$, then we will actually immediately find the exact nearest neighbor $\vc{p}^*$ in the bucket $\B(\vc{p})$ with a certain probability. Note that due to the data set being uniformly random, we may assume that the true (planted) nearest neighbor $\vc{p}^*$ to $\vc{q}$ follows a uniform distribution on the spherical cap $\C_{\vc{q}, \gamma^*}$, where $\gamma^* = \ip{\vc{p}^*}{\vc q}$.

\begin{lemma}[Giant leap success probability] \label{lem:giant}
Let $\vc{p} \in \C_{\vc{q}, \gamma}$ with $\gamma + \frac{1}{d} \geq \gamma_{max}$, and let the exact nearest neighbor to $\vc{q}$ lie at inner product $\gamma^*$ from $\vc{q}$. Then:
\begin{align}
p_g = \Pr_{\vc{p^*} \sim \U(\C_{\vc{q}, \gamma^*})} \Big(\vc{p}^* \in \B(\vc{p})\Big) \geq d^{-O(1)} \frac{W(\alpha, \gamma_{max}, \gamma^*)}{C(\gamma^*)} \, .
\end{align}
\end{lemma}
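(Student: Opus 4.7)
My plan is to compute $p_g$ directly as a ratio of two spherical volumes and then use the monotonicity of $W$ together with the polynomial slack lemma to replace the actual inner product $\gamma$ between $\vc{p}$ and $\vc{q}$ by the threshold $\gamma_{max}$. To start, unwinding the definitions, the event $\vc{p}^* \in \B(\vc{p})$ is exactly the event $\ip{\vc{p}^*}{\vc{p}} \geq \alpha$, i.e.\ $\vc{p}^* \in \C_{\vc{p}, \alpha}$. Since $\vc{p}^* \sim \U(\C_{\vc{q}, \gamma^*})$, uniform measure on the spherical cap gives
\begin{align*}
p_g \;=\; \frac{\mathrm{Vol}(\C_{\vc{p}, \alpha} \cap \C_{\vc{q}, \gamma^*})}{\mathrm{Vol}(\C_{\vc{q}, \gamma^*})} \;=\; \frac{W(\alpha, \gamma^*, \gamma)}{C(\gamma^*)},
\end{align*}
where $\gamma = \ip{\vc{p}}{\vc{q}}$. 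This is just the picture on the right of Figure~\ref{fig:sketch}.

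Next, I would remove the dependence on the actual $\gamma$. By the Wedge properties lemma, $W(\alpha, \gamma^*, \cdot)$ is monotonically increasing in its third argument, so from $\gamma \geq \gamma_{max} - \tfrac{1}{d}$ we obtain $W(\alpha, \gamma^*, \gamma) \geq W(\alpha, \gamma^*, \gamma_{max} - \tfrac{1}{d})$. Then an application of Lemma~\ref{lem:slack} (polynomial slack) yields $W(\alpha, \gamma^*, \gamma_{max} - \tfrac{1}{d}) \geq d^{-\Theta(1)} \cdot W(\alpha, \gamma^*, \gamma_{max})$. Combining these inequalities gives $p_g \geq d^{-O(1)} \cdot W(\alpha, \gamma^*, \gamma_{max}) / C(\gamma^*)$, which is the claimed bound (the ordering of the last two arguments of $W$ in the statement versus what comes out of the geometric computation is just a matter of convention for labelling which of the two caps one is conditioning on).

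\textbf{Main obstacle.} The only real technicality is ensuring that we remain in the non-degenerate case of Lemma~\ref{lem:wedge}, namely $\gamma < \min\{\alpha/\gamma^*, \gamma^*/\alpha\}$, since both the monotonicity statement and the polynomial slack lemma are stated under this assumption. For parameters satisfying $\gamma \approx \gamma_{max}$ with $\gamma_{max}$ given by~\eqref{eq:gammax}, this condition has to be verified directly from the relation between $\alpha, \mu, \gamma^*$ set up in the intended regime (essentially, from $\alpha = \kappa \mu$ and $\gamma^* = 1 - 1/c^2$ in the sparse case and its dense analogue). If the condition ever fails, $W(\alpha, \gamma^*, \gamma)$ degenerates to $d^{\Theta(1)} \cdot C(\max\{\alpha, \gamma^*\})$, which is a constant function of $\gamma$ and which also upper-bounds $W(\alpha, \gamma^*, \gamma_{max})$ up to polynomial factors, so the desired inequality continues to hold. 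Apart from this boundary check, the proof is a clean two-line calculation: identify $p_g$ as a wedge-to-cap ratio, then apply monotonicity and polynomial slack to drop $\gamma$ down to $\gamma_{max}$.
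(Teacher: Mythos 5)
Your proof is correct and follows essentially the same route as the paper: identify $p_g$ as the wedge-to-cap ratio $W(\alpha, \gamma^*, \gamma)/C(\gamma^*)$ with $\gamma = \ip{\vc p}{\vc q}$, then use $\gamma + \tfrac{1}{d} \geq \gamma_{max}$ together with monotonicity in the third argument and the polynomial-slack lemma to pass to $W(\alpha, \gamma^*, \gamma_{max})$, which is exactly what the paper's (terser) proof asserts. One small note: the ordering issue you flag is not really a ``convention,'' since $W$ is not symmetric in its last two arguments; the quantity $W(\alpha, \gamma^*, \gamma_{max})$ you derive is the one used in Theorem~4, Appendix~\ref{app:sparse} and Figure~\ref{fig:sketch}, so the $W(\alpha, \gamma_{max}, \gamma^*)$ in the lemma statement should be read as a typo.
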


\begin{proof}
First, observe that the stated probability can be interpreted as: if we draw $\vc{p}^*$ uniformly at random from $\C_{\vc{q}, \gamma^*}$, what is the probability that it is also contained in the spherical cap $\C_{\vc{p}, \alpha}$? This probability corresponds to the ratio between the volume of the corresponding wedge $\W_{\vc{p}, \alpha, \vc{q}, \gamma^*}$ and the volume of the spherical cap $\C_{\vc{q}, \gamma^*}$, as illustrated in the sketch of Figure~\ref{fig:sketch}. Note that the inequality follows from the assumption that $\gamma + \frac{1}{d} \geq \gamma_{max}$; if we replaced this assumption by $\gamma = \gamma_{max}$, then (1) we would not need the leading $d^{-\Theta(1)}$ term, and (2) the inequality would be an equality. 
\end{proof}

As a special case, we observe that if $\alpha$ is sufficiently large (the graph is sufficiently dense), the success probability of Lemma~\ref{lem:giant} is actually non-negligible.

\begin{lemma}[High success probability] \label{lem:high}
For $\alpha \leq \gamma_{max} \cdot \gamma^*$ we have $p_g \geq d^{-O(1)}$.
\end{lemma}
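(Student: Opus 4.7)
}

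The plan is to reduce the claim to a direct comparison between the wedge volume $W(\alpha, \gamma_{max}, \gamma^*)$ and the cap volume $C(\gamma^*)$, using Lemma~\ref{lem:giant} to bridge the gap. Concretely, Lemma~\ref{lem:giant} already tells us that $p_g \geq d^{-O(1)} \cdot W(\alpha, \gamma_{max}, \gamma^*) / C(\gamma^*)$, so it suffices to show that this ratio is at least $d^{-O(1)}$ under the assumption $\alpha \leq \gamma_{max} \cdot \gamma^*$.

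First I would observe that the hypothesis $\alpha \leq \gamma_{max} \cdot \gamma^*$, together with $\gamma^* \leq 1$, forces $\alpha \leq \gamma_{max}$, so that $\alpha/\gamma_{max} \leq 1$. Rewriting the hypothesis as $\gamma^* \geq \alpha/\gamma_{max}$ then places the wedge $W(\alpha, \gamma_{max}, \gamma^*)$ squarely in the third case of Lemma~\ref{lem:wedge} (with the roles $\alpha \to \alpha$, $\beta \to \gamma_{max}$, $\gamma \to \gamma^*$). In this regime the wedge essentially degenerates to the smaller of the two spherical caps, giving
\begin{align}
W(\alpha, \gamma_{max}, \gamma^*) = d^{\Theta(1)} \cdot (1 - \gamma_{max}^2)^{d/2} = d^{\Theta(1)} \cdot C(\gamma_{max}).
\end{align}
Geometrically this just reflects that the cap around $\vc q$ of height $\gamma^*$ is already contained (up to a negligible boundary) in the cap around $\vc p$ of height $\alpha$, so intersecting with $\C_{\vc p, \alpha}$ does not reduce the mass.

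Next I would use that in the ANN setting relevant to the giant leap, the planted nearest neighbor is at least as close to the query as the ``stopping threshold'' we converge to, i.e.\ $\gamma^* \geq \gamma_{max}$. Since $C(\cdot)$ is a decreasing function of its argument, this gives $C(\gamma_{max}) \geq C(\gamma^*)$, and therefore
\begin{align}
\frac{W(\alpha, \gamma_{max}, \gamma^*)}{C(\gamma^*)} = d^{\Theta(1)} \cdot \frac{C(\gamma_{max})}{C(\gamma^*)} \geq d^{-O(1)}.
\end{align}
Plugging this into the bound from Lemma~\ref{lem:giant} then yields $p_g \geq d^{-O(1)}$, as claimed.

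The routine part is just invoking the correct case of Lemma~\ref{lem:wedge}; the only substantive point to be careful about is the implicit assumption $\gamma^* \geq \gamma_{max}$, which is guaranteed by the very definition of the giant-leap regime (we only reach such a $\vc p$ when no slightly-closer vectors remain, and the planted neighbor sits strictly closer to $\vc q$). If one wished to avoid this implicit assumption, one could instead track the ratio $(1-\gamma_{max}^2)/(1-(\gamma^*)^2)$ raised to $d/2$ and absorb any remaining exponential factor into the overall success probability, but in the intended range the above short argument suffices.
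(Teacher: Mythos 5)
Your proof reaches the right conclusion, but the central step evaluates the wrong wedge, and this is precisely where it diverges from the paper's (one-line) argument. In the convention of Lemma~\ref{lem:wedge} --- which is also the convention used in Figure~\ref{fig:sketch}, in the cost theorem, and throughout Appendix~\ref{app:sparse} --- the first two arguments of $W$ are cap heights and the third is the cosine of the angle between the cap centers. In the giant-leap configuration the two caps are $\C_{\vc p,\alpha}$ (the bucket) and $\C_{\vc q,\gamma^*}$ (where $\vc p^*$ lives), and their centers satisfy $\ip{\vc p}{\vc q} \approx \gamma_{max}$; the relevant quantity is therefore $W(\alpha,\gamma^*,\gamma_{max})$, and the order appearing in the statement of Lemma~\ref{lem:giant} is a transposition of the last two arguments. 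With the correct role assignment ($\beta \to \gamma^*$, $\gamma \to \gamma_{max}$), the hypothesis $\alpha \leq \gamma_{max}\gamma^*$ is exactly the condition $\alpha/\beta \leq \gamma$ of the third case of Lemma~\ref{lem:wedge}, which yields $W(\alpha,\gamma^*,\gamma_{max}) = d^{\Theta(1)}\, C(\gamma^*)$, so the ratio in Lemma~\ref{lem:giant} is $d^{\Theta(1)}$ and $p_g \geq d^{-O(1)}$ follows at once. That is the paper's proof, and it is also what your own geometric explanation says: ``intersecting with $\C_{\vc p,\alpha}$ does not reduce the mass'' of the cap around $\vc q$ means the wedge has volume $\approx C(\gamma^*)$. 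Your displayed formula $W = d^{\Theta(1)} C(\gamma_{max})$ contradicts that explanation, because you treated $\gamma_{max}$ as a cap height and $\gamma^*$ as the inter-center cosine.

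Two consequences of the swap deserve to be named. First, your route requires the additional hypothesis $\gamma^* \geq \gamma_{max}$, which is not part of the lemma and is not needed in the paper's argument; it does hold in the planted setting (from~\eqref{eq:gammax} one checks $\gamma_{max} \leq \mu$, the inequality reducing to $(\alpha - \mu^2)^2 \geq 0$, while $\gamma^* \geq \mu$ for a planted neighbor), but your appeal to ``the giant-leap regime'' is not a proof of it. Second, taken at face value your intermediate bound reads $p_g \geq d^{-O(1)}\, C(\gamma_{max})/C(\gamma^*)$, which is exponentially larger than $1$ whenever $\gamma^* > \gamma_{max}$ --- an impossible lower bound on a probability, and a clear signal that the wedge was mis-identified. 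In fairness, the transposed argument order originates in the statement of Lemma~\ref{lem:giant} itself; but a correct proof of Lemma~\ref{lem:high} must work with the geometric wedge $\W_{\vc p,\alpha,\vc q,\gamma^*}$, as the paper does, in which case neither the monotonicity of $C$ nor the comparison of $\gamma^*$ with $\gamma_{max}$ is needed.
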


\begin{proof}
This essentially follows from Lemma~\ref{lem:wedge}, which states that $W(\gamma_{max} \gamma^*, \gamma_{max}, \gamma^*) \approx C(\gamma^*)$. If we further decrease the first argument, the result still holds.
\end{proof}

\subsection{Randomization}
\label{sec:proofrand}

The previous analysis is based on doing one iteration of (1) sampling a random node, (2) performing a number of small steps, and (3) hoping that the solution is found as a near neighbor of the last node. For this process we showed that the algorithm succeeds with probability $r \geq 0.98 \cdot p_g$. Ultimately we would like to prove that repeating this process approximately $1/r$ times, starting over at a random new node in graph each time, leads to a constant success probability, even if $r$ is not constant. 

\begin{lemma}[Randomization] Let $r \geq 0.98 \cdot p_g$ denote the success probability of one ``tour'' through the graph, using Algorithm~\ref{alg:query}. Let the number of vertices in the graph encountered in one tour be at most $n^{1 - \eps} r$ with $\eps > 0$. Then for uniformly random data sets, performing $\Theta(1/r)$ random tours through the graph (starting from a random node each time) leads to a constant success probability of finding the solution in at least one of these tours.
\end{lemma}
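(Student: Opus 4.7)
The plan is to show that, on most realizations of the random data set $\cD$, a good fraction of starting vertices lead to a successful tour; then $\Theta(1/r)$ independent uniform starts hit such a vertex with constant probability. First, I would fix $\cD$ and $\vc{q}$ and observe that Algorithm~\ref{alg:query} (with the restart line removed) is deterministic given the graph and the starting vertex. Thus the set $S = S(\cD, \vc{q}) \subseteq \cD$ of starting vertices from which one tour recovers $\vc{p}^*$ is well-defined, and the hypothesis $r \geq 0.98\, p_g$ rewrites as $\expn_{\cD, \vc{q}}[|S|/n] \geq r$. Conditional on $\cD$, the $k$ tours started from independent uniform samples $\vc{p}_1, \dots, \vc{p}_k \sim \U(\cD)$ give i.i.d.\ Bernoulli$(|S|/n)$ success indicators.

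The core of the proof is to establish a concentration bound of the form $|S|/n \geq r/2$ with probability $1 - o(1)$ over the draw of $\cD$. The crucial ingredient is the bound that each tour visits at most $T := n^{1-\eps} r$ vertices, so the indicator $\mathbf{1}_{\vc{p} \in S}$ is a function of only the $\leq T$ data points encountered along the tour from $\vc{p}$. This locality should let one control the variance of $|S| = \sum_{\vc{p}} \mathbf{1}_{\vc{p} \in S}$: the covariance between two summands is nonzero only when the corresponding tours share a data point, and the tour-length bound makes such collisions rare on average. A second-moment (Chebyshev) or bounded-differences (McDiarmid / Efron--Stein) argument on the coordinates of $\cD$ should then deliver the desired concentration.

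Granted the concentration, I condition on the good event $|S|/n \geq r/2$. Since the $k$ starting vertices are drawn independently and uniformly from $\cD$, the probability that all $k$ tours fail is at most $(1 - r/2)^k \leq e^{-rk/2}$, which is a constant strictly less than $1$ for $k = \Theta(1/r)$. Combining with the $o(1)$ failure probability of the concentration step yields an overall constant success probability, as claimed.

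The main obstacle is the concentration step: bounding the covariances uniformly over all $\binom{n}{2}$ pairs of summands requires arguing that no single data point appears on too many distinct tours, which is subtle because the tour structure itself depends on $\cD$. The strict inequality $\eps > 0$ in the hypothesis $T \leq n^{1-\eps} r$ is exactly what is needed to push the resulting deviation probability to $o(1)$; a weaker bound of the form $T \leq n r$ would not suffice, and it is here that one sees why this particular quantitative tour-length bound appears in the statement of the lemma.
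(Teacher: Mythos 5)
There is a genuine gap, and it sits exactly where you flag it: the concentration step. Conditioning on $\cD$ and writing the per-dataset success fraction as $|S|/n$ is fine, and so is the final Bernoulli computation, but the claim $\Pr_{\cD}\bigl(|S|/n \geq r/2\bigr) = 1 - o(1)$ does not follow from $\expn[|S|/n] \approx r$ and is very likely false in the regime where the lemma is actually needed, namely $r \approx p_g = n^{-\Theta(1)}$. The reason is that the indicators $\mathbf{1}_{\vc{p} \in S}$ are not weakly correlated: whether a tour succeeds is decided at the final ``giant leap'' by whether $\vc{p}^*$ lies in the wedge $\W_{\vc{p}, \alpha, \vc{q}, \gamma^*}$ of the tour's endpoint $\vc{p}$, so every starting vertex whose greedy walk funnels to the same endpoint (or to endpoints with overlapping wedges) succeeds or fails simultaneously, driven by the single global random position of $\vc{p}^*$ inside $\C_{\vc{q}, \gamma^*}$. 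In the extreme (and not obviously atypical) case where most starts converge to few endpoints, $|S|/n$ is essentially bimodal: close to $0$ with probability $1 - O(p_g)$ over $\cD$, and large otherwise, so your good event has probability $o(1)$, not $1-o(1)$. Your proposed tools do not repair this: McDiarmid fails because moving the single point $\vc{p}^*$ can flip up to $n$ indicators, and the second-moment/Efron--Stein route needs precisely the ``no data point lies on too many tours'' property that you correctly identify as unproven and that depends on $\cD$ in an uncontrolled way. You also put the hypothesis on the tour size to a different use than it can support: bounding covariances is not what it is for.

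The paper's argument avoids conditioning on $\cD$ altogether. It runs the $\Theta(1/r)$ tours sequentially and, after each failed tour, (imaginarily) deletes all visited vertices and their buckets; the next tour then takes place on a subgraph whose points are fresh, independent uniform samples, so the success events of distinct tours are (approximately) independent, each with probability about $r$. The bound of $n^{1-\eps} r$ vertices per tour enters only to guarantee that the total number of deleted vertices over all $\Theta(1/r)$ tours is $\Theta(n^{1-\eps}) = o(n)$, so each residual subgraph still has $n(1-o(1))$ points and (by spherical symmetry) the same structure, and hence the same per-tour success probability; then $1 - (1-r)^{\Theta(1/r)} = \Theta(1)$ finishes the proof. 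In short, the independence across repetitions comes from exposing fresh randomness of the data set tour by tour, not from concentration of a per-instance success fraction, and your route would need a substantially new argument (controlling how many starts share endpoints, over the randomness of $\cD$) to go through.
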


\begin{proof}
Although starting over at a random node in the graph may not quite lead to independent results, intuitively it is clear that since the number of visited nodes in one tour through the graph is small (insignificant compared to the entire graph), a different walk on the same graph is unlikely to intersect with this previous walk. By making an imaginary adjustment to the query algorithm, where each time a tour fails, we remove all nodes and all neighbors visited in this tour, we guarantee that the next tour occurs in a fresh, new graph, where all nodes in this graph are independent from the nodes in the first tour through the graph. Since the total number of nodes visited by assumption is sublinear in $n$, removing all these previously encountered vertices does not significantly affect the number of remaining nodes in the consecutive subgraphs -- each such imaginary subgraph contains $n(1 - o(1))$ vertices. Also note that since the data set follows a spherically-symmetric distribution, the graph is not biased towards a subset of nodes, which guarantees that removing a small number of vertices does not drastically change the structure of the graph (e.g.\ reduce connectivity). This together shows that by starting over at a new vertex, we can increase the success probability until it becomes constant. 

Finally, observe that if the success probability in one independent trial is $r$, then the probability of success in at least one out of $\Theta(1/r)$ trials is at least $(1 - r)^{\Theta(1/r)} \geq \Theta(1)$; appropriately choosing the leading constant in the number of trials can for instance increase this success probability to $0.99$, without affecting the asymptotics.
\end{proof}

\subsection{Encountered vertices}

Next, let us perform a precise cost analysis of the near neighbor algorithms and the data structure. First, let us look at the number of vertices that Algorithm~\ref{alg:query} encounters in a single tour through the graph, before either finding a local minimum (false positive) or a global minimum (the solution).

\begin{lemma}[Bucket size] \label{lem:bucket}
Let $\alpha \in (0,1)$ satisfy $n \cdot C(\alpha) \gg 1$, and let $\vc{p} \in \cS^{d-1}$ be fixed. Then with overwhelming probability over the randomness of the (remaining) points in the data set $\cD$, the size of the bucket corresponding to $\vc{p}$ satisfies:
\begin{align}
|\B(\vc{p})| = O(n C(\alpha)).
\end{align}
\end{lemma}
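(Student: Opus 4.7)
The plan is to observe that $|\B(\vc{p})|$ is essentially a binomial random variable and then apply a Chernoff bound. Since $\vc{p} \in \cS^{d-1}$ is fixed and the remaining $n-1$ points of $\cD$ are i.i.d.\ uniform on the sphere, for each such $\vc{p}' \in \cD \setminus \{\vc{p}\}$ the event $\vc{p}' \in \B(\vc{p})$ is exactly the event $\ip{\vc{p}}{\vc{p}'} \geq \alpha$, which by definition of $C(\alpha)$ occurs independently with probability $C(\alpha)$. Hence $|\B(\vc{p})| \sim \bin(n-1, C(\alpha))$ (adding $\vc{p}$ itself if we include it is irrelevant up to an additive $1$), with expectation $\mu_B = (n-1)\, C(\alpha) = n\, C(\alpha) (1 - o(1))$.

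Next, I would invoke the standard multiplicative Chernoff bound: for any $\delta > 0$,
\begin{align}
\Pr\bigl[\, |\B(\vc{p})| \geq (1 + \delta)\, \mu_B \,\bigr] \leq \exp\!\left(-\frac{\delta^2 \mu_B}{2 + \delta}\right).
\end{align}
Taking for instance $\delta = 1$ yields $\Pr[\,|\B(\vc{p})| \geq 2\, \mu_B\,] \leq \exp(-\mu_B / 3)$. Since the hypothesis $n \, C(\alpha) \gg 1$ says precisely that $\mu_B \to \infty$, this tail probability tends to $0$, which already establishes $|\B(\vc{p})| = O(n\, C(\alpha))$ with probability $1 - o(1)$. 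In the typical regime of interest throughout the paper one actually has $n\, C(\alpha) = n^{\Theta(1)}$, in which case the failure probability is $\exp(-n^{\Theta(1)})$, i.e.\ negligible in $n$, justifying the phrase ``overwhelming probability.''

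There is no real obstacle here; the only thing to be slightly careful about is the interpretation of ``overwhelming probability'' relative to the parameter regime. If one wants a bound that is strong enough for a union bound over all $n$ vertices of the graph (as will be needed elsewhere in the analysis), one should note that as long as $n\, C(\alpha) = \omega(\log n)$, the same Chernoff bound gives failure probability $n^{-\omega(1)}$, which survives a union bound. Under the weaker assumption $n\, C(\alpha) \gg 1$ as stated in the lemma, the bound only holds per fixed $\vc{p}$, which is exactly what the statement claims.
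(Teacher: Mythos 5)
Your proposal is correct and follows essentially the same route as the paper: identify $|\B(\vc{p})|$ as a binomial random variable with mean $\approx n\,C(\alpha)$ (using independence of the uniformly random data points) and apply a standard concentration bound to rule out exceeding twice the mean. The paper appeals somewhat loosely to the Gaussian limit and ``standard tail bounds,'' whereas you invoke the multiplicative Chernoff bound explicitly and note the regime needed for a union bound over all vertices, which is if anything a more careful rendering of the same argument.
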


\begin{proof}
First, observe that each point in the data set is contained in the bucket $\B(\vc p)$ with probability $C(\alpha)$, and note that these probabilities are independent. The number of vectors in a bucket therefore follows a binomial distribution, and due to the assumption $n \cdot C(\alpha) \gg 1$ we know that we will not encounter a Poisson limit but a Gaussian limit, when the parameters increase. Using standard tail bounds on binomials/Gaussians, we know that the probability that $|\B(\vc p)|$ exceeds its expected value $n C(\alpha)$ by a factor $2$ is very unlikely.
\end{proof}

Using similar arguments, we know that the total number of nodes encountered in one tour through the graph is at most a factor $2 d$ larger (due to each tour consisting of at most $2 d$ steps with high probability).

\begin{lemma}[Encountered vertices in one tour]
Let $\vc{p}_0, \vc{p}_1, \dots, \vc{p}_k$ with $k \leq 2d$ denote the vertices visited in the path on the nearest neighbor graph in a single tour. Then with overwhelming probability over the randomness of the data set:
\begin{align}
\sum_{i=0}^k |\B(\vc{p}_i)| = O(d n C(\alpha)).
\end{align}
\end{lemma}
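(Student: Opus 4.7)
\medskip

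\noindent\textbf{Proof plan.} The idea is to combine the per-bucket bound of Lemma~\ref{lem:bucket} with the trivial observation that each tour has length at most $2d+1$, using a union bound to cover every possible vertex the walk could visit.

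First, I would apply Lemma~\ref{lem:bucket} to each fixed $\vc{p} \in \cD$: with overwhelming probability (say $1 - e^{-\Omega(nC(\alpha))}$, since $nC(\alpha) \gg 1$ puts us in the Gaussian tail regime and the binomial concentration gives super-polynomially small failure probability), $|\B(\vc{p})| \leq c_0 \cdot n C(\alpha)$ for some absolute constant $c_0$. Since the randomness in $|\B(\vc{p})|$ comes from the remaining $n-1$ data points, and the failure probability of the concentration inequality is exponentially small, a union bound over all $n$ vertices $\vc{p} \in \cD$ still leaves us with overwhelming probability that every bucket in the data structure has size at most $c_0 \cdot nC(\alpha)$ simultaneously.

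Once we condition on this high-probability event, the bound becomes essentially deterministic: regardless of which vertices $\vc{p}_0, \dots, \vc{p}_k$ the greedy walk visits (and regardless of how they depend on the randomness of $\cD$ and $\vc{q}$), each of them is a vertex of the graph, so each $|\B(\vc{p}_i)| \leq c_0 \cdot nC(\alpha)$. Using $k \leq 2d$ as given in the statement (which, in turn, is guaranteed with overwhelming probability by the ``Convergence towards $\gamma_{max}$'' lemma together with the $1/d$ slack parameter ensuring at most $2d$ inner-product increments), we get
\begin{align}
\sum_{i=0}^k |\B(\vc{p}_i)| \leq (2d+1) \cdot c_0 \cdot n C(\alpha) = O(d \cdot n C(\alpha)).
\end{align}

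The main (mild) obstacle is handling the dependence between the random path and the randomness defining the buckets: the vertices $\vc{p}_i$ are chosen adaptively based on $\vc{q}$ and $\cD$, so one cannot naively apply the single-bucket lemma only to a fixed sequence. The cleanest way around this is the union-bound step above, which decouples the path from the bucket-size event by simultaneously controlling all $n$ buckets in advance; the exponentially small per-bucket failure probability easily absorbs the factor of $n$ from the union bound, so the final event still holds with overwhelming probability.
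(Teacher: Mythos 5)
Your proposal is correct and follows essentially the same route as the paper, which simply invokes the per-bucket bound of Lemma~\ref{lem:bucket} and multiplies by the at most $2d$ steps per tour. Your explicit union bound over all $n$ buckets (justified since the per-bucket failure probability $e^{-\Omega(nC(\alpha))}$ easily dominates the factor $n$ in the paper's regime $nC(\alpha) = n^{\Theta(1)}$) is a welcome clarification of how the paper's ``similar arguments'' handle the adaptivity of the walk, but it is not a different proof.
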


Doing $\Theta(1/r)$ such tours, to obtain a constant success probability, further increases the number of encountered vectors by a factor $\Theta(1/r)$.

\subsection{Query complexity}

For the query complexity, we state two separate results; one for the costs of doing one greedy tour through the graph, and one for the total costs, when retrying several times in case the algorithm fails to find the solution immediately.

\begin{lemma}[Query complexity, without restarts]
For $\alpha \leq \gamma_{max} \gamma^*$, for large $d$ we will find the nearest neighbor with constant probability without restarts in time at most:
\begin{align}
\texttt{Time} = O(d^2 n C(\alpha)).
\end{align} 
\end{lemma}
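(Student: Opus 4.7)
The plan is to combine the per-tour length bound, the per-step work bound, and the two success-probability lemmas from the preceding subsections. First, I would invoke the Convergence-towards-$\gamma_{max}$ lemma: with probability at least $0.98$, the small-step phase reaches a vertex $\vc p$ with $\ip{\vc p}{\vc q} + \tfrac{1}{d} \geq \gamma_{max}$ within at most $2d$ iterations of the outer \textsf{while}-loop, so a single tour consists of at most $2d+1$ iterations before either finding $\vc p^*$ or terminating at a local minimum.

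Next I would control the work done at each iteration. At a visited vertex $\vc p_i$, the inner \textsf{for}-loop scans the bucket $\B(\vc p_i)$, performing one inner product $\ip{\vc p'}{\vc q}$ per element at cost $O(d)$. By Lemma~\ref{lem:bucket}, using the hypothesis $n C(\alpha) \gg 1$, each fixed bucket has size $O(nC(\alpha))$ with probability $1-o(1)$; a union bound over the $O(d)$ buckets actually visited on a tour preserves the bound simultaneously. Multiplying gives a total per-tour cost of
\[
(2d+1)\cdot O(nC(\alpha))\cdot O(d) \;=\; O(d^{2}\, n\, C(\alpha)),
\]
which is the stated time bound.

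For the success probability I would combine the two conditional events. The small-step phase succeeds with probability at least $0.98$. Conditional on reaching $\ip{\vc p}{\vc q} + \tfrac{1}{d} \geq \gamma_{max}$, Lemma~\ref{lem:high} handles the giant leap in the regime $\alpha \leq \gamma_{max}\gamma^*$: by Case~3 of Lemma~\ref{lem:wedge}, the wedge $\W_{\vc p,\alpha,\vc q,\gamma^*}$ coincides with the full cap $\C_{\vc q,\gamma^*}$ up to polynomial factors, so $\vc p^* \sim \U(\C_{\vc q,\gamma^*})$ lies in $\B(\vc p)$ with probability bounded below by a positive constant once the polynomial slack from Lemma~\ref{lem:slack} is absorbed by a slightly finer choice of the slack parameter in the step-size (as discussed in the footnote after Lemma~\ref{lem:small}). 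Multiplying the two conditional probabilities gives constant success for one tour at cost $O(d^{2} n C(\alpha))$.

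The main obstacle I expect is exactly that last point: converting the $p_g \geq d^{-O(1)}$ bound of Lemma~\ref{lem:high} into a genuinely $d$-independent constant, in the stated regime $\alpha \leq \gamma_{max}\gamma^*$. The resolution is to observe that under this inequality one has the containment $\C_{\vc q,\gamma^*} \subseteq \C_{\vc p,\alpha}$ up to an exponentially thin boundary strip, so the ratio $W(\alpha,\gamma^*,\gamma_{max})/C(\gamma^*)$ is not merely $d^{-O(1)}$ but in fact $\Omega(1)$; the polynomial factors in Lemmas~\ref{lem:wedge} and~\ref{lem:slack} can then be tuned away by selecting the slack exponent $\beta$ in Lemma~\ref{lem:small} large enough that no $d^{O(1)}$ overhead is paid in the step-count or comparison bound. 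Once this is verified, the time bound follows directly from the $(2d+1)\cdot O(nC(\alpha))\cdot O(d)$ product and no restarts are needed to achieve constant success.
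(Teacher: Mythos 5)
Your time bound is correct and follows essentially the same route as the paper: the paper's proof is simply the product of the encountered-vertices bound for one tour, $O(d\,n\,C(\alpha))$ vectors, with the $O(d)$ cost per inner-product comparison, and your $(2d+1)\cdot O(nC(\alpha))\cdot O(d)$ decomposition via the convergence lemma and Lemma~\ref{lem:bucket} is the same computation (your union bound over visited buckets is at the same level of rigor as the paper's own ``Encountered vertices in one tour'' statement). Where you genuinely go beyond the paper is the success probability: the paper's proof does not address it at all, and its Lemma~\ref{lem:high} only asserts $p_g \geq d^{-O(1)}$, which strictly speaking does not justify the word ``constant'' in the lemma; you correctly flag this and argue that in the regime $\alpha \leq \gamma_{max}\gamma^*$ one in fact has $p_g = \Omega(1)$. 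That sharpening is true, but your justification is imprecise: there is no literal containment $\C_{\vc{q},\gamma^*} \subseteq \C_{\vc{p},\alpha}$ up to a thin strip, since points on the boundary of $\C_{\vc{q},\gamma^*}$ can have inner product with $\vc{p}$ as small as $\gamma^*\gamma_{max} - \sqrt{(1-(\gamma^*)^2)(1-\gamma_{max}^2)} < \alpha$. The correct statement is measure-theoretic: conditioned on $\ip{\vc{p}^*}{\vc{q}} \approx \gamma^*$ (where the cap's mass concentrates), $\ip{\vc{p}^*}{\vc{p}}$ concentrates around $\gamma^*\gamma_{max}$ with roughly symmetric fluctuations of order $1/\sqrt{d}$, so for $\alpha < \gamma_{max}\gamma^*$ strictly the excluded relative mass is exponentially small, while at equality $\alpha = \gamma_{max}\gamma^*$ the conditional probability is only a constant close to $\tfrac12$ --- still $\Omega(1)$, so your conclusion stands, but the boundary case should be handled by this concentration argument rather than by containment.
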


\begin{proof}
This follows from the previous analyses of the number of vectors encountered on a single tour, together with the $O(d)$ cost of computing the inner product between a vector in the list and the query vector, to see if this vector is closer to the query than our current near neighbor estimate.
\end{proof}

\begin{theorem}[Query complexity, with restarts]
For arbitrary $\alpha \geq \gamma_{max} \gamma^*$, for large $d$ we will find the nearest neighbor with constant probability with restarts in time at most:
\begin{align}
\texttt{Time} = d^{O(1)} \frac{n C(\alpha) C(\gamma^*)}{W(\alpha, \gamma^*, \gamma_{max})} \, .
\end{align} 
\end{theorem}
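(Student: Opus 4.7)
The plan is to multiply the expected cost of a single greedy tour by the number of restarts required to reach a constant success probability. Both quantities have already been essentially assembled in the preceding lemmas, so the main task is to combine them cleanly and absorb all $d^{O(1)}$ slack into a single polynomial factor. Concretely, I would decompose $\texttt{Time}$ as $\texttt{Time} = T_{\mathrm{tour}} \cdot N_{\mathrm{tours}}$, where $T_{\mathrm{tour}}$ bounds the total work done on one random walk (including the cost of finally checking the bucket at the ``giant leap'' endpoint) and $N_{\mathrm{tours}}$ is the number of independent restarts prescribed by the randomization lemma.

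For $T_{\mathrm{tour}}$, I would reuse the per-tour encountered-vertices bound: by Lemma~\ref{lem:bucket}, each bucket visited has size $O(n\,C(\alpha))$, and since the small-steps analysis guarantees at most $2d$ hops before either reaching $\ip{\vc p}{\vc q} \approx \gamma_{max}$ or terminating, the total number of inner-product comparisons in one tour is $O(d \cdot n\,C(\alpha))$, each costing $O(d)$; hence $T_{\mathrm{tour}} = O(d^2 \cdot n\,C(\alpha))$. For the per-tour success probability $r$, I would combine the two events required for success: (i) the small-steps phase reaches a vertex $\vc p$ with $\ip{\vc p}{\vc q}+\tfrac{1}{d} \geq \gamma_{max}$, which by the convergence lemma happens with probability at least $0.98$, and (ii) in the final bucket lookup we encounter $\vc p^*$, which by Lemma~\ref{lem:giant} happens with probability at least $d^{-O(1)} W(\alpha,\gamma^*,\gamma_{max})/C(\gamma^*)$. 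Since these are independent events given the random data, $r \geq d^{-O(1)} W(\alpha,\gamma^*,\gamma_{max})/C(\gamma^*)$, and the randomization lemma then gives $N_{\mathrm{tours}} = \Theta(1/r) = d^{O(1)} C(\gamma^*)/W(\alpha,\gamma^*,\gamma_{max})$. Multiplying yields the stated bound, with all lower-order $d^{O(1)}$ factors absorbed.

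The only non-routine point, and the place where some care is required, is verifying that the randomization lemma can legitimately be applied in the regime $\alpha \geq \gamma_{max}\gamma^*$: one must check that the encountered-vertices bound $n^{1-\eps}\,r$ demanded in Section~\ref{sec:proofrand} is satisfied, i.e.\ that $d \cdot n\,C(\alpha) \ll n\cdot r$, which reduces to a mild condition on $W(\alpha,\gamma^*,\gamma_{max})/C(\gamma^*)$ not being too small relative to $C(\alpha)$ — in the working regime this follows from Lemma~\ref{lem:wedge} together with the assumption $n\,C(\alpha) \gg 1$. Beyond that, the proof is an exercise in bookkeeping: one invokes Lemma~\ref{lem:slack} to justify the $d^{-O(1)}$ slack introduced by the strict inequality $\gamma+\tfrac{1}{d}\geq \gamma_{max}$ in Lemma~\ref{lem:giant}, and a standard Chernoff/union bound over the (at most $d^{O(1)}/r$-many) restarts upgrades the high-probability bucket-size and tour-length bounds to hold simultaneously across all tours. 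Everything else is algebraic simplification.
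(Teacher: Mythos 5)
Your proposal is correct and follows essentially the same route as the paper: you bound the cost of a single tour by $O(d^2\, n\,C(\alpha))$ via the encountered-vertices/bucket-size lemmas, lower-bound the per-tour success probability by $r \geq 0.98\, p_g$ with $p_g$ from Lemma~\ref{lem:giant}, and invoke the randomization lemma to conclude that $\Theta(1/r)$ restarts give constant success probability, multiplying the two quantities and absorbing polynomial factors into $d^{O(1)}$. Your additional check that the hypothesis of the randomization lemma (the bound on vertices encountered per tour) holds in the regime $\alpha \geq \gamma_{max}\gamma^*$ is a point the paper leaves implicit, but it does not change the argument.
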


\begin{proof}
As explained in Section~\ref{sec:proofrand}, doing many tours through the graph, starting from uniformly random nodes in the graph each time, guarantees that the success probability increases; if the success probability of one tour is $r$, then repeating $\Theta(1/r)$ times leads to a constant success probability. Together with the previous lemma, stating the costs of a single tour, and Lemma~\ref{lem:giant}, stating how many such tours are required to achieve a good success probability, this immediately leads to the stated result.
\end{proof}

\subsection{Space complexity}

Next, let us consider how much memory is required to store the near neighbor graph data structure. To obtain tight bounds on the total number of edges in the graph, note that it does not suffice to only look at (potentially highly dependent) probabilities of edges between vertices; in the extreme case of having a random graph which with probability $C(\alpha)$ is the complete graph and with probability $1 - C(\alpha)$ is the empty graph, knowing the probability $C(\alpha)$ of having an edge between two vertices says little about the probability of having e.g.\ very few or no edges in the resulting graph. In our analysis we therefore need to take into account the geometric interpretation of the graph as well.

\begin{lemma}[Space complexity]
Let $\alpha \in (0,1)$ such that $n C(\alpha) \gg 1$. Then with probability at least $1/\sqrt{n}$, the entire data structure can be stored in space:
\begin{align}
\texttt{Space} = O(n^2 C(\alpha) \log n).
\end{align} 
\end{lemma}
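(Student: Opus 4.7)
The plan is to bound the total number of edges $E$ in the $\alpha$-near neighbor graph using the second moment method, then multiply by the per-entry pointer cost of $O(\log n)$ bits to obtain the claimed space bound. Writing $E = \sum_{i<j} X_{ij}$ with $X_{ij} = \mathbf{1}[\ip{\vc{p}_i}{\vc{p}_j} \geq \alpha]$, the expected value is immediate: by definition of $C(\alpha)$ and since the data set is uniform on $\cS^{d-1}$, $\expn[E] = \binom{n}{2} C(\alpha) = \Theta(n^2 C(\alpha))$.

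The heart of the argument, and what the author flags as needing ``geometric interpretation,'' is controlling the variance despite the fact that edge indicators sharing a vertex are not independent (recall the all-or-nothing toy example in the paragraph preceding the lemma). Here I would exploit the rotational invariance of the sphere to show that edge indicators are in fact pairwise \emph{uncorrelated}. Concretely, for three distinct indices $i, j, k$, conditioning on $\vc{p}_i$ makes $X_{ij}$ and $X_{ik}$ independent, and by rotational symmetry $\Pr(\ip{\vc{p}_i}{\vc{p}_j} \geq \alpha \mid \vc{p}_i) = C(\alpha)$ for every $\vc{p}_i \in \cS^{d-1}$, so
\[
\expn[X_{ij} X_{ik}] = \expn_{\vc{p}_i}\bigl[C(\alpha)^2\bigr] = \expn[X_{ij}] \cdot \expn[X_{ik}].
\]
For four distinct indices the same equality holds trivially by independence of the data points. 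Consequently all cross terms in the variance vanish and
\[
\Var(E) = \sum_{i<j} \Var(X_{ij}) = \tbinom{n}{2} C(\alpha)\bigl(1 - C(\alpha)\bigr) = O(n^2 C(\alpha)).
\]

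With mean and variance in hand, Chebyshev's inequality with threshold $t = \expn[E]$ gives
\[
\Pr\bigl(E \geq 2\expn[E]\bigr) \leq \frac{\Var(E)}{\expn[E]^2} = O\!\left(\frac{1}{n^2 C(\alpha)}\right) = o(1),
\]
using the hypothesis $n C(\alpha) \gg 1$; in particular this success probability is at least $1/\sqrt{n}$ for large $n$. Hence with at least the claimed probability, $E = O(n^2 C(\alpha))$. Storing the data structure amounts to recording, for each vertex $\vc{p}$, the indices of all $\vc{p}' \in \B(\vc{p})$, which uses $O(\log n)$ bits per entry; every edge contributes twice (once in each endpoint's bucket), so the total bit-complexity is $O(E \log n) = O(n^2 C(\alpha) \log n)$, as required.

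The main obstacle is precisely the dependency structure of edge indicators: without the geometric observation above, one cannot a priori rule out pathological concentration behaviour, and naive union bounds over $\binom{n}{2}$ events would be far too crude. The vanishing pairwise covariance, which follows cleanly from sphere symmetry, is the crucial ingredient; everything else (Markov/Chebyshev, pointer sizes, the bucket-double-count) is routine.
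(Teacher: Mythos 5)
Your proof follows essentially the same route as the paper's: the identical second-moment computation, with pairwise uncorrelatedness of edge indicators sharing a vertex established via rotational symmetry (the paper's observation that ``the two other points on these edges are independent and uniformly random on the sphere''), followed by Chebyshev and the $O(\log n)$ per-index storage cost; your direct threshold at $\expn[E]$ even yields a stronger $1 - o(1)$ guarantee than the lemma's stated $1/\sqrt{n}$. The only detail the paper includes that you omit is the closing remark that this edge-storage cost dominates the $O(dn)$ cost of storing the input vectors themselves (justified via $n C(\alpha) \gg 1$ and the fact that $d$ may be assumed small after dimension reduction), which is a minor point and does not affect the correctness of your argument.
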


\begin{proof}
In our algorithm, the undirected near neighbor graph $G = (\cD, E)$ consists of $n$ points, with a probability $C(\alpha)$ of an edge $e = \{\vc{p}'$ between two distinct vertices $\vc{p}, \vc{p}' \in \cD$. Due to the geometric interpretation of the points lying on the sphere, and edges representing closeness, we know that if $e, f \in \cD^2$, then $\Pr(e, f \in E) = C(\alpha)^2$ if $e, f$ are disjoint (contain no common endpoints), or share a single endpoint (since the two other points on these edges are independent and uniformly random on the sphere). Only for $e = f$ we have $C(\alpha) = \Pr(e, f \in E) \neq \Pr(e \in E) \Pr(f \in E) = C(\alpha)^2$. Letting $X$ denote the total number of edges in the graph, we know that 
\begin{align}
\Var(X) &= \sum_{e, f \in \binom{\cD}{2}} [\Pr(e, f \in E) - \Pr(e \in E) \Pr(f \in E)] = \sum_{e \in \binom{\cD}{2}} [C(\alpha) - C(\alpha)^2] \\
&= \frac{n (n - 1)}{2} \ C(\alpha) (1 - C(\alpha)).
\end{align}
Using Chebyshev's inequality~\cite[Theorem 4.1.1]{alon08}, we know that $\Pr(|X - \expn(X)| > \lambda \sqrt{\Var(X)}) \leq 1/\lambda^2$ for arbitrary positive $\lambda$. Substituting $\expn(X) = \Theta(n^2 C(\alpha))$ and $\Var(X) = \Theta(n^2 C(\alpha))$, we can for instance take $\lambda = \Theta(n^{1/4})$ (so that $\lambda \sqrt{\Var(X)} < \lambda \sqrt{\Var(X)} n C(\alpha) = o(\expn(X))$) to prove that with probability at most $1/\sqrt{n}$, the number of edges in the graph deviates from its expected value $\Theta(n^2 C(\alpha))$ by at most $\Theta(n^{5/4} \sqrt{C(\alpha)})$.

Finally, observe that each edge essentially represents a pair of integers in $[n]$, which can be represented in $2 \log n$ space. This leads to an extra factor $\log n$ in the space requirement. Due to the assumption that $n C(\alpha) \gg 1$, and the fact that $d$ cannot be too large\footnote{More precisely, using dimension reduction techniques such as the Johnson-Lindenstrauss lemma~\cite{johnson84}, we can always transform the data set beforehand to obtain an equivalent problem in dimension $d = O(\log n \log \log n)$.}, we further know that the above costs are the dominant costs, i.e.\ are higher than the costs of storing the input list of size $O(d n)$ in memory.
\end{proof}

\subsection{Update complexities}

Finally, let us consider the remaining costs for the near neighbor data structure, namely those related to updating the data structure when points are added to/deleted from the data set $\cD$. We state only the most straightforward costs based on Algorithms~\ref{alg:insert}--\ref{alg:delete}; better asymptotics could be obtained using e.g.\ locality-sensitive hashing to index the data structure, but that beats the purpose of using graph-based indexing techniques instead.

\begin{lemma}[Insertion complexity]
Given a data set $\cD$ of $n$ points indexed in an $\alpha$-near neighbor graph, adding a point to this graph (naively) takes time:
\begin{align}
\texttt{Insert} = O(d n).
\end{align}
\end{lemma}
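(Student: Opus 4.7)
The plan is to directly analyze the cost of Algorithm~\ref{alg:insert} line by line. The algorithm first initializes $\B(\vc{p}) \gets \emptyset$, which is $O(1)$, and then performs a single loop over all $\vc{p}' \in \cD$, so over exactly $n$ iterations. Within each iteration the algorithm computes the inner product $\ip{\vc p}{\vc p'}$ between two $d$-dimensional vectors, compares the result to $\alpha$, and (if the test passes) appends the new vector to both $\B(\vc{p})$ and $\B(\vc{p}')$. The inner product is the only non-trivial operation and costs $O(d)$; the comparison and both bucket insertions can be performed in $O(1)$ amortized time using a standard dynamic array representation for the adjacency buckets.

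Summing the per-iteration cost $O(d)$ over the $n$ iterations and adding the $O(1)$ initialization of $\B(\vc{p})$ gives the stated bound $\texttt{Insert} = O(dn)$. I would emphasize that this bound is deterministic and holds in the worst case, independent of $\alpha$ and of the random structure of the data set: Algorithm~\ref{alg:insert} unconditionally scans every existing point once, and the work of actually inserting $\vc{p}$ into the relevant buckets is always dominated by the $O(d)$ cost of the inner product. In the extreme case where $\vc{p}$ turns out to be an $\alpha$-neighbor of every point in $\cD$, the algorithm performs $O(n)$ bucket updates, but these are absorbed by the $O(dn)$ inner product work.

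There is no genuine obstacle here; the claim is an immediate consequence of inspecting the pseudocode. The only thing worth flagging is why the new edges do not need to be inserted into a more sophisticated global data structure — since each adjacency list $\B(\vc{p}')$ is maintained locally per vertex and the algorithm only touches the endpoints of each newly created edge, no auxiliary reindexing is required. Consequently, the naive cost bound $O(dn)$ is tight up to the model of arithmetic and cannot be improved without introducing additional preprocessing, which (as noted in the lemma statement) would defeat the purpose of a purely graph-based index.
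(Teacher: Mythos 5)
Your argument is correct and follows essentially the same route as the paper's proof: both simply observe that Algorithm~\ref{alg:insert} scans all $n$ existing points, each scan dominated by an $O(d)$ inner product, while the bucket insertions are negligible (or, as you note, absorbed into the $O(dn)$ term). No gaps.
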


\begin{proof}
The $O(d n)$ insertion complexity is obtained by going through all vertices in the graph, and seeing whether an edge needs to be added to the new vertex. Note that e.g.\ the costs of inserting the new vertex in the buckets of its near neighbors are asymptotically negligible compared to the $O(d n)$ cost for doing $n$ vector comparisons.
\end{proof}

\begin{lemma}[Deletion complexity]
Given a data set $\cD$ of $n$ points indexed in an $\alpha$-near neighbor graph, removing a point from this graph takes time:
\begin{align}
\texttt{Delete} = O(n C(\alpha) \log (n C(\alpha))).
\end{align}
\end{lemma}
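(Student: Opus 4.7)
The plan is to directly analyze Algorithm~\ref{alg:delete}, which consists of a single loop over the bucket $\B(\vc{p})$ of the vertex being deleted, together with an $O(1)$ cleanup step setting $\B(\vc{p}) \gets \emptyset$. The total work is therefore the size of $\B(\vc{p})$ multiplied by the cost of a single removal $\B(\vc{p}') \gets \B(\vc{p}') \setminus \{\vc{p}\}$, so the proof reduces to bounding these two quantities with high probability.

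First, I would invoke Lemma~\ref{lem:bucket} to bound $|\B(\vc{p})| = O(n C(\alpha))$ with overwhelming probability, using that $n C(\alpha) \gg 1$ puts us in the Gaussian (rather than Poisson) regime for the binomial counting the neighbors of $\vc{p}$. Second, I would observe that each bucket $\B(\vc{p}')$ touched by the algorithm is itself of size $O(n C(\alpha))$, again by Lemma~\ref{lem:bucket}; a union bound over the $O(n C(\alpha))$ neighbors of $\vc{p}$ only costs a polynomial factor on the failure probability, which remains negligible. Third, I would specify that the buckets are maintained as balanced search trees (or any ordered dictionary) keyed by vertex index, so that a single deletion takes $O(\log |\B(\vc{p}')|) = O(\log(n C(\alpha)))$ time; this is the natural data structure here since buckets must also support iteration (for queries) and membership updates (for inserts/deletes).

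Multiplying the two bounds yields the total cost
\begin{align}
O(n C(\alpha)) \cdot O(\log(n C(\alpha))) = O(n C(\alpha) \log(n C(\alpha))),
\end{align}
and the final $\B(\vc{p}) \gets \emptyset$ step is absorbed by this. The main (mild) obstacle is ensuring the high-probability bound is preserved through the union over all touched buckets, which is handled by taking slightly stronger tail estimates in Lemma~\ref{lem:bucket} so the exceptional probability is $n^{-\omega(1)}$; everything else is a straightforward accounting of the pseudocode.
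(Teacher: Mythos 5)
Your proposal is correct and follows essentially the same route as the paper's proof: bound $|\B(\vc{p})| = O(n C(\alpha))$ via Lemma~\ref{lem:bucket}, maintain each bucket as a sorted structure so that removing $\vc{p}$ from a neighbor's bucket costs $O(\log(n C(\alpha)))$, and multiply. Your extra care with the union bound over all touched buckets is a slight refinement of the paper's more informal accounting, but not a different argument.
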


\begin{proof}
Let us consider the costs of deleting a vertex, using a data structure where for each vertex, we store a (sorted) list of neighbors (if necessary using a binary search tree). We can easily guery the bucket corresponding to the vertex that needs to be deleted, and for each of the at most $O(n C(\alpha))$ neighbors in this bucket (by Lemma~\ref{lem:bucket}), (1) we remove this neighbor from the bucket of the to-be-deleted vertex, and (2) we remove the to-be-deleted vertex from the bucket corresponding to this neighbor. Note that since each bucket has size $O(n C(\alpha))$, removing the vertex from the neighbor's bucket requires $O(\log(n C(\alpha))$ time using a binary search on this data structure.
\end{proof}

%%WWWWWWWWWWWWWWWWWWWWWWWWWWWWWWWWWWWWWWWWWWWWWWWWWWWWWW
%%WWWWWWWWWWWWWWWWWWWWWWWWWWWWWWWWWWWWWWWWWWWWWWWWWWWWWW

\section{Asymptotics for sparse data sets}
\label{app:sparse}

Recall that Theorem~\ref{thm:main} states that the various costs of this data structure are as follows.
\begin{align}
\texttt{Time} &= d^{O(1)} n C(\alpha) C(\gamma^*) / W(\alpha, \gamma^*, \gamma_{max}), \\
\texttt{Space} &= O(n^2 C(\alpha) \log n), \\
\texttt{Insert} &= O(d n), \\
\texttt{Delete} &= O(n C(\alpha) \log (n C(\alpha))).
\end{align}
Here we will derive asymptotics for large $d$ and $n$ for \textit{sparse} data sets, of size $n = 2^{o(d)}$. As described in Section~\ref{sec:random}, we can represent $n$ in terms of the parameter $\mu$ through the relation $n = 1/C(\mu)$ (up to polynomial factors), with $\mu = o(1)$. We will set $\alpha = \kappa \cdot \mu$ for constant $\kappa \in (0,1)$, so that also $\alpha = o(1)$, and analyze what the asymptotics become as $d, n \to \infty$. Note that $\kappa < 1$ guarantees that $n C(\alpha) \gg 1$.

\begin{lemma}[Relation between $C(\alpha)$ and $C(\mu)$ or $n$] \label{lem:rewn}
Let $\alpha = \kappa \cdot \mu$ with $\mu = o(1)$ and $\kappa \in (0, 1)$ constant. Then, as $d \to \infty$:
\begin{align}
C(\alpha) = C(\mu)^{\kappa^2 + o(1)}.
\end{align}
Therefore, if $n = 1/C(\mu)$, then $C(\alpha) = n^{-\kappa^2 + o(1)}$.
\end{lemma}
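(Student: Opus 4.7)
\textbf{Proof plan for Lemma~\ref{lem:rewn}.} The plan is to reduce the identity $C(\alpha) = C(\mu)^{\kappa^2 + o(1)}$ to a comparison of logarithms using the spherical cap volume estimate $C(\alpha) = d^{\Theta(1)} \cdot (1-\alpha^2)^{d/2}$ from the preliminaries, and then to control the ratio by a short Taylor expansion. Writing $\alpha = \kappa \mu$ with $\kappa \in (0,1)$ constant and $\mu = o(1)$, I would take logarithms to obtain
\begin{align}
-\log C(\alpha) = \tfrac{d}{2} \cdot (-\log(1-\kappa^2 \mu^2)) - O(\log d), \qquad -\log C(\mu) = \tfrac{d}{2} \cdot (-\log(1-\mu^2)) - O(\log d).
\end{align}
Since $\mu \to 0$, I would Taylor-expand $-\log(1-x^2) = x^2 + \tfrac{1}{2} x^4 + \cdots = x^2 (1+o(1))$ uniformly in $x \in \{\mu, \kappa\mu\}$, yielding $-\log C(\alpha) = \tfrac{d}{2} \kappa^2 \mu^2 (1+o(1)) + O(\log d)$ and analogously for $-\log C(\mu)$.

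The key remaining step is to show that the $O(\log d)$ error terms are negligible compared to the leading $\tfrac{d}{2}\mu^2$ term, so that dividing gives exactly $\kappa^2 (1+o(1))$. For this I would invoke the standing assumption from the introduction that $n \geq 2^{\Omega(d/\log d)}$ (if $d$ were larger, a Johnson--Lindenstrauss projection would first reduce the effective dimension), combined with $n = 1/C(\mu)$ up to polynomial factors. This gives $\tfrac{d}{2} \mu^2 (1+o(1)) = \log n + O(\log d) = \omega(\log d)$, so both $-\log C(\alpha)$ and $-\log C(\mu)$ are dominated by their leading terms. Hence
\begin{align}
\frac{\log C(\alpha)}{\log C(\mu)} = \frac{\tfrac{d}{2}\kappa^2\mu^2(1+o(1)) + O(\log d)}{\tfrac{d}{2}\mu^2(1+o(1)) + O(\log d)} = \kappa^2 + o(1),
\end{align}
which rearranges to $C(\alpha) = C(\mu)^{\kappa^2 + o(1)}$. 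The second claim is immediate from $n = 1/C(\mu)$.

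The main (mild) obstacle is the bookkeeping around the polynomial prefactors $d^{\Theta(1)}$ in the cap volume formula: they disappear into the $O(\log d)$ additive term in the exponent, but one must justify that they are subsumed by the leading $\tfrac{d}{2}\mu^2$ term. This is precisely where the sparse regime assumption $n = 2^{o(d)}$ together with the lower bound $n \geq 2^{\Omega(d/\log d)}$ (available by the introductory reduction) is needed; without the lower bound one could in principle have $\mu^2 = O((\log d)/d)$ and the additive $O(\log d)$ slack would be of the same order as the leading term, and the Taylor approximation would no longer dominate. Everything else is a routine series expansion.
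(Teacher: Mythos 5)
Your proposal is correct and follows essentially the same route as the paper's proof: write out $C(\kappa\mu)$ and $C(\mu)$ via the cap-volume estimate, Taylor-expand $\ln(1-x^2) = -x^2(1+o(1))$ for small $x$, and compare exponents. The only difference is that you make explicit why the additive $O(\log d)$ from the polynomial prefactor is dominated by $\tfrac{d}{2}\mu^2$ (via the standing assumption $\log n = \Omega(d/\log d)$), a point the paper's proof absorbs silently into its $(1+o(1))$ factor.
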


\begin{proof}
This is essentially a matter of writing out $C(\alpha)$, using the Taylor expansion of $\ln(1 - x) = -x + O(x^2)$ for small $x$:
\begin{align}
C(\alpha) &= C(\kappa \mu) = d^{O(1)} (1 - \kappa^2 \mu^2)^{d/2} = \exp\left[\frac{d}{2} \ln(1 - \kappa^2 \mu^2) + O(\log d)\right] \\
&= \exp\left[\frac{d}{2} \left(- \kappa^2 \mu^2 + O(\mu^4)\right) + O(\log d)\right] = \exp\left[\frac{d \kappa^2}{2} (- \mu^2) (1 + o(1))\right] \\
&= \exp\left[\frac{d \kappa^2}{2} \ln(1 - \mu^2) (1 + o(1))\right] = C(\mu)^{\kappa^2 + o(1)}.
\end{align}
The relation $C(\alpha) = n^{-\kappa^2 + o(1)}$ then follows immediately as well.
\end{proof}

Using the previous result and $d^{O(1)} = n^{o(1)}$, we obtain the following asymptotics:
\begin{align}
\texttt{Time} &= n^{1 - \kappa^2 + o(1)} C(\gamma^*) / W(\alpha, \gamma^*, \gamma_{max}), \\
\texttt{Space} &= n^{2 - \kappa^2 + o(1)}, \\
\texttt{Insert} &= n^{1 + o(1)}, \\
\texttt{Delete} &= n^{1 - \kappa^2 + o(1)}.
\end{align}
What remains is obtaining asymptotics on $\gamma_{max}$, and expressing $C(\gamma^*)$ and $W(\alpha, \gamma^*, \gamma_{max})$ as powers of $n$, with the exponents being a function of $\kappa$ and $\gamma^*$. The next results follow from this somewhat tedious exercise, starting with $\gamma_{max}$.

\begin{lemma}[Asymptotically reachable $\gamma$] For $\alpha = \kappa \cdot \mu$ with constant $\kappa \in (0, 1)$:
\begin{align}
\gamma_{max} &= \mu \sqrt{1 - \kappa^2} + O(\mu^2). %\mu^2 \kappa \sqrt{1 - \kappa^2} + O(\mu^3)
\end{align}
\end{lemma}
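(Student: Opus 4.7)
The proof is essentially a substitution into equation~\eqref{eq:gammax} from Lemma~\ref{lem:gammax}, followed by a careful Taylor expansion in the small parameter $\mu = o(1)$. The plan is to recall that
\begin{align}
\gamma_{max} = (1 + o(1)) \sqrt{\frac{\mu^2 - \alpha^2}{1 - 2\alpha + \mu^2}},
\end{align}
substitute $\alpha = \kappa \mu$, and then extract the leading order in $\mu$ while controlling the error at the level $O(\mu^2)$.

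First I would handle the numerator, which after substitution simplifies neatly to $\mu^2 - \kappa^2 \mu^2 = \mu^2 (1 - \kappa^2)$, capturing the exact factor $\sqrt{1-\kappa^2}$ that will appear in the final answer. Next I would expand the denominator $1 - 2 \kappa \mu + \mu^2$ and observe that, since $\kappa$ is a fixed constant in $(0,1)$ and $\mu = o(1)$, this denominator equals $1 + O(\mu)$. Using the standard expansion $(1 + O(\mu))^{-1/2} = 1 + O(\mu)$, the square root of the ratio becomes $\mu \sqrt{1 - \kappa^2} \cdot \bigl(1 + O(\mu)\bigr) = \mu \sqrt{1 - \kappa^2} + O(\mu^2)$, since $\mu \sqrt{1 - \kappa^2}$ is itself of order $\mu$ and multiplying by $O(\mu)$ gives $O(\mu^2)$.

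Finally I would address the outer $(1 + o(1))$ factor from~\eqref{eq:gammax}. Because this prefactor multiplies an expression of order $\mu$, its $o(1)$ deviation contributes an error of size $o(\mu)$, which is dominated by (or can be absorbed into) the $O(\mu^2)$ term under the implicit convention that the $o(1)$ term there is actually of order $\mu$ or smaller (coming from the polynomial slack in Lemmas~\ref{lem:wedge}--\ref{lem:slack}). A clean way to phrase this is to keep the $(1+o(1))$ factor explicit for a moment, write $\gamma_{max} = (1 + o(1)) \bigl[\mu \sqrt{1-\kappa^2} + O(\mu^2)\bigr]$, and then consolidate into the final form $\mu \sqrt{1 - \kappa^2} + O(\mu^2)$.

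There is no real obstacle here, as the statement is a routine asymptotic expansion; the only point requiring a moment of care is checking that the $(1 + o(1))$ prefactor in~\eqref{eq:gammax} does not swamp the $O(\mu^2)$ claim, which amounts to observing that the $o(1)$ slack originating from the polynomial factors in the volumes of spherical caps and wedges decays faster than any polynomial in $\mu$ for the regime $\mu = o(1)$ with $d \to \infty$.
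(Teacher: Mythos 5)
Your proposal is correct and follows essentially the same route as the paper's own (one-line) proof: substitute $\alpha = \kappa\mu$ into the expression for $\gamma_{max}$ from Lemma~\ref{lem:gammax} and perform a series expansion around $\mu = 0$. Your extra care in checking that the $(1+o(1))$ prefactor does not overwhelm the $O(\mu^2)$ error term is a reasonable refinement of the same argument, consistent with how the paper absorbs such slack into its order terms downstream.
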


\begin{proof}
This result is easily obtained by starting from Lemma~\ref{lem:gammax}, substituting $\alpha = \kappa \mu$, and doing a series expansion around $\mu = 0$.
\end{proof}

With these asymptotics for $\gamma_{max}$, we can next derive asymptotics for $W(\alpha, \gamma_{max}, \gamma^*)$ in terms of $\mu, \kappa, \gamma^*$. Here we omit order terms which disappear for large $d$ (and small $\mu$), for clarity of exposition.

\begin{lemma}[Asymptotics for $W(\alpha, \gamma^*, \gamma_{max})$] \label{lem:W}
Let $\alpha = \kappa \cdot \mu$ with $\mu = o(1)$ and $\kappa \in (0, 1)$ constant, and let $\gamma_{max}$ as in Lemma~\ref{lem:gammax}. Then, as $d \to \infty$:
\begin{align}
\hspace{-0.5cm} W(\alpha, \gamma^*, \gamma_{max}) = \begin{cases} \left(\frac{1 -(\gamma^*)^2 - \mu^2 + 2 \gamma^* \mu^2 \sqrt{\kappa^2 \left(1 - \kappa^2\right)}}{1 - \left(1 - \kappa^2\right) \mu^2}\right)^{d/2 + o(d)} & \text{if } 0 < \sqrt{1 - \kappa^2} \mu \leq \min\left\{\frac{\kappa \mu}{\gamma^*}, \frac{\gamma^*}{\kappa \mu}\right\}; \\
(1 - \kappa^2 \cdot \mu^2)^{d/2 + o(d)} & \text{if } \frac{\gamma^*}{\kappa \mu} \leq \sqrt{1 - \kappa^2} \mu < 1; \\ 
(1 - (\gamma^*)^2)^{d/2 + o(d)} & \text{if } \frac{\kappa \mu}{\gamma^*} \leq \sqrt{1 - \kappa^2} \mu < 1. \end{cases}
\end{align}
\end{lemma}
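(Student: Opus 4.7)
The plan is to apply Lemma~\ref{lem:wedge} directly with the substitution $(\alpha,\beta,\gamma) = (\kappa\mu, \gamma^*, \gamma_{max})$, using the asymptotic $\gamma_{max} = \sqrt{1-\kappa^2}\,\mu + O(\mu^2)$ from the previous lemma. The three case conditions in the statement are obtained verbatim from the conditions $\gamma \leq \min\{\alpha/\beta,\beta/\alpha\}$, $\beta/\alpha \leq \gamma$, $\alpha/\beta \leq \gamma$ of Lemma~\ref{lem:wedge} after substitution. For cases two and three the results are immediate, since $(1-\alpha^2)^{d/2} = (1-\kappa^2\mu^2)^{d/2}$ and $(1-\beta^2)^{d/2} = (1-(\gamma^*)^2)^{d/2}$ do not depend on $\gamma_{max}$, and the polynomial prefactor $d^{\Theta(1)}$ from Lemma~\ref{lem:wedge} is absorbed into the $o(d)$ term in the exponent.

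The real work is the first case. Squaring the expansion of $\gamma_{max}$ gives $\gamma_{max}^2 = (1-\kappa^2)\mu^2 + O(\mu^3)$. Substituting into the numerator $1 - \alpha^2 - \beta^2 - \gamma^2 + 2\alpha\beta\gamma$, the contributions $\kappa^2\mu^2$ from $\alpha^2$ and $(1-\kappa^2)\mu^2$ from $\gamma_{max}^2$ combine into $\mu^2$, while the cross term expands as $2\kappa\mu\cdot\gamma^*\cdot\sqrt{1-\kappa^2}\,\mu + O(\mu^3) = 2\gamma^*\mu^2\sqrt{\kappa^2(1-\kappa^2)} + O(\mu^3)$. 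This matches the stated numerator up to an $O(\mu^3)$ additive error, and similarly $1 - \gamma_{max}^2 = 1 - (1-\kappa^2)\mu^2 + O(\mu^3)$ matches the stated denominator up to the same order.

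The one subtlety, and the main obstacle, is showing that the $O(\mu^3)$ corrections to the base do not blow up when raised to the $d/2$ power. Pulling them out as multiplicative factors $(1+O(\mu^3))$, one obtains $(1+O(\mu^3))^{d/2} = \exp(O(d\mu^3))$. The sparsity assumption $n = 1/C(\mu) = 2^{o(d)}$ translates (via $C(\mu) = d^{\Theta(1)}(1-\mu^2)^{d/2}$) to $d\mu^2 = \Theta(\log n) = o(d)$, so $d\mu^3 = \mu \cdot o(d) = o(d)$, and the correction is of order $\exp(o(d))$. Since the stated base is bounded away from $0$ and $1$, a multiplicative factor of $\exp(o(d))$ can be rewritten as $(\text{base})^{o(d)}$, which is exactly what the expression $(\text{base})^{d/2+o(d)}$ permits. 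The same bound also absorbs the leading $d^{\Theta(1)}$ prefactor coming from Lemma~\ref{lem:wedge}, completing the proof.
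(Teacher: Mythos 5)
Your proposal is correct and follows essentially the same route as the paper: substitute $(\alpha,\beta,\gamma)=(\kappa\mu,\gamma^*,\gamma_{max})$ into Lemma~\ref{lem:wedge}, use the expansion of $\gamma_{max}$ from the preceding lemma, and take the limit for small $\mu$ and large $d$. Your explicit control of the $O(\mu^3)$ corrections and the $d^{\Theta(1)}$ prefactor via $d\mu^2=o(d)$ is in fact more detailed than the paper's one-sentence argument, and is sound.
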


\begin{proof}
This is again a matter of writing out the given expression for $W(\alpha, \gamma^*, \gamma_{max})$ using Lemma~\ref{lem:wedge}, where we substitute $\alpha = \kappa \mu$, we substitute the asymptotics for $\gamma_{max}$ from the previous lemma, and we consider the limiting behavior for small $\mu$ and large $d$.
\end{proof}

The only cases of interest are those where $\gamma^*$ (the target inner product with the planted nearest neighbor) is larger than $\mu$ (the expected maximum inner product over $n$ random data points), which in turn is larger than $\kappa \cdot \mu$ (the query parameter; $\kappa \cdot \mu \approx 1$ would mean having only a small/constant number of neighbors per vertex in the graph). This means that the second case above never applies, since $\frac{\kappa \mu}{\gamma^*} \leq 1 \leq \frac{\gamma^*}{\kappa \mu}$ always holds. 

Next, let us consider the combination of the above expression with $C(\gamma^*)$, using the previous simplification to eliminate the second case:
\begin{align}
\hspace{-0.5cm} \frac{C(\gamma^*)}{W(\alpha, \gamma^*, \gamma_{max})} = \begin{cases} \left(1 + \frac{\kappa^2 + (\gamma^*)^2 \left(1 - \kappa^2\right) - 2 \gamma^* \sqrt{\kappa^2 \left(1 - \kappa^2\right)}}{1 - (\gamma^*)^2} \cdot \mu^2\right)^{d/2 + o(d)} & \text{if } 0 < \gamma^* \leq \sqrt{\frac{\kappa^2}{1 - \kappa^2}}; \\ 
(1 + o(1))^{d/2 + o(d)} & \text{if } \sqrt{\frac{\kappa^2}{1 - \kappa^2}} \leq \gamma^* < 1. \end{cases}
\end{align}
Notice that the second case corresponds exactly to the condition $\alpha \leq \gamma_{max} \gamma^*$ from Lemma~\ref{lem:high}, as expected. This condition can equivalently be rewritten as a condition on $\kappa$ as $\kappa \leq \sqrt{(\gamma^*)^2/ (1 + (\gamma^*)^2)}$, in which case the probability of succeeding in one iteration is non-negligible.

\subsection{Succeeding in one iteration}

In case $\kappa$ is smaller than the above threshold, the success probability of succeeding in one tour through the graph is non-negligible, and the costs of the number of tours through the graph does not contribute to the leading term in the asymptotic complexity. Since all other costs decrease when $\kappa$ increases, the only sensible choice in the regime $\kappa \leq \sqrt{(\gamma^*)^2/ (1 + (\gamma^*)^2)}$ is to let $\kappa$ approach this upper bound, leading to the following costs.

\begin{theorem}[Complexities for succeeding in one iteration]
Using the $\alpha$-NNG with parameter $\alpha = [\sqrt{(\gamma^*)^2/ (1 + (\gamma^*)^2)} - o(1)] \cdot \mu$, with high probability we can solve the sparse near neighbor problem in one iteration with the following complexities:
\begin{align}
\texttt{Time} &= n^{\frac{1}{1 + (\gamma^*)^2} + o(1)}, \\
\texttt{Space} &= n^{\frac{2 + (\gamma^*)^2}{1 + (\gamma^*)^2} + o(1)}, \\
\texttt{Insert} &= n^{1 + o(1)}, \\
\texttt{Delete} &= n^{\frac{1}{1 + (\gamma^*)^2} + o(1)}.
\end{align}
\end{theorem}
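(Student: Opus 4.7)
The plan is to derive all four complexities as direct specializations of the cost expressions obtained just before the theorem (themselves specializations of Theorem~\ref{thm:main}), evaluated at the boundary value $\kappa_0 := \sqrt{(\gamma^*)^2/(1+(\gamma^*)^2)}$ (up to the stated $o(1)$ slack). The key observation is that this particular $\kappa_0$ is precisely the threshold appearing in Lemma~\ref{lem:high}: it is the largest $\kappa$ for which $\alpha = \kappa \mu$ still satisfies $\alpha \le \gamma_{max}\,\gamma^*$, and therefore the per-tour success probability $p_g$ is at least $d^{-O(1)} = n^{o(1)}$, so a constant number of tours suffices and no ``restart factor'' contributes to the leading exponent.

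First I would verify this threshold. Using the asymptotic $\gamma_{max} = \mu\sqrt{1-\kappa^2}(1+o(1))$ from Lemma~\ref{lem:gammax}, the condition $\alpha \le \gamma_{max}\gamma^*$ reduces to $\kappa \le \sqrt{1-\kappa^2}\,\gamma^*$, which on squaring is equivalent to $\kappa^2 \le (\gamma^*)^2/(1+(\gamma^*)^2)$; saturating this inequality gives exactly $\kappa_0$. Consequently at $\kappa = \kappa_0 - o(1)$ we land in the second case of the $C(\gamma^*)/W(\alpha,\gamma^*,\gamma_{max})$ formula displayed immediately before the theorem, which gives $C(\gamma^*)/W(\alpha,\gamma^*,\gamma_{max}) = n^{o(1)}$.

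Second I would substitute $\kappa = \kappa_0$ into the four cost formulas recorded right before the theorem. Using the algebraic identities $1-\kappa_0^2 = 1/(1+(\gamma^*)^2)$ and $2-\kappa_0^2 = (2+(\gamma^*)^2)/(1+(\gamma^*)^2)$, the time complexity $n^{1-\kappa^2+o(1)}\cdot C(\gamma^*)/W(\alpha,\gamma^*,\gamma_{max})$ collapses to $n^{1/(1+(\gamma^*)^2)+o(1)}$; the space complexity $n^{2-\kappa^2+o(1)}$ becomes $n^{(2+(\gamma^*)^2)/(1+(\gamma^*)^2)+o(1)}$; the insertion cost $n^{1+o(1)}$ is unchanged; and the deletion cost $n^{1-\kappa^2+o(1)}$ coincides with the time exponent.

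The main (modest) obstacle is handling the $o(1)$ slack in $\kappa$ cleanly. Concretely, I would take $\kappa = \kappa_0 - \eps_d$ with $\eps_d \to 0$ slowly enough to remain strictly inside the regime of Lemma~\ref{lem:high}, which keeps $p_g \ge d^{-O(1)}$ and hence the number of tours at $n^{o(1)}$; continuity of $\kappa \mapsto 1-\kappa^2$ ensures the additive loss in every exponent is $o(1)$, so all four stated exponents are recovered. Lemma~\ref{lem:slack} then absorbs any residual $d^{\Theta(1)}$ factors appearing in $C(\gamma^*)$, $W(\alpha,\gamma^*,\gamma_{max})$ and $\gamma_{max}$ itself into the $n^{o(1)}$ term, closing the argument.
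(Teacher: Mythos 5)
Your proposal is correct and follows essentially the same route as the paper: it specializes the cost formulas $n^{1-\kappa^2+o(1)}\,C(\gamma^*)/W(\alpha,\gamma^*,\gamma_{max})$, $n^{2-\kappa^2+o(1)}$, $n^{1+o(1)}$, $n^{1-\kappa^2+o(1)}$ at the threshold $\kappa_0=\sqrt{(\gamma^*)^2/(1+(\gamma^*)^2)}$ identified via $\alpha\le\gamma_{max}\gamma^*$ (Lemma~\ref{lem:high}), where the per-tour success probability is $n^{o(1)}$, and then uses $1-\kappa_0^2=1/(1+(\gamma^*)^2)$ and $2-\kappa_0^2=(2+(\gamma^*)^2)/(1+(\gamma^*)^2)$. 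The handling of the $o(1)$ slack in $\kappa$ by continuity matches the paper's intent of letting $\kappa$ approach the threshold from below.
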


Note that for sparse data, the maximum non-planted near neighbor distance to the query is $\sqrt{2} - o(1)$ with high probability. Let us consider the canonical approximate nearest neighbor problem with approximation factor $c$, which in this case corresponds to distances $c \cdot r = \sqrt{2} + o(1)$ and $r = \sqrt{2}/c + o(1)$. By basic geometric arguments on the sphere, an inner product of $\gamma^*$ on the unit sphere translates to a Euclidean distance of $\sqrt{2 (1 - \gamma^*)}$, leading to the relation $\gamma^* = 1 - 1/c^2$; if the planted nearest neighbor has inner product $1 - 1/c^2$ with the query, then its Euclidean distance to the query is a factor $c$ shorter than for other points in the data set.

With this relation between $\gamma^*$ and $c$, we can find asymptotic expressions in terms of $c$ as well. We state two special cases of these asymptotics below, namely for $c \to \infty$ and for $c \approx 1$. These can again be easily proven by taking the previous expressions, substituting the appropriate value of $\gamma^*$, and considering a series expansion around either $c = \infty$ or $c = 1$.

\begin{corollary}[Small/Large-$c$ asymptotics]
Using the $\alpha$-NNG with density parameter $\alpha = [\sqrt{(\gamma^*)^2/ (1 + (\gamma^*)^2)} - o(1)] \cdot \mu$, with high probability we can solve the sparse near neighbor problem in one iteration with the following asymptotic complexities as $d \to \infty$ and $c \to 1^+, \infty$:
\begin{center}
\begin{tabular}{rll} 
\toprule
& \large $\quad (c \to \infty)$ & \large $\quad (c \to 1^+)$ \\ \midrule
\large $\texttt{Time:}$ & \large $n^{\frac{1}{2} + \frac{1}{2c^2} + O(\frac{1}{c^4})}$ & \large $n^{1 - 4 (c - 1)^2 + O((c - 1)^4)}$ \\
\large $\texttt{Space:}$ & \large $n^{\frac{3}{2} + \frac{1}{2c^2} + O(\frac{1}{c^4})}$ & \large $n^{2 - 4 (c - 1)^2 + O((c - 1)^4)}$ \\
\large $\texttt{Insert:}$ & \large $n$ & \large $n$ \\
\large \quad $\texttt{Delete:}$ & \large $n^{\frac{1}{2} + \frac{1}{2c^2} + O(\frac{1}{c^4})}$ \qquad \qquad  & \large $n^{1 - 4 (c - 1)^2 + O((c - 1)^4)}$ \qquad \quad \\ \bottomrule
\end{tabular}
\end{center}
\end{corollary}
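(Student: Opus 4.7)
\medskip

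\noindent\textbf{Proof proposal.} The plan is to treat the corollary as a direct corollary of the preceding theorem: the four complexities listed there are closed-form functions of $\gamma^*$ alone (and one constant $n$), and the relation $\gamma^* = 1 - 1/c^2$ already derived in the paragraph just before the corollary turns these into closed-form functions of $c$. So all that remains is a Taylor expansion in the two regimes $c\to\infty$ and $c\to 1^+$. No new probabilistic or geometric work is required.

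First I would write down the exponents from the theorem explicitly:
\begin{align*}
\rho_{\texttt{Time}} = \rho_{\texttt{Delete}} = \frac{1}{1+(\gamma^*)^2}, \qquad \rho_{\texttt{Space}} = \frac{2+(\gamma^*)^2}{1+(\gamma^*)^2} = 1 + \frac{1}{1+(\gamma^*)^2},
\end{align*}
so that once the expansion of $1/(1+(\gamma^*)^2)$ is known in terms of $c$, all three non-trivial exponents are determined (the $\texttt{Insert}$ cost is $n^{1+o(1)}$ already and needs no expansion). Substituting $\gamma^* = 1 - 1/c^2$ gives $(\gamma^*)^2 = 1 - 2/c^2 + 1/c^4$, and hence $1 + (\gamma^*)^2 = 2 - 2/c^2 + 1/c^4$.

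For the regime $c\to\infty$, I would factor out $2$ and expand using $(1-x)^{-1} = 1 + x + O(x^2)$ with $x = 1/c^2 - 1/(2c^4)$ to get
\begin{align*}
\frac{1}{1+(\gamma^*)^2} = \frac{1}{2}\cdot \frac{1}{1 - 1/c^2 + 1/(2c^4)} = \frac{1}{2} + \frac{1}{2c^2} + O\!\left(\tfrac{1}{c^4}\right),
\end{align*}
which immediately yields the stated $\texttt{Time}$ and $\texttt{Delete}$ exponents $\tfrac12 + \tfrac{1}{2c^2} + O(1/c^4)$ and, after adding $1$, the $\texttt{Space}$ exponent $\tfrac32 + \tfrac{1}{2c^2} + O(1/c^4)$. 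For the regime $c\to 1^+$, I would set $\epsilon = c - 1$, use $1/c^2 = 1 - 2\epsilon + 3\epsilon^2 + O(\epsilon^3)$ to obtain $\gamma^* = 2\epsilon - 3\epsilon^2 + O(\epsilon^3)$, so $(\gamma^*)^2 = 4\epsilon^2 + O(\epsilon^3)$, and then expand
\begin{align*}
\frac{1}{1+(\gamma^*)^2} = 1 - 4\epsilon^2 + O(\epsilon^3) = 1 - 4(c-1)^2 + O\!\left((c-1)^3\right),
\end{align*}
which gives the stated $\texttt{Time}$, $\texttt{Delete}$, and (after adding $1$) $\texttt{Space}$ exponents. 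The $\texttt{Insert}$ cost is $n^{1+o(1)}$ in both regimes by the theorem, independent of $\gamma^*$ and hence of $c$.

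There is no real obstacle here: the corollary is a mechanical consequence of the main theorem combined with elementary geometry on the sphere. The only points requiring care are (i) verifying that the chosen $\alpha = [\sqrt{(\gamma^*)^2/(1+(\gamma^*)^2)} - o(1)]\cdot \mu$ still lies in the admissible range $\kappa \in (0,1)$ for the relevant $c$ — which follows since $(\gamma^*)^2/(1+(\gamma^*)^2) < 1$ for every $\gamma^* \in (0,1)$ — and (ii) tracking enough terms in the Taylor expansions to certify the claimed error bounds. If one wants the error term $O((c-1)^4)$ as written rather than $O((c-1)^3)$, I would carry one extra term through the expansion of $(\gamma^*)^2$ and check whether the cubic contribution is absorbed into the implicit constant or genuinely appears; either way this is a routine calculation and not a conceptual difficulty.
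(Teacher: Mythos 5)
Your proposal is correct and matches the paper's own proof exactly: the paper likewise obtains this corollary by substituting $\gamma^* = 1 - 1/c^2$ into the exponents of the preceding one-iteration theorem and performing series expansions around $c = \infty$ and $c = 1$. Your closing remark about the error term is well taken --- carrying one more term shows a genuine cubic contribution $+12(c-1)^3$ in the exponent, so the expansion near $c=1$ is really accurate to $O((c-1)^3)$ rather than the stated $O((c-1)^4)$, a minor imprecision in the paper's statement rather than in your argument.
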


Note that the $c \to \infty$ asymptotics above are rather poor; even when the planted nearest neighbor lies very close to the query, the algorithm is bound by a minimum query time of at least $\sqrt{n}$, and a space complexity of at least $n \sqrt{n}$. However, the asymptotics for small $c \approx 1$ match those of Andoni--Laarhoven--Razenshteyn--Waingarten~\cite{andoni17}, with a scaling of $n^{1 - 4(c-1)^2}$ for the query complexity. This already suggests that the larger the approximation factor, the worse the $\alpha$-NNG approach performs in comparison with partition-based approaches.

\subsection{Succeeding in many iterations}

In case $\kappa$ is large, and the graph is rather sparse, many tours through the graph are needed to get a good success probability overall. Taking into account these probabilities, knowing that we are in the ``first case'' due to $\kappa$ being below the threshold, we ultimately obtain the following simplified asymptotics.

\begin{theorem}[Complexities for succeeding in many iterations]
Using the $\alpha$-NNG with $\alpha = \kappa \cdot \mu$ with $\kappa \in (\sqrt{(\gamma^*)^2/ (1 + (\gamma^*)^2)}, 1)$, with high probability we can solve the sparse near neighbor problem in several iterations with the following complexities:
\begin{align}
\texttt{Time} &= n^{\left[\frac{1 - 2 \gamma^* \sqrt{\kappa^2 \left(1 - \kappa^2\right)}}{1 - (\gamma^*)^2} + o(1)\right]}, \\
\texttt{Space} &= n^{2 - \kappa^2 + o(1)}, \\
\texttt{Insert} &= n^{1 + o(1)}, \\
\texttt{Delete} &= n^{1 - \kappa^2 + o(1)}.
\end{align}
\end{theorem}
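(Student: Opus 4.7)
The plan is to substitute the parameter choices $\alpha = \kappa \mu$ with $n = 1/C(\mu)$ and $\mu = o(1)$ directly into the four general cost expressions from the preceding Near Neighbor Costs theorem, then use Lemmas~\ref{lem:rewn} and \ref{lem:W}, which already package the necessary cap/wedge asymptotics, to rewrite every quantity as a power of $n$. Since sparse data means $n = 2^{o(d)}$, one has $d = n^{o(1)}$, so all leading $d^{O(1)}$, $\log n$, and $\log(n C(\alpha))$ factors are harmless and can be absorbed into an $n^{o(1)}$ exponent without further comment.

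For Space, Insert, and Delete the argument is immediate. Insert $=O(dn) = n^{1+o(1)}$. By Lemma~\ref{lem:rewn} we have $C(\alpha) = n^{-\kappa^2 + o(1)}$, hence
\begin{align}
\texttt{Space} &= O(n^2 C(\alpha) \log n) = n^{2 - \kappa^2 + o(1)}, \\
\texttt{Delete} &= O(n C(\alpha) \log(n C(\alpha))) = n^{1 - \kappa^2 + o(1)}.
\end{align}
The real content is therefore the query complexity. Before invoking the wedge asymptotics I would first verify that the hypothesis $\kappa > \sqrt{(\gamma^*)^2 / (1 + (\gamma^*)^2)}$ is equivalent, via $\gamma_{max} = \mu \sqrt{1-\kappa^2}\,(1+o(1))$, to the inequality $\alpha > \gamma_{max}\,\gamma^*$. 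A two-line manipulation shows this in turn is precisely the condition $\gamma_{max} < \min\{\alpha/\gamma^*, \gamma^*/\alpha\}$ needed to place $(\alpha, \gamma^*, \gamma_{max})$ in the first (non-degenerate) branch of Lemma~\ref{lem:wedge}, so the explicit formula from Lemma~\ref{lem:W} applies, and in that branch the large-$d$ success probability is subexponentially small, so genuinely many iterations are needed.

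Next I would combine the general query expression with Lemma~\ref{lem:W}. Taking logarithms base $n = 1/C(\mu) = (1-\mu^2)^{-d/2 + o(d)}$ and using $\ln(1+x) = x + O(x^2)$ with $x = \Theta(\mu^2) = o(1)$, the first branch of Lemma~\ref{lem:W} yields
\begin{align}
\frac{C(\gamma^*)}{W(\alpha, \gamma^*, \gamma_{max})} = n^{X + o(1)}, \qquad X = \frac{\kappa^2 + (\gamma^*)^2(1 - \kappa^2) - 2 \gamma^* \sqrt{\kappa^2(1 - \kappa^2)}}{1 - (\gamma^*)^2}.
\end{align}
Multiplying by the $n C(\alpha) = n^{1-\kappa^2+o(1)}$ prefactor gives $\texttt{Time} = n^{1 - \kappa^2 + X + o(1)}$. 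The final step is the algebraic simplification
\begin{align}
1 - \kappa^2 + X = \frac{(1 - \kappa^2)(1 - (\gamma^*)^2) + \kappa^2 + (\gamma^*)^2(1 - \kappa^2) - 2\gamma^*\sqrt{\kappa^2(1-\kappa^2)}}{1 - (\gamma^*)^2} = \frac{1 - 2 \gamma^* \sqrt{\kappa^2(1-\kappa^2)}}{1 - (\gamma^*)^2},
\end{align}
where the $(\gamma^*)^2$ and $\kappa^2$ cross-terms in the numerator cancel against $(1-\kappa^2)(1-(\gamma^*)^2)$, matching the stated exponent exactly.

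The only delicate step is the bookkeeping in the Time calculation: one must confirm that polynomial-in-$d$ slack (from Lemma~\ref{lem:slack}, from the $d^{\Theta(1)}$ prefactors in the cap/wedge volumes, and from the $(1+o(1))$ in $\gamma_{max}$) does not interact with the $(\cdot)^{d/2}$ exponent badly enough to corrupt the leading $n^{\rhoq}$ term. Because $\mu^2 = \Theta(\log n / d)$ and all branch-condition arguments stay bounded away from their boundary by a constant, each multiplicative $d^{O(1)}$ is a $\exp(O(\log d))$ factor while the leading exponent is $\exp(\Theta(d \mu^2)) = n^{\Theta(1)}$, so the slack is subsumed in the $o(1)$ term. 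All other steps are routine substitution and simplification.
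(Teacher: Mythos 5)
Your proposal is correct and follows essentially the same route as the paper's Appendix~\ref{app:sparse}: substitute the general cost theorem, use Lemma~\ref{lem:rewn} to write $C(\alpha)=n^{-\kappa^2+o(1)}$, identify the hypothesis on $\kappa$ with $\alpha > \gamma_{max}\gamma^*$ (the first branch of Lemma~\ref{lem:wedge}/\ref{lem:W}), and simplify $1-\kappa^2+X$ to the stated exponent; your explicit cancellation and slack bookkeeping just spell out steps the paper leaves implicit. The only nitpick is that $d^{O(1)}=n^{o(1)}$ does not follow from sparsity alone but from the paper's standing assumption $n \geq 2^{\Theta(d/\log d)}$ (via the Johnson--Lindenstrauss reduction), which the paper also invokes without comment.
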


Similar to~\cite{andoni17}, let us write the space complexity as $n^{1 + \rhos + o(1)}$, and the query complexity as $n^{\rhoq + o(1)}$. Then $\rhos = 1 - \kappa^2$, and we can substitute this into the given asymptotics for $\rhoq$ to obtain a condition on achievable trade-offs between $\rhoq$ and $\rhos$:
\begin{align}
(1 - (\gamma^*)^2) \rhoq + 2 \gamma^* \sqrt{\rhos (1 - \rhos)} \geq 1.
\end{align}
Note that $\alpha = \kappa \mu = \sqrt{1 - \rhos} \mu$ immediately highlights how to construct the appropriate near neighbor graph to achieve such exponents.

Finally, let us also express these results in terms of the approximation factor $c$, using the relation $\gamma^* = 1 - 1/c^2$. Rewriting the above inequality, by substituting this value and multiplying both sides by $c^4$, we obtain the condition stated in the introduction:
\begin{align}
(2c^2 - 1) \rhoq + 2 c^2 (c^2 - 1) \sqrt{\rhos (1 - \rhos)} \geq c^4.
\end{align}
In this case we obtain the same asymptotics for $c \to 1$ and $\rhos \to 0$ as before, but for arbitrary $c$ this indicates which trade-offs are achievable.

%%WWWWWWWWWWWWWWWWWWWWWWWWWWWWWWWWWWWWWWWWWWWWWWWWWWWWWW
%%WWWWWWWWWWWWWWWWWWWWWWWWWWWWWWWWWWWWWWWWWWWWWWWWWWWWWW

\section{Lattice sieving}
\label{app:sieving}

A central hard problem in the study of lattices is the shortest vector problem (SVP): given an integer lattice $\cL = \{\sum_{i=1}^d c_i \vc{b}_i: c_i \in \Z\} \subset \mathbb{R}^d$, find a non-zero lattice vector $\vc{s}$ of minimum Euclidean norm. The security of lattice-based cryptography relies on the hardness of (approximate) SVP, and so algorithms for SVP have long been the subject of study. Currently, the asymptotically fastest methods for SVP rely on a technique called \textit{lattice sieving}, where the algorithm (1) first samples many random, long lattice vectors, and then (2) looks for pairwise combinations of these lattice vectors to form shorter lattice vectors. (Note that if $\vc{v}, \vc{w} \in \cL$ then $\vc{v} \pm \vc{w} \in \cL$.) 

A long line of work, starting from~\cite{laarhoven15crypto}, has studied how state-of-the-art near-neighbor techniques can be used to assist in the search for good pairs of lattice vectors, that can be combined to form shorter vectors. In particular, the hash-based techniques of~\cite{charikar02, andoni14, andoni15cp, becker16lsf} were applied to (heuristic) sieving algorithms in~\cite{laarhoven15crypto, laarhoven15latincrypt, becker16cp, becker16lsf}, to obtain the asymptotic time and space complexities depicted in red in Figure~\ref{fig:sieving}. In this application, the input list is essentially a list of $n = (4/3)^{d/2 + o(d)}$ i.i.d.\ random unit vectors; the target distance on the unit sphere is $1$; and the overall (heuristic) complexity of the algorithm is determined by performing $d^{\Theta(1)} \cdot n$ queries and updates to the data structure. The current best technique dates back to Becker--Ducas--Gama--Laarhoven~\cite{becker16lsf}, which also formed the basis for the generalized time--space trade-offs of Andoni--Laarhoven--Razenshteyn--Waingarten~\cite{andoni17}.

To study what the asymptotics for sieving for SVP may \textit{potentially} be when using graph-based approaches, let us start again with the costs from Theorem~\ref{thm:main}, where for convenience we omit order terms. Here we say ``potentially'' because the update costs (in particular: insertion costs) are currently too high, and for this cost analysis we will assume these costs are the same as the deletion complexity.\footnote{Note that even if this is not the case, and the update costs cannot be reduced, similar techniques/results can still be applied in the context of lattice sieving for solving the closest vector problem with preprocessing (CVPP): given a lattice, preprocess this lattice so that when later given a non-lattice vector, one can quickly determine its closest neighbor in the lattice. Since the trade-offs for sieving for CVPP with near neighbor techniques other than spherical LSF~\cite{laarhoven16cvp2} have not been worked out, we will focus on (potential) SVP complexities, as it allows us to compare several near neighbor techniques in a similar context.}
\begin{align}
\texttt{Time} &= n C(\alpha) C(\gamma^*) / W(\alpha, \gamma^*, \gamma_{max}), \\
\texttt{Space} &= n^2 C(\alpha), \\
\texttt{Update} &= n C(\alpha).
\end{align}
We further have $n = (4/3)^{d/2} \approx 2^{0.2075d}$ and therefore $\mu = \frac{1}{2}$. The target inner product (for Euclidean distance $1$) equals $\gamma^* = \frac{1}{2}$, while using Lemma~\ref{lem:gammax} we obtain:
\begin{align}
\gamma_{max} = \sqrt{\frac{1 - 4 \alpha^2}{5 - 8 \alpha}} \, .
\end{align}
Note that the only remaining variable is $\alpha \in (0, \frac{1}{2})$, and choosing $\alpha$ smaller or larger affects the query time and space complexities directly and indirectly (via $\gamma_{max}$). Computing all complexities for values $\alpha$ in this range (and multiplying the query time complexity by $n$ to obtain the overall time complexity) leads to the blue trade-off curve in Figure~\ref{fig:sieving}.

\subparagraph*{Concrete values.} As can be seen in Figure~\ref{fig:sieving}, the minimum time complexity is obtained at $\alpha \approx 0.4101$. Recall that for $\alpha \to 0$ the graph is very dense, with edges between almost all pairs of vertices; the space complexity then scales as $n^2$, and the time per query becomes superlinear in $n$ due to the small asymptotic success probability of tours, and the high cost of each tour. For $\alpha \to 0$, the space complexity thus becomes $(4/3)^d \approx 2^{0.4150d}$, the time complexity for a single query scales as $(16/11)^{d/2}$, and the overall time complexity for $n$ queries scales as $(64/11)^{d/2} \approx 2^{0.4778d}$. 

The only sensible choices for $\alpha$, leading to true time--space trade-offs, are obtained in the range $\alpha \in (0.4101, 0.50)$. For $\alpha \to \frac{1}{2}$, the space complexity becomes near-linear in the input size, but again the query time complexity becomes superlinear due to the large number of tours required to obtain a reasonable success probability; for $\alpha \to \frac{1}{2}$, we obtain a space complexity of $(4/3)^{d/2} \approx 2^{0.2075d}$, and an overall time complexity of $2^{d/2} = 2^{0.5000d}$. Numerically optimizing for the query complexity led to the value $\alpha \approx 0.4101$, which corresponds to a space complexity of $2^{0.2822d}$ and a time complexity of $2^{0.3274d}$. Note that this range of values for $\alpha$ corresponds to rather sparse graphs; as numerous experiments with near neighbor graphs have shown, it is best to use rather sparse graphs, to obtain the best results.

\subparagraph*{Discussion.} Although these results are only ``potential'' results for SVP (assuming updates can be done as efficiently as in hash-based data structures), we can make some observations. First, similar to the sparse regime, in the context of lattice sieving graph-based near neighbor searching is not better than the best hash-based techniques~\cite{becker16lsf,andoni17}. Perhaps the most practical hash-based library to date however, \FALCONN{}, uses cross-polytope LSH~\cite{andoni15cp}, which is comparable to graph-based trade-offs; for the most part, the cross-polytope trade-off lies slightly above the graph-based trade-off. These asymptotic results seem consistent with benchmarking results of~\cite{annbench1, annbench2, aumueller17}, which suggested that these two techniques are the most competitive for realistic data sets. The well-known hyperplane hashing of Charikar~\cite{charikar02}, studied in the context of sieving in~\cite{laarhoven15crypto}, further does not seem to come close to the asymptotic performance of sieving with graph-based near neighbor techniques; although it is efficient and easy to implement, in high dimensions also graph-based approaches will likely scale better.

\begin{figure}[t]
\begin{center}
\includegraphics[width=12cm]{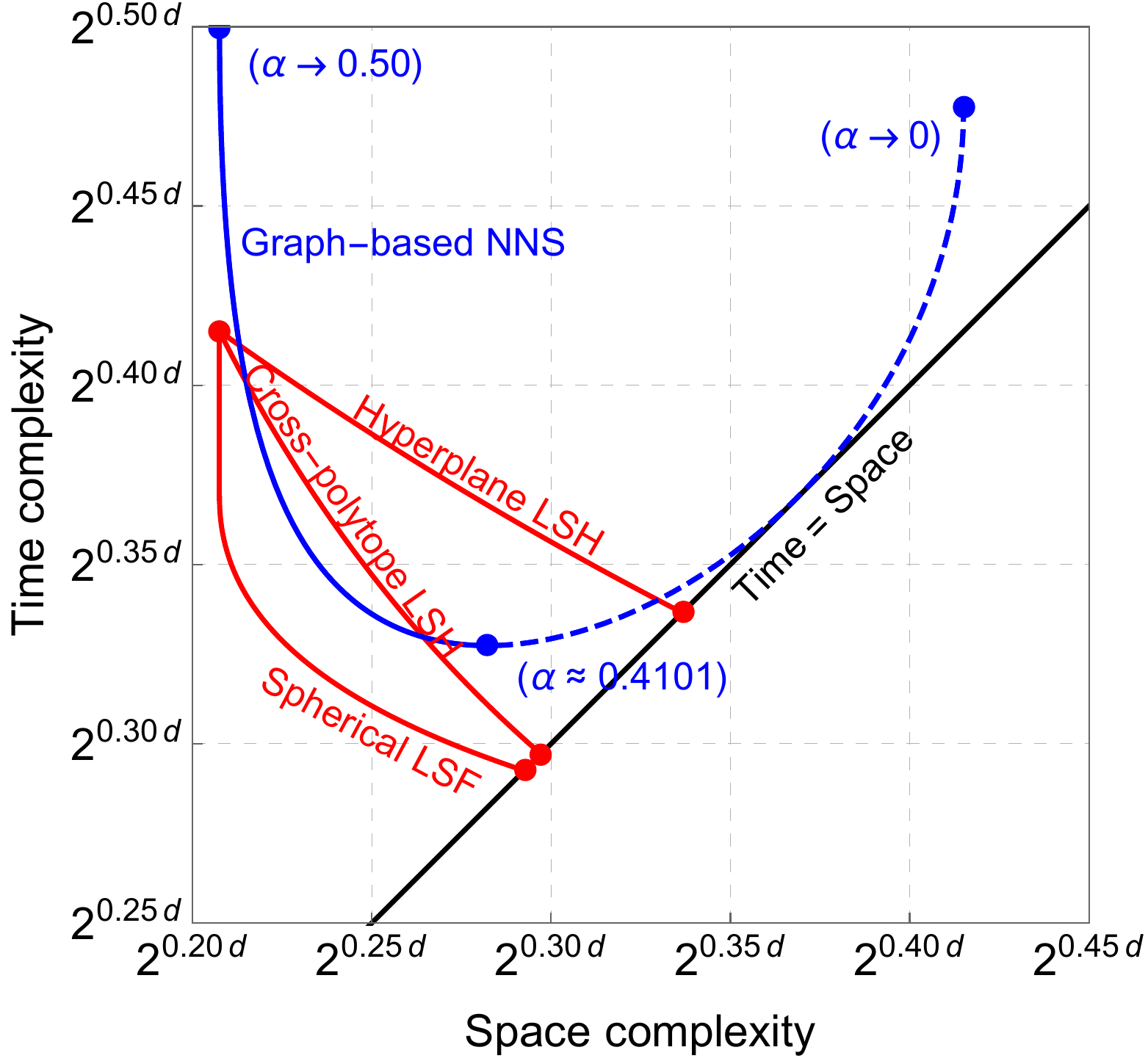} 
\end{center}
\caption{Asymptotic exponents for heuristic lattice sieving methods for solving SVP in dimension $d$, using near neighbor techniques. The red curves correspond to hyperplane LSH~\cite{charikar02, laarhoven15crypto}, cross-polytope/spherical LSH~\cite{andoni14, andoni15cp, laarhoven15latincrypt, becker16cp}, and spherical LSF~\cite{becker16lsf}, and the blue curve indicates \textit{potential} complexities for graph-based near neighbor searching studied in this work.\label{fig:sieving}}
\end{figure}

\end{document}